%
%

\documentclass{llncs}

\usepackage{txfonts}

\usepackage{stmaryrd}
\usepackage{amssymb}
\usepackage{mathtools} 
\usepackage{graphicx}
\usepackage{shuffle}
\usepackage[vlined,english]{algorithm2e}

\usepackage{paralist}
\usepackage{setspace}



\usepackage{todonotes}

\usepackage[pdftex,%
 	colorlinks,%
 	hidelinks,
 	pdfauthor={Javier Esparza, Pierre Ganty, Rupak Majumdar},%
 	hypertexnames=false,%
	pdftitle={Parameterized Verification of Asynchronous Shared-Memory Systems}]{hyperref}

\makeatletter
\def\verbatim{\small\@verbatim \frenchspacing\@vobeyspaces \@xverbatim}
\makeatother

\makeatletter
\begingroup \catcode `|=0 \catcode `[= 1
\catcode`]=2 \catcode `\{=12 \catcode `\}=12
\catcode`\\=12 |gdef|@xcomment#1\end{comment}[|end[comment]]
|endgroup
\def\@comment{\let\do\@makeother \dospecials\catcode`\^^M=10\def\par{}}
\def\begincomment{\@comment\@xcomment}
\makeatother
\newenvironment{comment}{\begincomment}{}


\newcommand{\D}{{\cal D}}
\newcommand{\US}{{\cal \D S}}
\newcommand{\T}{{\cal T}}
\newcommand{\Q}{{\cal Q}}
\newcommand{\C}{{\cal C}}

\renewcommand{\S}{{\cal S}}
\newcommand{\G}{{\cal G}}
\newcommand{\N}{{\cal N}}

\newcommand{\ES}{\ensuremath{S^E}}
\newcommand{\ESU}{\ensuremath{S^E_\D}}
\newcommand{\ESUtau}{\ensuremath{S^{\tau}_\D}}
\newcommand{\EG}{{\cal G}_E}
\newcommand{\EGU}{{\cal G}_E^\D}
\newcommand{\FW}{{\it F}}

\newcommand{\shufflep}[1]{{\shuffle}_{#1} }

\newcommand{\By}[1]{\xRightarrow{(#1)}}
\newcommand{\Bystar}[1]{\xRightarrow{(#1)}\!\!{}^*}


\newcommand{\rb}{\ensuremath{r_{\star}}}
\newcommand{\wb}{\ensuremath{w_{\star}}}

\def\prod{\mathcal{P}}

\def\set#1{{\left\{ #1 \right\}}}
\def\tuple#1{{\left\langle #1 \right\rangle}}

\def\nats{{\mathbb{N}}}

\newcommand{\fw}{{\it f}}
\newcommand{\uw}{{\it u}}

\newcommand{\proj}{\mathit{Proj}}
\newcommand{\pre}{\mathit{Pref}}

\def\by#1{{\buildrel{\smash{#1}}\over\longrightarrow}}

\def\lts#1{{\llbracket #1 \rrbracket}}

\sloppy

\title{\texorpdfstring{Parameterized Verification of\\ Asynchronous Shared-Memory Systems}{Parameterized Verification of Asynchronous Shared-Memory Systems}}
\author{Javier Esparza\inst{1} \and Pierre Ganty\inst{2}\thanks{Supported by the Spanish projects with references \textsc{tin}2010-20639 and \textsc{tin}2012-39391-C04.} \and Rupak Majumdar\inst{3}}
\institute{$^1$TU Munich\quad $^2$IMDEA Software Institute \quad $^3$MPI-SWS}


\begin{document}

\maketitle

\begin{abstract}
We characterize the complexity of the safety verification problem for
parameterized systems consisting of a leader process and arbitrarily many
anonymous and identical contributors.  Processes communicate through a shared,
bounded-value register.  While each operation on the register is atomic, there
is no synchronization primitive to execute a sequence of operations atomically.

We analyze the complexity of the safety verification problem when processes are
modeled by finite-state machines, pushdown machines, and Turing machines.  The
problem is coNP-complete when all processes are finite-state machines, and is
PSPACE-complete when they are pushdown machines.  The complexity remains
coNP-complete when each Turing machine is allowed boundedly many interactions
with the register.  Our proofs use combinatorial characterizations of
computations in the model, and in case of pushdown-systems, some 
language-theoretic constructions of independent interest.
%
\end{abstract}

%
%

\section{Introduction}

We conduct a systematic study of the complexity of safety verification for
\emph{parameterized asynchronous shared-memory systems}. 
These systems consist of a \emph{leader} process and
arbitrarily many identical \emph{contributors}, processes with no identity, running at
arbitrarily relative speeds and subject to faults (a process can crash). 
The shared-memory consists of a read/write register that all processes can access
to perform either a read operation or a write operation. The register is
bounded: the set of values that can be stored is finite. We do insist that
read/write operations execute atomically but sequences of operations do not: no
process can conduct an atomic sequence of reads and writes while
excluding all other processes. 
The parameterized verification problem for these systems asks to check if a 
safety property holds no matter how many contributors are present.
Our model subsumes the case in which all processes are identical by having the leader
process behave like yet another contributor. 
The presence of a distinguished leader adds (strict) generality to the problem.

We analyze the complexity of the
safety verification problem when leader and contributors are modeled by finite
state machines, pushdown machines, and even Turing machines. 
Using combinatorial properties of the model that allow simulating arbitrarily many
contributors using finitely many ones, we show that if
leader and contributors are finite-state machines the problem 
is coNP-complete.  The case in
which leader and contributors are pushdown machines
was first considered by Hague \cite{hague11}, who gave a coNP lower
bound and a 2EXPTIME upper bound. We close the gap and prove 
that the problem is PSPACE-complete.
Our upper bound requires several novel language-theoretic constructions on
bounded-index approximations of context-free languages. 
Finally, we 
address the bounded safety problem, i.e., deciding if no error can be reached by 
computations in which no contributor nor the leader 
execute more than a given number $k$ of steps
(this does not bound the length of the computation, since the number of contributors is unbounded). We show that (if $k$ is given in unary)
the problem is coNP-complete not only for pushdown machines, but also for arbitrary 
Turing machines.
Thus, the safety verification problem when the leader and contributors are poly-time Turing machines is also coNP-complete.

These results show that non-atomicity substantially reduces the complexity of
verification. In the atomic case, contributors can ensure that they are the only ones
that receive a message: the first contributor that reads the message from the store
can also erase it within the same atomic action. This allows the leader to distribute 
identities to contributors. As a consequence, 
the safety problem is at least PSPACE-hard for state machines, and undecidable for 
pushdown machines (in the atomic case, the safety problem of two pushdown machines 
is already undecidable). A similar argument shows that the bounded safety problem is
PSPACE-hard. 
In contrast, we get several coNP upper bounds, which opens the way
to the application of SAT-solving or SMT-techniques.

Besides intellectual curiosity, our work on this model is motivated by practical  
distributed protocols implemented on wireless sensor networks.
In these systems, a central co-ordinator (the base station) communicates with an arbitrary
number of mass-produced tiny agents (or motes) that run concurrently and
asynchronously.
The motes have limited computational power, and for some systems such as
vehicular networks anonymity is a requirement~\cite{VehicularLCN2007}.
Further, they are susceptible to crash faults.
Implementing atomic communication primitives in this setting is expensive 
and can be problematic: for instance, a process might crash while holding a lock.
Thus, protocols in these systems work asynchronously and without synchronization primitives.
Our algorithms provide the foundations for safety verification of these systems.


\noindent %
{\it Related Works.} %
Parameterized verification problems have been extensively studied both
theoretically and practically. It is a computationally hard problem: the
reachability problem is undecidable even if each process has a finite state
space \cite{AxKozen86}. For this reason, special cases have been extensively studied.
They vary according to the main characteristics of the systems to verify
like the communication topology of the processes (array, tree, unordered, etc); their communication primitives (shared memory, unreliable broadcasts, (lossy) queues, etc); or whether processes can distinguish from each other (using ids, a distinguished process, etc). 
Prominent examples include broadcast protocols \cite{EFM99,FL02,DimitrovaP08,DelzannoSZ10}, where
finite-state processes communicate via broadcast messages, asynchronous programs \cite{gm12,cv-tcs09},
where finite-state processes communicate via unordered channels,
finite-state processes communicating via ordered channels \cite{ACJT96},
micro architectures \cite{McMillan98},
cache coherence protocols \cite{DBLP:conf/charme/EmersonK03,Delzanno03},
communication protocols \cite{EmersonN98},
multithreaded shared-memory programs \cite{ClarkeTV08,DRV02,KKW10,TorreMP10}.

Besides the model of Hague~\cite{hague11}, the closest model to ours that has been
previously studied~\cite{GR07} is that of distributed computing with
identity-free, asynchronous processors and non-atomic registers.  The emphasis
there was the development of distributed algorithm primitives such as
time-stamping, snapshots, and consensus,
using either unbounded registers or an unbounded number of bounded registers.
\pagebreak[2]

It was left open if these
primitives can be implemented using a bounded number of bounded registers.
Our decidability results indicate that this is not possible: the safety verification problem would be undecidable if
such primitives could be implemented.

%
%

\section{Formal Model: Non-Atomic Networks}
\label{sec:prelim}

We describe our formal model, called
{\em non-atomic networks}. We take a language-theoretic view,
identifying a system with the language of its executions.

\smallskip
\noindent
\emph{Preliminaries.}
A \emph{labeled transition system} (LTS) is a quadruple
$\T = (\Sigma, Q, \delta, q_0)$, where 
 \(\Sigma\) is a finite set of \emph{action labels}, $Q$ is a (non necessarily finite) set of \emph{states},
 \(q_0 \in Q\) is the \emph{initial state},
 and $\delta \subseteq Q \times \Sigma \cup \{\varepsilon\} \times Q$ is the \emph{transition relation},
 where $\varepsilon$ is the empty string.
 We write \(q \by{a} q'\) for $(q,a,q') \in \delta$.
 For $\sigma\in\Sigma^*$, we write $q \by{\sigma} q'$ if there exist $q_1, \ldots, q_{n} \in Q$ and
 $a_0, \ldots, a_n \in \Sigma \cup \{\varepsilon\}$, $q \by{a_0} q_1 \by{a_1} q_2 \cdots q_{n} \by{a_n} q'$ such that
 $a_0 \cdots a_{n} = \sigma$.
 The sequence \(q \cdots q'\) is called a \emph{path} and \(\sigma\) its \emph{label}.
 A {\em trace} of $\T$ is a sequence $\sigma \in \Sigma^*$ such that $q_0
 \by{{\sigma}} q$ for some $q \in Q$.
 Define $L(\T)$, the {\em language} of $\T$, as the set of traces of $\T$.
 Note that \(L(\T)\) is \emph{prefix closed}: \(L(\T)=\pre(L(\T))\)
 where \(\pre(L)=\set{ s \mid \exists u\colon s\, u\in L}\)
%
%

To model concurrent executions of LTSs, we introduce two operations on languages:
the shuffle and the asynchronous product.
The {\em shuffle} of two words \(x,y\in\Sigma^*\) is the language
\(
x\shuffle y  = \{x_1y_1\dots x_ny_n\in\Sigma^* \mid \mbox{each }x_i,y_i \in\Sigma^* 
\mbox{and }x=x_1\cdots x_n\, \land y=y_1\cdots y_n\}
\).
\noindent The shuffle of two languages $L_1, L_2$
is the language \(L_1\shuffle L_2 = \textstyle{\bigcup_{x\in L_1, y\in L_2}} x \shuffle y\).
Shuffle is associative, and so we can write $L_1 \shuffle \cdots \shuffle L_n$
or $\shuffle_{i=1}^n L_i$.

The \emph{asynchronous product} of two languages \(L_1 \subseteq \Sigma_1^*\)
and \(L_2 \subseteq \Sigma_2^*\), denoted \(L_1 \parallel L_2\), is the language \(L\) over the alphabet
\(\Sigma=\Sigma_1\cup\Sigma_2\) such that \(w\in L\) if{}f
the projections of $w$ to $\Sigma_1$ and $\Sigma_2$ belong to $L_1$ and $L_2$, respectively.%
\footnote{Observe that the \(L_1\parallel L_2\) depends on \(L_1\), \(L_2\) \textbf{and} also their underlying alphabet \(\Sigma_1\) and \(\Sigma_2\).}
If a  language consists of a single word, e.g. \(L_1=\set{w_1}\), we abuse notation
and write \(w_1 \parallel L_2\).
Asynchronous product is also associative, and so we write 
$L_1 \parallel \cdots \parallel L_n$ or $\parallel_{i=1}^n L_i$.

Let $\T_1, \ldots, \T_n$ be LTSs, where $\T_i = (\Sigma_i, Q_i, \delta_i,
q_{0i})$.  The {\it interleaving} $\shuffle_{i=1}^n \T_i$ is the LTS with
actions \(\bigcup_{i=1}^n \Sigma_i\), set of states $Q_1 \times \cdots \times
Q_n$, initial state $(q_{01}, \ldots, q_{0n})$, and a transition $(q_1, \ldots,
q_n) \by{a} (q_1',\ldots, q_n')$ if{}f $(q_i, a, q_i') \in \delta_i$ for some
$1 \leq i \leq n$ and $q_j = q_j'$ for every $j \neq i$.  Interleaving models
parallel composition of LTSs that do not communicate at all.  
The language $L(\T_1 \shuffle \cdots \shuffle \T_n)$ of the interleaving is 
$\shuffle_{i=1}^n L(\T_i)$.

The {\em asynchronous parallel composition}
$\parallel_{i=1}^n \T_i$ of $\T_1, \ldots, \T_n$ is the LTS having 
\(\bigcup_{i=1}^{n} \Sigma_i\) as set of actions, 
$Q_1 \times \cdots \times Q_n$ as set of states, $(q_{01}, \ldots, q_{0n})$ as
initial state, and a transition $(q_1, \ldots, q_n) \by{a} (q_1',\ldots, q_n')$
if and only if
\begin{compactenum}
\item $a \neq \varepsilon$ and for all
$1 \leq i \leq n$ either $a \notin \Sigma_i$ and $q_i = q_i'$ or $a \in \Sigma_i$ and
$(q_i, a, q_i') \in \delta_i$, or;
\item $a = \varepsilon$, and there is $1 \leq i \leq n$ such that
$(q_i, \varepsilon, q_i') \in \delta_i$ and $q_j=q_j'$ for every $j \neq i$. 
\end{compactenum}
Asynchronous parallel composition models the parallel composition of LTSs in which 
an action $a$ must be simultaneously executed by every LTSs having $a$ in its alphabet. 
$L(\T_1 \parallel \cdots \parallel \T_n)$, the language of the asynchronous parallel composition,
is $\parallel_{i=1}^n L(\T_i)$.

\smallskip %
\noindent %
{\it Non-atomic networks.}
%
We fix a finite non-empty set $\G$ of \emph{global values}.
A {\em read-write alphabet} is any set of the form $A \times \G$, where
$A$ is a set of {\em read} and {\em write actions}, or just
{\em reads} and {\em writes}. We denote a letter $(a,g) \in A \times \G$ by $a(g)$, 
and write $\G(a_1, \ldots, a_n)$ instead of $\{a_1, \ldots, a_n\} \times \G$. 

In what follows, we consider LTSs over read-write alphabets.
We fix two LTSs $\D$ and $\C$, called the \emph{leader} and 
the \emph{contributor}, with alphabets $\G(r_d,w_d)$ and $\G(r_c, w_c)$, 
respectively, where $r_d, r_c$ are called reads and $w_c, w_d$ are called writes.  
We write \(\wb\) (respectively, \(\rb\)) to stand for either \(w_c\) or \(w_d\) 
(respectively, \(r_c\) or \(r_d\)).
We also assume that for each value \(g\in\G\)
there is a transition in the leader or contributor which reads or
writes \(g\) (if not, the value is never used and is removed from~\(\G\)).

Additionally, we fix an LTS $\S$ called a {\em  store}, whose states are the global values
of \(\G\) and whose transitions, labeled with the read-write alphabet,
represent possible changes to the global values on reads and writes.
No read is enabled initially.
Formally, the \emph{store} is an LTS
$\S = (\Sigma, \G\cup\set{g_0}, \delta_{\S}, g_0)$,
where \(\Sigma=\G(r_d, w_d, r_c, w_c)\), $g_0$ is a designated initial value not in \(\G\), and 
$\delta_{\S}$ is the set of transitions
$g \by{\rb(g)} g$ and $g' \by{\wb(g)} g$ for all $g\in \G$ and all
\(g'\in \G\cup\set{g_0}\) .
Observe that fixing \(\D\) and \(\C\) also fixes \(\S\).%
\begin{definition}\label{def:instancenetwork}
Given a pair \( (\D, \C)\) of a leader $\D$ and contributor \(\C\), 
and \(k\geq 1\), define
\(\N_k\) to be the LTS $\D \parallel \S \parallel \shufflep{k} \C $, where 
\(\shufflep{k}\C\) is \(\shuffle_{i=1}^k \C\).
The {\em (non-atomic) \((\D,\C)\)-network} \(\N\) is the set $\set{\N_k \mid k \geq 1}$,
with {\em language} $L(\N) = \bigcup_{k=1}^\infty L(\N_k)$.
We omit the prefix \((\D,\C)\) when it is clear from the context.
\end{definition}

\noindent
Notice that $L(\N_k)\; =\;  L(\D) \parallel L(\S) \parallel\,
\shufflep{k} L(\C)$ and 
$L(\N) = L(\D) \parallel L(\S) \parallel\, \shufflep{\infty} L(\C)$,
where \(\shufflep{\infty} L(\C)\) is given by \(\bigcup_{k=1}^\infty \shufflep{k} L(\C)\).

\smallskip %
\noindent %
{\it The safety verification problem.} %
A trace of a \( (\D,\C)\)-network \(\N\) is unsafe if it 
ends with an occurrence of
$w_c(\#)$, where $\#$ is a special value of \(\G\).
Intuitively, an occurrence
of $w_c(\#)$ models that the contributor raises a flag because some error has occurred.  
A \( (\D,\C)\)-network \(\N\) is \emph{safe} if{}f its language contains no
unsafe trace, namely \(L(\N)\cap \Sigma^* w_c(\#)=\emptyset\).  (We could also
require the leader to write \(\#\), or to reach a certain state; all these conditions are easily
shown equivalent.) 

Given a machine \(M\) having an LTS semantics over some read-write alphabet, we
denote its LTS by \(\lts{M}\).  Given machines $M_D$ and $M_C$ over read-write alphabets, 
The \emph{safety verification problem} for machines $M_D$ and $M_C$ consists of deciding
if the \( (\lts{M_D},\lts{M_C})\)-network is safe. Notice that the size of the input is the 
size of the machines, and not the size of the LTSs thereof, which might even be infinite.

Our goal is to characterize the complexity of the safety verification problem
considering various types of machines for the leader and the contributors. 
We first establish some fundamental combinatorial properties of non-atomic networks.

%
%

\section{Simulation and Monotonicity}
\label{sec:combi}

We prove two fundamental combinatorial properties of non-atomic networks: 
the Simulation and Monotonicity Lemmas.
Informally, the Simulation Lemma states that a leader cannot 
distinguish an unbounded number of contributors from the parallel
composition of at most $|\G|$ {\em simulators}---LTSs derivable from 
the contributors, one for each value of $\G$. 
The Monotonicity Lemma states that non-minimal traces (with
respect to a certain subword order) can be removed from a simulator without the 
leader ``noticing'', and, symmetrically, non-maximal traces can be removed 
from the leader without the simulators ``noticing''. 

\subsection{Simulation}

\noindent %
{\it First writes and useless writes.} %
Let $\sigma$ be a trace. The {\em first write} of $g$ in $\sigma$ by a contributor 
is the first occurrence of $w_c(g)$ in $\sigma$. A {\em useless write} of $g$ by a contributor
is any occurrence of $w_c(g)$ that is 
immediately overwritten by another write.
For technical reasons, we additionally assume that useless writes are not first writes.

\begin{example}
In a network trace \(w_d(g_1)_1 \, w_c(g_2)_2 \, w_c(g_3)_3\) \(r_d(g_3)_4 \, w_c(g_2)_5 \, w_c(g_1)_6\)
where we have numbered occurrences, $w_c(g_2)_2$ is a first write of $g_2$, and $w_c(g_2)_5$ 
is a useless write of $g_2$ (even though $w_c(g_2)_2$ is immediately overwritten).
\end{example}

\noindent
We make first writes and useless writes explicit by adding two new actions 
$\fw_c$ and $\uw_c$ to our LTSs, and adequately adapting the store.

\begin{definition}
The {\em extension} of an LTS $\T = (\G(r,w), Q, \delta, q_0)$ is the LTS \linebreak
$\T^E = (\G(r, w, \fw, \uw), Q, \delta^E, q_0)$, 
where $\fw, \uw$ are the {\em first write} and {\em useless write} actions, 
respectively, and 
\[
\setlength\abovedisplayskip{1pt}
\setlength\belowdisplayskip{0pt}
\delta^E = \delta \cup \{ (q, \fw(g), q'), (q, \uw(g), q') 
\mid  (q, w(g), q') \in \delta\}\enspace .\]
\end{definition}

We define an {\em extended store}, whose states are triples 
$(g, W, b)$, where $g \in \G$, 
$W \colon \G \rightarrow \{0,1\}$ is the {\em write record}, and $b \in \{0,1\}$ is the {\em useless flag}.
Intuitively, $W$ records the values written by the contributors so far. If $W(g)=0$, 
then a write to $g$ must be a first write, and otherwise a regular write or a useless write.
The useless flag is set to $1$ by a useless write, and to $0$ by other writes.
When set to $1$, the flag prevents the occurrence of a read. The flag
only ensures that between a useless write and the following write no
read happens, i.e., that a write tagged as useless will indeed be
semantically useless.  A regular or first write may be semantically
useless or not. 

\begin{definition}\label{def:extendedstore}
The {\em extended store} is the LTS
\(\ES = (\Sigma_{E}, \EG, \delta_{\ES}, c_0)\)
where
\begin{compactitem}
\item \(\Sigma_E=\G(r_d, w_d, r_c,  w_c, \fw_c, \uw_c)\);
\item $\EG$ is the set of triples $(g, W, b)$, where $g \in \G\cup\set{g_0}$, 
$W~\colon~\G~\rightarrow~\{0,1\}$, and $b \in \{0,1\}$;
\item $c_0$ is the triple $(g_0, W_0, 0)$, where $W_0(g)=0$ for every $g \in \G$;
\item $\delta_{\ES}$ has a transition $(g, W, b) \by{a} (g', W',b')$
	where \(g'\in \G\) if{}f one of the following conditions hold:
	\begin{compactitem}
	\item $a = \rb(g)$, $g'=g$, $W'=W$, and $b=b'=0$;  
	\item $a= w_d(g')$, $W' = W$ and \(b'=0\);
	\item $a= \fw_c(g')$, $W(g')=0$, $W' = W[W(g')/1]$, and \(b'=0\); 
	\item $a= w_c(g')$, $W(g')=1$, $W' = W$, and $b'=0$; 
	\item $a= \uw_c(g')$, $W(g')=1$, $W' = W$, and $b'=1$.
	\end{compactitem}
\end{compactitem}
\noindent The extension of $\N_k$
is $\N^{E}_k = \D \parallel \ES \parallel\, \shufflep{k} \C^{E}$ and
the extension of $\N$ is the set $\N^E= \{\N_k^E \mid k \geq 1 \}$. 
The languages \(L(\N^E_k)\) and \(L(\N^E)\) are defined as in Def.~\ref{def:instancenetwork}.
\end{definition}

It follows immediately from this definition that if $v \in L(\N^E)$ then the sequence
$v'$ obtained of replacing every occurrence of $\fw_c(g), \uw_c(g)$ in $v$ by $w_c(g)$ belongs to $L(\N)$. 
Conversely, every trace $v'$ of $L(\N)$ can be transformed into
a trace $v$ of $L(\N^E)$ by adequately replacing some occurrences of $w_c(g)$ by
$\fw_c(g)$ or $\uw_c(g)$. 

In the sequel, we use sequences of first writes to partition sets of  traces. 
Define \(\Upsilon\) to be the (finite) set of sequences over \(\G(\fw_c)\) with no repetitions.
By the very idea of ``first writes'' no sequence of \(\Upsilon\) writes the same value twice, hence no word in \(\Upsilon\) is longer than \(|\G|\).
Also define \(\Upsilon_{\#}\) to be those words of \(\Upsilon\) which ends with \(\fw_c(\#)\).
Given \(\tau\in\Upsilon\), define \(P_{\tau}\) to be the language given by \( (\Sigma_{E}\setminus \G(\fw_c))^{*} \shuffle \tau\).
\(P_{\tau}\) contains all the sequences over \(\Sigma_E\) in which the subsequence of first writes is exactly \(\tau\).
For \(S\subseteq \Upsilon\), \(P_{S}=\bigcup_{\sigma\in S} P_{\sigma}\).

\smallskip %
\noindent %
{\it The Copycat Lemma.} %
Intuitively, a {\em copycat} contributor is one that follows another
contributor in all its actions: it reads or writes the same
value immediately after the other reads or writes.
Informally, the copycat lemma states that any trace of a non-atomic network
can be extended with copycat contributors.

Consider first the non-extended case. Clearly, for every trace 
of $\N_k$ there is a trace of $\N_{k+1}$ in which the leader and the first $k$ 
contributors behave as before, and the $(k+1)$-th contributor behaves as a copycat of
one of the first $k$ contributors, say the $i$-th: if the $i$-th contributor executes a 
read $r_c(g)$, then the $(k+1)$-th contributor executes the same read 
{\em immediately after}, and the same for a write. 

\begin{example}
Consider the trace $r_c(g_0) \, w_d(g_1) \, r_c(g_1) \, w_c(g_2)$ of $\D \parallel \S \parallel \C$. Then the sequence $r_c(g_0)^2 \, w_d(g_1) \, r_c(g_1)^2 \, w_c(g_2)^2$ is a trace of $\D \parallel \S \parallel (\C \shuffle \C)$.
\end{example}

For the case of extended networks, a similar result holds, but the copycat
copies a first write by a regular write: 
if the $i$-th contributor executes an action other than $\fw_c(g)$,
the copycat contributor executes the same action immediately after, but 
if the $i$-th contributor executes $\fw_c(g)$, then the copycat executes $w_c(g)$. 

\begin{definition}\label{def:CompRealReachpb}
We say \(u\in \G(r_d,w_d)^*\) is {\em compatible} with a multiset $M=\set{v_1, \ldots, v_k}$ of words
over $\G(\fw_c, w_c, \uw_c, r_c)$ (possibly containing multiple copies of a word) if{}f\linebreak 
{
 $u \parallel L(\ES) \parallel \shuffle_{i=1}^k v_i \neq  \emptyset$.
Let \(\tau\in\Upsilon\). We say \(u\) is compatible with \(M\) following \(\tau\) if{}f
$P_{\tau} \cap (u \parallel L(\ES) \parallel \shuffle_{i=1}^k v_i) \neq \emptyset$.
}
\end{definition}

\begin{lemma}
	\label{lem:erasable}
Let \(u\in \G(r_d,w_d)^*\) and let \(M\) be a multiset of words
over \(\G(r_c,\fw_c,w_c,\uw_c)\).
If \(u\) is compatible with \(M\), then \(u\) is compatible with
every \(M'\) obtained by erasing symbols from \(\G(r_c)\) and
\(\G(\uw_c)\) from the words of \(M\). 
\end{lemma}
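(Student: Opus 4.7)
The plan is to start from a witness $w \in u \parallel L(\ES) \parallel \shuffle_{i=1}^k v_i$ guaranteed by the hypothesis and show that erasing the targeted symbols from the contributor sides produces a word $w'$ that still lies in $u \parallel L(\ES) \parallel \shuffle_{i=1}^k v_i'$, where $v_i'$ denotes $v_i$ with the chosen $\G(r_c) \cup \G(\uw_c)$ occurrences removed. Since erasure touches only contributor actions, the projection of $w'$ onto $\G(r_d,w_d)$ is still $u$, and the projection onto the contributor alphabet is manifestly a word of $\shuffle_{i=1}^k v_i'$. The whole content of the proof thus reduces to showing that $w'$ is still a trace of the extended store $\ES$, which I would prove by iterated single-symbol erasure, arguing that one erasure preserves $\ES$-validity.

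For the two erasure cases I would reason locally on the store trace $c_0 \to c_1 \to \cdots \to c_m$ associated with $w$. If the erased symbol is a contributor read $r_c(g)$ at position $j$, then the transition $(g, W, b) \by{r_c(g)} (g, W, b)$ is the identity on store states ($g'=g$, $W'=W$, $b'=b=0$), so removing it simply contracts two equal consecutive states and leaves every other transition unchanged; validity is immediate.

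The more delicate case, and where I expect the main obstacle, is the erasure of a useless write $\uw_c(g')$ at position $j$. Here the post-state $(g',W,1)$ differs from the pre-state $(g,W,b)$ both in the stored value and in the useless flag, so we must argue that the transition at position $j+1$ remains enabled from the pre-state and lands in the same target. The key observation is that the post-state has $b=1$, so no read (contributor or leader) can fire at position $j{+}1$; that is, $a_{j+1}$ must be some write action. Inspecting the five transition schemas of $\delta_{\ES}$, every write action's precondition depends only on $W$ (which is unchanged by $\uw_c$) and no write action inspects the current stored value $g'$ or the useless flag before firing; moreover every such write overwrites both components, yielding the same successor regardless of whether the source is $(g,W,b)$ or $(g',W,1)$. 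Hence $(g,W,b) \by{a_{j+1}} c_{j+1}$ holds and subsequent states are unchanged. If the erased useless write is the last symbol of $w$, then the contracted trace is a prefix of the original and remains valid.

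To handle a block of consecutive erasures (e.g.\ two useless writes in a row, or a useless write followed by a read, which cannot actually occur by the $b=1$ argument) I would process the chosen occurrences from right to left, so that each application of the single-erasure argument sees a sequel that has already been normalized. The lemma follows by induction on the total number of erased symbols, yielding a word $w'$ witnessing the compatibility of $u$ with $M'$. I note in passing that the same argument preserves the subsequence of $\fw_c$-symbols, so the stronger statement ``compatibility following $\tau$'' is preserved as well, although this is not needed for the statement as stated.
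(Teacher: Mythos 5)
Your proof is correct and takes essentially the same approach as the paper, whose entire argument is the one-line observation that erasing contributor reads and useless writes does not affect the sequence of values written to the store and read by someone. You have simply supplied the transition-level details (in particular, that the useless flag $b=1$ blocks any read immediately after a $\uw_c$, so the following action must be a write whose enabling condition and successor depend only on the unchanged write record $W$), which the paper leaves implicit.
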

\begin{proof}
Erasing reads and useless writes (that no one reads) by contributors does not affect
the sequence of values written to the store and read by someone, hence compatibility is preserved. 
\qed
\end{proof}

\begin{lemma}[Copycat Lemma]
\label{lem:extendedmono}
Let $u \in \G(r_d,w_d)^*$, let $M$ be a multiset over $L(\C^E)$ and let \(v'\in M\). 
Given a prefix \(v\) of \(v'\) we have that
if $u$ is compatible with $M$, then $u$ is compatible with $M \oplus v[\fw_c(g)/w_c(g)]$.\footnote{Throughout the paper, we use \(\set{}\), \(\oplus\), \(\ominus\), and \(\geq\) for
the multiset constructor, union, difference and inclusion,
respectively. The word \(w[a/b]\) results from \(w\) by replacing all
occurrences of \(a\) by \(b\).}
\end{lemma}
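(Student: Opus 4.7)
The plan is to build an explicit witness interleaving for the new compatibility claim by ``splicing in'' one copycat action immediately after each of the first $|v|$ actions of the designated contributor in the original witness.

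First, I would fix a witness $w \in u \parallel L(\ES) \parallel \shuffle_{i=1}^k v_i$ of the compatibility of $u$ with $M$, and single out the contributor whose trace is $v' = a_1 \cdots a_m$, so that $v = a_1 \cdots a_n$ with $n \le m$. In $w$, the actions $a_1,\ldots,a_m$ appear in order (interleaved with other actions). For each $j \in \{1,\ldots,n\}$, define the \emph{copycat image} $a_j^{cc}$ to be $w_c(g)$ if $a_j = \fw_c(g)$ and $a_j$ otherwise, and let $w'$ be obtained from $w$ by inserting $a_j^{cc}$ immediately after the occurrence of $a_j$. By construction, the projection of $w'$ onto $\G(r_d,w_d)$ is still $u$, and its projection onto the alphabet of the new contributor is exactly $v[\fw_c(g)/w_c(g)]$, while the projections onto the other $v_i$ are unchanged.

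The core verification is that $w'$ remains consistent with $\ES$. I would prove the following invariant by induction on the position in $w'$: the store state reached right after $a_j^{cc}$ equals the store state reached right after $a_j$ in $w$. A short case analysis on $a_j$ establishes this and simultaneously shows that $a_j^{cc}$ is enabled at the intermediate state:
\begin{compactitem}
\item $a_j = r_c(g)$: after $a_j$ the state is $(g,W,0)$; $r_c(g)$ is enabled and idempotent.
\item $a_j = w_c(g)$: after $a_j$ the state is $(g,W,0)$ with $W(g)=1$; another $w_c(g)$ is enabled and leaves the state unchanged.
\item $a_j = \fw_c(g)$: after $a_j$ the state is $(g,W[g/1],0)$ with $W[g/1](g)=1$; hence $w_c(g)$ (the copycat image of $\fw_c(g)$) is enabled and leaves the state fixed.
\item $a_j = \uw_c(g)$: after $a_j$ the state is $(g,W,1)$ with $W(g)=1$; another $\uw_c(g)$ is enabled and again leaves the state fixed.
\end{compactitem}
Because each inserted action preserves the store state, the suffix of $w$ after $a_j$ is still enabled in $w'$, so $w'$ is a valid run of $\D \parallel \ES \parallel \shuffle_{i=1}^{k+1} v_i$ with $v_{k+1} = v[\fw_c/w_c]$.

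The only mildly delicate point is the useless-write case, where the copycat also performs a $\uw_c(g)$ and leaves the useless flag set to $1$. But this is harmless, because the original witness $w$ must already have a non-read action immediately after $a_j = \uw_c(g)$ (reads require $b=0$), and that non-read action is still enabled after the copycat's $\uw_c(g)$ since the state is identical. The construction therefore yields the required witness for compatibility of $u$ with $M \oplus v[\fw_c(g)/w_c(g)]$. \qed
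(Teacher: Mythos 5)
Your proof is correct and follows the same construction the paper intends: the paper only describes the copycat informally (insert each action immediately after the original's, replacing $\fw_c(g)$ by $w_c(g)$) and states the lemma without a formal proof, while you supply the missing verification via the store-state-preservation invariant and the case analysis on the extended store's transitions. The only cosmetic slip is calling the final witness a run of $\D \parallel \ES \parallel \cdots$ rather than of $u \parallel L(\ES) \parallel \cdots$, but you already established the correct condition (the projection onto $\G(r_d,w_d)$ is $u$), so nothing is missing.
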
%
\begin{example}
$r_d(g_1)$ is compatible with $\fw_c(g_1) \, \fw_c(g_2)$. By the Copycat Lemma 
$r_d(g_1)$ is also compatible with $\{ \fw_c(g_1) \, \fw_c(g_2), w_c(g_1) \, w_c(g_2) \}$.
Indeed, $\fw_c(g_1) \, w_c(g_1) \, r_d(g_1) \, \fw_c(g_2) \, w_c(g_2)
\in L(\ES)$ is a trace
(even though $\fw_c(g_2)$ is useless).
\end{example}

\noindent %
{\it The Simulation Lemma.} %
The simulation lemma states that we can replace unboundedly many contributors by
a finite number of LTSs that ``simulate'' them. In particular the network is safe if{}f
its simulation is safe.

Let \(v\in L(\C^E)\).
Let $\#v$ be the number of times that actions of \(\G(\fw_c,w_c)\) occur in
$v$, minus one if the last action of $v$ belongs to \(\G(\fw_c,w_c)\). 
E.g., $\#v=1$ for $v=\fw_c(g_1) r_c(g_1)$ but $\#v=0$ for $v=r_c(g_1) \fw_c(g_1)$.
%
The next lemma is at the core of the simulation theorem.

\begin{lemma}
\label{lem:readlang}
Let \(u\in L(\D)\) and let $M=\{v_1, \ldots, v_k\}$ be a multiset over $L(\C^E)$ compatible with $u$. 
Then $u$ is compatible with a multiset \(\tilde{M}\)
over \(L(\C^E)\cap \G(r_c,\uw_c)^*\; \G(\fw_c,w_c)\).
\end{lemma}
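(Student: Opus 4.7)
My plan is to prove Lemma~\ref{lem:readlang} by induction on the measure $\mu(M) := \sum_{v \in M} \#v$. In the base case $\mu(M) = 0$, every $v \in M$ either has no real writes at all (and can be erased to $\varepsilon$ via Lemma~\ref{lem:erasable}) or has its single real write as its last action preceded only by reads and $\uw_c$'s, so after the erasures the multiset already lies in $L(\C^E) \cap \G(r_c,\uw_c)^*\G(\fw_c,w_c)$ and can be taken as $\tilde M$.

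For the inductive step with $\mu(M) \geq 1$, pick $v \in M$ with $\#v \geq 1$ and write $v = \alpha\, a\, \beta$, where $a$ is the \emph{first} real write of $v$ and $\beta \neq \varepsilon$. By choice of $a$, we have $\alpha \in \G(r_c,\uw_c)^*$. I replace $v$ by three contributors: $v_1 = \alpha\, a$ (prefix of $v$ up to $a$, already in the required form), $v_2 = \alpha\, \uw_c(a)\, \beta$ (still in $L(\C^E)$ because $\uw_c$ labels a transition in $\C^E$ wherever $w$ does), and a helper $v_3 = \alpha\, w_c(g)$ where $g$ is the value written by $a$ (this is the copycat variant of $v_1$ guaranteed by Lemma~\ref{lem:extendedmono}). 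A direct calculation gives $\#v_1 = \#v_3 = 0$ and $\#v_2 = \#v - 1$, so the new multiset $M'$ satisfies $\mu(M') = \mu(M) - 1$ and the inductive hypothesis applies.

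It remains to show that $u$ is compatible with $M'$. Given a witness trace $w$ for $u \parallel L(\ES) \parallel \shuffle_{i=1}^k v_i$, I construct $w'$ as follows. Let $p$ be the position of $v$'s action $a$ in $w$. For each action of $v$ lying in $\alpha$, at its original position $q$ in $w$, schedule three consecutive copies at $q$, one each by $v_1, v_2, v_3$. At position $p$, insert the burst $a_{v_1},\, \uw_c(a)_{v_2},\, w_c(g)_{v_3}$. For each action of $v$ in $\beta$, attribute it to $v_2$ at its original position. All non-$v$ actions (leader and other contributors) are preserved at their original positions. Since $v_2$'s $\uw_c(g)$ and $v_3$'s $w_c(g)$ do not alter the store value, the sequence of store values seen by reads remains as in $w$. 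The main obstacle is the useless-flag $b$: when $a = \fw_c(g)$, after $v_1$'s $\fw_c(g)$ and $v_2$'s $\uw_c(g)$ we have $b=1$, which would block any subsequent read in $\beta$. This is precisely the reason for introducing $v_3$: its $w_c(g)$, fired immediately after $v_2$'s $\uw_c(g)$, resets $b=0$ without changing the store. In the easier case $a = w_c(g)$, $W(g)=1$ already before $p$ and one may instead place $v_2$'s $\uw_c(g)$ just before $v_1$'s $w_c(g)$ at $p$, which itself resets $b$, so the helper is redundant.
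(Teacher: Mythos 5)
Your proof is correct and follows the same overall strategy as the paper's: induction on $\sum_{v\in M}\#v$, peeling off the first real write $a$ of a word $v=\alpha\,a\,\beta$ with $\beta\neq\varepsilon$, demoting that write to $\uw_c(g)$ in the tail $\alpha\,\uw_c(g)\,\beta$, and letting a copycat word of the form $\alpha\,\cdot\,(\text{write of }g)$ perform the real write. The one substantive difference is your third contributor $v_3=\alpha\,w_c(g)$: the paper replaces $v$ by only \emph{two} words, $\alpha\,\uw_c(g)\,\beta$ and $\alpha\,a$, and schedules the two writes adjacently ``in reverse order.'' Checked against the extended-store semantics, that two-word version is delicate precisely where you say it is: when $a=\fw_c(g)$, putting $\uw_c(g)$ before $\fw_c(g)$ violates the guard $W(g)=1$, while putting it after leaves the useless flag $b=1$ and blocks any read that followed $a$ in the witness trace (already $M=\{\fw_c(g)\,r_c(g)\}$ with $u=\varepsilon$ exhibits the tension). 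Your burst $a_{v_1},\,\uw_c(g)_{v_2},\,w_c(g)_{v_3}$ resolves this cleanly, since $v_3$'s write resets $b$ without changing the store value, so your argument is if anything more careful than the paper's at its one delicate point; your explicit rescheduling (triplicating the $r_c$ and $\uw_c$ actions of $\alpha$, which is sound because neither changes $W$ or the stored value) is essentially the Copycat Lemma carried out by hand. One cosmetic remark: in the base case the words with no real write should simply be dropped from $\tilde M$ rather than kept as $\varepsilon$ (as the paper does via Lemma~\ref{lem:erasable}), since $\varepsilon\notin\G(r_c,\uw_c)^*\,\G(\fw_c,w_c)$.
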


\begin{proof}
Since $u$ is compatible with \(M\), $u \parallel L(\ES) \parallel \shuffle_{i=1}^k v_i\neq\emptyset$. 
Lemma~\ref{lem:erasable} shows that we can drop from \(M\) all the \(v_i\) such that \(v_i\in \G(r_c,\uw_c)^*\).
 Further, define $\#M = \sum_{i=1}^k \#v_i$.
We proceed by induction on $\#M$. If $\#M=0$, then all the words of $M$
belong to \(\G(r_c,\uw_c)^*\; \G(\fw_c,w_c)\), and we are done. 
If $\#M>0$, then there is \(v_i \in M\) such that $v_i = \alpha_i \,
\sigma\, \beta_i$, where $\alpha_i \in \G(r_c,\uw_c)^*$, \(\sigma\in\G(\fw_c,w_c)\), and \(\beta_i\neq \varepsilon\).  Let \(g\) be the value written by
 \(\sigma\), and let $v_{k+1}=\alpha_i w_c(g)$.
By Lemma \ref{lem:extendedmono}, $u$ is compatible
with $\{v_1, \ldots, v_{k+1}\}$, and so there is $v' \in u \parallel L(\ES)
\parallel \shuffle_{i=1}^{k+1} v_i$ in which the write \(\sigma\) of $v_i$ 
occurs in $v'$ immediately before the write of $v_{k+1}$. We now let 
the writes occur in the reverse order, which amounts to replacing 
$v_i$ by $v_i' = \alpha_i\, \uw_c(g)\, \beta_i$ and $v_{k+1}$ by 
$v_{k+1}'= \alpha_i\, \sigma$. This yields a new multiset
$M' =  M \ominus \{v_i\} \oplus \{ v'_i, v_{k+1}' \}$ compatible with $u$.
Since $\#M'=\#M-1$, we then apply the induction
hypothesis to $M'$, obtain \(\tilde{M}\) and we are done. 
\qed
\end{proof} 

\begin{definition}\label{def:simulation}
For all $g \in \G$, let \(L_g=L(\C^E)\cap \G(r_c,\uw_c)^*\, \fw_c(g)\).
Define \(S_g\) be an LTS over \(\G(r_c,\uw_c,\fw_c,w_c)\) such that \(L(S_g)=\pre(L_g\cdot w_c(g)^*)\).
Define the LTS $\N^S = \D \parallel \ES \parallel \shuffle_{g\in \G} S_g$ which we call the {\em simulation} of $\N^E$.
\end{definition}

\begin{lemma}
\label{lem:readlangbis}
Let \(u\in L(\D)\) and let $M=\{v_1, \ldots, v_k\}$ be a multiset over \(L(\C^E)\cap \G(r_c,\uw_c)^*\; \G(\fw_c,w_c)\) compatible with $u$. 
Then $u$ is compatible with a set \(S=\{s_g\}_{g\in \G}\) where \(s_g\in L(S_g)\).
\end{lemma}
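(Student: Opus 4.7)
The plan is to exploit the very constrained shape of the words of $M$: each $v_i$ factors uniquely as $v_i = \mu_i \cdot \sigma_i$ with $\mu_i \in \G(r_c,\uw_c)^*$ and $\sigma_i \in \G(\fw_c,w_c)$. I group the $v_i$'s by the value written by $\sigma_i$: for each $g \in \G$, let $V_g = \{v_i \in M : \sigma_i \in \{\fw_c(g), w_c(g)\}\}$. The key observation is that, for every $g$ with $V_g \neq \emptyset$, exactly one word in $V_g$ ends with $\fw_c(g)$ and the rest end with $w_c(g)$. Indeed, in any execution witnessing compatibility of $u$ with $M$, the flag $W(g)$ of the extended store is raised solely by $\fw_c(g)$ (which itself requires $W(g) = 0$) while every $w_c(g)$ requires $W(g) = 1$; so at most one $\fw_c(g)$ can occur, and at least one must, in order to enable the other $w_c(g)$'s from $V_g$. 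Call this distinguished word $v^*_g$; by construction $v^*_g \in L_g$.

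Next, I invoke Lemma~\ref{lem:erasable} to erase every letter of $\G(r_c) \cup \G(\uw_c)$ from each $v_j$ that is not one of the $v^*_g$'s; such a $v_j \in V_g \setminus \{v^*_g\}$ collapses to the single letter $w_c(g)$. The resulting multiset $M'$ is still compatible with $u$. I define $s_g = v^*_g \cdot w_c(g)^{|V_g|-1}$ if $V_g \neq \emptyset$, and $s_g = \varepsilon$ otherwise. Since $v^*_g \in L_g$, we have $s_g \in L_g \cdot w_c(g)^* \subseteq \pre(L_g \cdot w_c(g)^*) = L(S_g)$; and $\varepsilon \in L(S_g)$ as the empty trace of any LTS. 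Thus $s_g \in L(S_g)$ in both cases, and $S = \{s_g\}_{g \in \G}$ is the candidate set.

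To conclude, I pick any witness $w' \in u \parallel L(\ES) \parallel \shuffle M'$ and argue that the same $w'$ also lies in $u \parallel L(\ES) \parallel \shuffle_{g \in \G} s_g$, establishing compatibility of $u$ with $S$. The projections of $w'$ onto $\G(r_d, w_d)$ and onto $\Sigma_E$ are unchanged, so only the projection onto the contributor alphabet needs reinterpretation. I partition the positions of this projection: to group $g$ I assign all positions originating from $v^*_g$ together with the $|V_g|-1$ positions that are copies of $w_c(g)$. By the store semantics, every $w_c(g)$ in $w'$ must follow the unique $\fw_c(g)$, so reading group $g$ in the order induced by $w'$ yields the letters of $v^*_g$ first (in their original order) and then the $|V_g|-1$ copies of $w_c(g)$, i.e., exactly $s_g$. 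Hence the contributor projection of $w'$ is a legal interleaving of the $s_g$'s.

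The main obstacle is this last step: compatibility with $M'$ only demands that the $|V_g|-1$ copies of $w_c(g)$ appear somewhere in $w'$, but recognising $w'$ as an interleaving of the $s_g$'s additionally requires all of them to appear after the whole of $v^*_g$. The gap closes for free thanks to the store forbidding any $w_c(g)$ event before the corresponding $\fw_c(g)$ event.
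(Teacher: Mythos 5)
Your proof is correct, and it reaches the paper's conclusion by a somewhat different route. Both arguments share the same skeleton: partition $M$ by the value $g$ of the final write, observe that the extended store's write record forces each non-empty group to contain exactly one trace ending in $\fw_c(g)$ (and that this trace lies in $L_g$), and aim at $s_g = v^*_g\, w_c(g)^{n-1}$. Where you diverge is in the middle. The paper goes \emph{additively}: it invokes the Copycat Lemma (Lemma~\ref{lem:extendedmono}) to adjoin $n-1$ copies of $x_1 w_c(g)$ to the multiset, then discards the original traces $x_i w_c(g)$ ($i\geq 2$) and merges, justifying the last two steps only informally (``each $w_c(g)$ can happen as soon as $\fw_c(g)$ has occurred''). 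You go \emph{subtractively}: a single application of Lemma~\ref{lem:erasable} collapses each non-distinguished trace to the bare letter $w_c(g)$, and you then show that one fixed witness $w'$ for the collapsed multiset literally re-parses as an interleaving of the $s_g$'s, because the store forbids any $w_c(g)$ before the unique $\fw_c(g)$, which is the last letter of $v^*_g$. This buys you two things: you avoid the Copycat Lemma entirely in this proof, and you make the merging step --- the part the paper leaves hand-wavy --- fully explicit via the first-write discipline of $\ES$. The one cosmetic blemish you inherit from the paper rather than introduce: when $V_g=\emptyset$ you need $\varepsilon\in L(S_g)=\pre(L_g\cdot w_c(g)^*)$, which the paper also asserts via prefix-closure of LTS languages; this is harmless but worth being aware of.
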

\begin{proof}
	Let us partition the multiset \(M\) as \(\{M_g\}_{g\in \G}\) such that
\(M_g\) contains exactly the traces of \(M\) ending with $\fw_c(g)$ or
$w_c(g)$. Note that some \(M_g\) might be empty. 
Each non-empty \(M_g\) is of the form $M_g=\set{x_1
\fw_c(g), x_2 w_c(g), \ldots, x_n w_c(g)}$ where \(n\geq 1\), and \(x_i\in
\G(r_c,\uw_c)^*\) for every $1 \leq i \leq n$.  
Define \(M'_{g}\) as empty
if \(M_g\) is empty, and \(M'_g\) as \(M_g\) together with \(n-1\) copies of \(x_1 w_c(g)\). 
The copycat lemma shows that \(u\) is compatible with \(\oplus_{g\in\G} M'_g\).  
Let us now define the multiset \(M''_g\) to be empty if \(M'_g\) is empty, 
and the multiset of  
exactly \(n\) elements given by \(x_1 \fw_c(g)\) and \(n-1\)
copies of \(x_1 w_c(g)\) if $M'_g$ is not empty.  Again we show that \(u\) is compatible with
\(\oplus_{g\in\G} M''_g\).  The reason is that the number \(n-1\) of actions
$w_c(g)$ in each \(M''_g\) does not change (compared to \(M_g\)) and each
\(w_c(g)\) action can happen as soon as \(\fw_c(g)\) has occurred.

Now define $S$ consisting of one trace \(s_g\) for each \(g\in\G\) such
that 
\(s_g=\varepsilon\) if 
\(M''_g=\emptyset\);
and 
\(s_g=x_1\, \fw_c(g)\, w_c(g)^{n-1}\) if \(M''_g\) consists of
\(x_1\fw_c(g)\)  and \(n-1\) copies of \(x_1 w_c(g)\).

We have that \(u\) is compatible with \(S\) because the number of \(\fw_c(g)\)
and \(w_c(g)\) actions in \(M''_{g}\) and \(s_g\) does not change and each
\(w_c(g)\) action can happen as soon as \(\fw_c(g)\) has occurred. Note that
it need not be the case that \(s_g\in L(\C^E)\).
However each \(s_g \in
L(S_g)\) (recall that each \(L(S_g)\) is prefix
closed).   
\qed
\end{proof}

\begin{corollary}
\label{cor:readlang}
Let \(u\in L(\D)\) and let $M=\{v_1, \ldots, v_k\}$ be a multiset over $L(\C^E)$ compatible with $u$. 
Then $u$ is compatible with a set \(S=\{s_g\}_{g\in \G}\) where \(s_g\in L(S_{g})\).
\end{corollary}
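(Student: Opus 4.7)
The plan is straightforward: this corollary follows by simply chaining Lemma~\ref{lem:readlang} and Lemma~\ref{lem:readlangbis} in sequence. Since both hypotheses of those lemmas concern compatibility of a leader trace $u \in L(\D)$ with a multiset of contributor traces, they compose without any additional bookkeeping.

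First, I would apply Lemma~\ref{lem:readlang} to $u$ and $M$. The hypothesis matches: $u \in L(\D)$ and $M$ is a multiset over $L(\C^E)$ compatible with $u$. This yields a multiset $\tilde{M}$ over the smaller class $L(\C^E) \cap \G(r_c,\uw_c)^*\, \G(\fw_c,w_c)$ such that $u$ is compatible with $\tilde{M}$. Intuitively, Lemma~\ref{lem:readlang} has already done the work of normalizing the contributors so that each one performs only reads and useless writes followed by a single (first or regular) write.

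Second, I would feed $\tilde{M}$ into Lemma~\ref{lem:readlangbis}. The hypothesis again matches by construction, since $\tilde{M}$ is a multiset over exactly the language required. The conclusion produces a set $S = \{s_g\}_{g \in \G}$ with $s_g \in L(S_g)$ such that $u$ is compatible with $S$, which is exactly what the corollary asserts.

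There is no real obstacle here; the only subtlety would be a clerical one, namely making sure that the notions of compatibility used in the two lemmas agree (they do, via Definition~\ref{def:CompRealReachpb}), and that the multiset $\tilde{M}$ output by Lemma~\ref{lem:readlang} is in the precise shape required as input of Lemma~\ref{lem:readlangbis}. Both checks are immediate from the statements. Hence the proof is essentially a one-line composition, and I would present it as such.
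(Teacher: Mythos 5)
Your proposal is correct and matches the paper's intent exactly: the corollary is stated without proof precisely because it is the immediate composition of Lemma~\ref{lem:readlang} (normalizing $M$ to a multiset over $L(\C^E)\cap \G(r_c,\uw_c)^*\,\G(\fw_c,w_c)$) followed by Lemma~\ref{lem:readlangbis} (collapsing that multiset to the set $\{s_g\}_{g\in\G}$ with $s_g\in L(S_g)$). The interface check you mention — that the output shape of the first lemma is exactly the input shape of the second — is the only thing to verify, and it holds.
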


In Lemmas~\ref{lem:erasable},\ref{lem:extendedmono},\ref{lem:readlang}
and \ref{lem:readlangbis} and Corollary~\ref{cor:readlang} compatibility is
preserved.  We can further show that it is preserved following a given sequence
of first writes. For example, in Lem.~\ref{lem:readlang} if \(u\) is compatible with \(M\) following \(\tau\) then \(u\) is compatible with \(\tilde{M}\) following \(\tau\).

\begin{lemma}[Simulation Lemma]
\label{thm:simulation_bis}
Let \(\tau\in\Upsilon\):
\[
\setlength\abovedisplayskip{1pt}
\setlength\belowdisplayskip{0pt}
L(\N^E)\cap P_{\tau}\neq\emptyset \quad\text{if{}f}\quad L(\N^S)\cap P_{\tau} \neq \emptyset\enspace .\]
\end{lemma}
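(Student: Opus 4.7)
The plan is to prove both directions of the biconditional separately, using the combinatorial lemmas developed so far.

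For the forward direction, I would start from $L(\N^E) \cap P_\tau \neq \emptyset$, unfold Definition~\ref{def:instancenetwork} to extract some $u \in L(\D)$ and a multiset $M$ over $L(\C^E)$ compatible with $u$ following $\tau$, and then chain the two reduction lemmas. Lemma~\ref{lem:readlang} replaces $M$ by a multiset over $L(\C^E) \cap \G(r_c,\uw_c)^* \G(\fw_c,w_c)$, and Corollary~\ref{cor:readlang} then collapses it into a family $\{s_g\}_{g \in \G}$ with each $s_g \in L(S_g)$. As noted in the paragraph just above the lemma statement, compatibility following a fixed sequence of first writes is preserved at both of these steps, so the resulting family witnesses a trace of $\N^S$ in $P_\tau$.

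The backward direction is the less routine one. Given $u \in L(\D)$ and $\{s_g\}_{g \in \G}$ with $s_g \in L(S_g)$ compatible with $u$ following $\tau$, I would unfold $L(S_g) = \pre(L_g \cdot w_c(g)^*)$ and split into two cases: either (a) $s_g \in \G(r_c,\uw_c)^*$, hence already in $L(\C^E)$ by prefix closure, or (b) $s_g = x_g \fw_c(g) w_c(g)^{m_g}$ with $x_g \fw_c(g) \in L_g$ and $m_g \geq 0$, in which case $\fw_c(g)$ must appear in $\tau$. I would then build an $L(\C^E)$-multiset $M$ by keeping $s_g$ in case (a), and by taking $x_g \fw_c(g)$ together with $m_g$ extra copies of $x_g w_c(g)$ in case (b). The fact that $x_g w_c(g) \in L(\C^E)$ follows because the extension adds a $w_c(g)$ transition wherever a $\fw_c(g)$ transition exists.

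The key remaining step is to show that $u$ is compatible with $M$ following $\tau$. Starting from a trace $\sigma \in u \parallel L(\ES) \parallel \shuffle_g s_g$ lying in $P_\tau$, I would, for each case-(b) value $g$, reattribute the $m_g + 1$ writes of $s_g$ in $\sigma$ to the $m_g + 1$ new contributors, and insert each new contributor's $x_g$-prefix immediately after contributor $0$'s matching reads and useless writes. This can be phrased cleanly as an iterated application of the Copycat Lemma (Lemma~\ref{lem:extendedmono}) starting from the multiset $\{x_g \fw_c(g) : g \text{ in case (b)}\} \cup \{s_g : g \text{ in case (a)}\}$.

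The main obstacle is verifying that these copycat insertions preserve the validity of the extended-store trace and keep the first-write projection equal to $\tau$. Reads leave both the store value and the write record $W$ untouched, so any $r_c(g')$ can be safely duplicated immediately after the original; useless writes impose no precondition on the useless flag $b$, so a copycat $\uw_c(g')$ can be chained in the same fashion; and since no inserted action is a $\fw_c(\cdot)$, the subsequence of first writes remains $\tau$. Together these observations close the construction and yield a trace of $\N^E$ in $P_\tau$.
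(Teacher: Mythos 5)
Your proof follows the paper's argument essentially step for step: the forward direction is Corollary~\ref{cor:readlang} plus the observation that compatibility following $\tau$ is preserved, and the backward direction uses the same decomposition of each $s_g$ via prefix closure of $L(S_g)$ and the same redistribution of the $m_g+1$ writes among fresh contributors that each copy the $x_g$-prefix (the paper calls this a ``copycat-like argument'' and leaves it equally informal). One small caveat: your direct reattribution construction is sound, but the suggested reformulation as iterated applications of Lemma~\ref{lem:extendedmono} starting from the multiset containing $x_g\,\fw_c(g)$ would first require showing that $u$ is compatible with that truncated multiset, which does not follow by truncation alone since a removed trailing $w_c(g)$ may be the write that serves a later read in the interleaving.
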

\begin{proof}
($\Rightarrow$):
The hypothesis and the definition of \(\N^{E}\) shows that there is \(k\geq
1\) such that \( P_{\tau} \cap (L(\D) \parallel L(\ES) \parallel\,
\shufflep{k} L(\C^{E}))\neq \emptyset\).

Therefore we conclude that there exists \(u\in L(\D)\) and \(M=
\set{v_1,\ldots,v_{k}}\) over \(L(\C^{E})\) such that \(u\)
is compatible with \(M\) following \(\tau\).
Corollary~\ref{cor:readlang} shows that \(u\) is compatible following \(\tau\) with a set \(S=\{s_g\}_{g\in \G}\) where \(s_g\in L(S_g)\).
Therefore we have \(P_{\tau} \cap (u \parallel L(\ES) \parallel \shuffle_{g\in \G} L(S_g))\neq\emptyset\),
hence that \(P_{\tau} \cap (L(\D) \parallel L(\ES) \parallel
\shuffle_{g\in \G} L(S_g))\neq\emptyset\) and finally that \(P_{\tau}\cap L(\N^{S})\neq \emptyset\).

($\Leftarrow$):
The hypothesis and the definition of \(\N^{S}\) shows that \(P_{\tau} \cap ( L(\D) \parallel L(\ES) \parallel \shuffle_{g\in \G} L(S_g) )\neq \emptyset\).
Hence we find that there exists \(u\in L(\D)\) and a set
\(\{x_g\}_{g\in\G}\) where \(x_g\in L(S_g)\) such that \(P_{\tau}
\cap (u \parallel L(\ES) \parallel \shuffle_{g\in \G} x_g)\neq \emptyset\). The prefix closure of \(L(S_g)\) shows that either \(x_g\) does not
have a first write or $x_g = v_g \fw_c(g) w_c(g)^{n_g}$ for some \(v_g\fw_c(g) \in
L_g\) and \(n_g\in \nats\). 
In the former case, that is \(x_g\in\G(r_c,u_c)^*\),
Lemma~\ref{lem:erasable} shows that discarding the trace does not
affect compatibility.
Then define the multiset \(M\) containing for each remaining trace \(x_g=v_g \fw_c(g) w_c(g)^{n_g}\) the trace \(v_g \fw_c(g)\) and \(n_g\) traces
\(v_g w_c(g)\). \(M\) contains no other element.  Using a copycat-like argument, it is easy to show \(M\) is compatible with $u$ and further
that compatibility follows \(\tau\).
Finally, because \(v_g f_c(g) \in L(\C^{E})\cap \G(r_c,u_c)^*\,\G(f_c)\) and
because \(\C^E\) is the extension of \(\C\) we find that  every trace of \(M\)
is also a trace of \(\C^E\), hence that there exists \(k\geq 1\) such that
\(P_{\tau}\cap ( L(\D) \parallel L(\ES) \parallel\, \shufflep{k}
L(\C^E))\neq\emptyset\), and finally that \(L(\N^E)\cap
P_{\tau}\neq\emptyset\).
\qed
\end{proof}

Let us now prove an equivalent safety condition.
\begin{proposition}\label{prop:acceptsimulation}
A 
\((\D,\C)\)-network $\N$ is safe 
if{}f $L(\N^S)\cap P_{\Upsilon_{\#}}= \emptyset$.
\end{proposition}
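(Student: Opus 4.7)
The plan is to reduce the proposition to the Simulation Lemma (Lem.~\ref{thm:simulation_bis}) via the intermediate equivalence
\[
\N \text{ is unsafe} \quad\Longleftrightarrow\quad L(\N^E) \cap P_{\Upsilon_{\#}} \neq \emptyset \, .
\]
Once this is established, since $P_{\Upsilon_{\#}} = \bigcup_{\tau \in \Upsilon_{\#}} P_{\tau}$, applying Lem.~\ref{thm:simulation_bis} to each $\tau \in \Upsilon_{\#}$ and taking the union yields $L(\N^E)\cap P_{\Upsilon_{\#}}\neq\emptyset$ iff $L(\N^S)\cap P_{\Upsilon_{\#}}\neq\emptyset$, from which the proposition follows by negating both sides.

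For the ($\Rightarrow$) direction of the intermediate equivalence, I start from an unsafe trace $v' \in L(\N)$ ending in $w_c(\#)$. By prefix closure of $L(\N)$, I truncate $v'$ at its \emph{first} occurrence of $w_c(\#)$, obtaining a trace $v'' \in L(\N)$ in which $w_c(\#)$ occurs only once, as the last letter. By the lifting noted right after Def.~\ref{def:extendedstore}, $v''$ can be promoted to some $v \in L(\N^E)$ by tagging, for each $g \in \G$, the first occurrence of $w_c(g)$ in $v''$ as $\fw_c(g)$ (this is actually forced by the write record $W$ in $\ES$) and tagging any remaining occurrences as $w_c(g)$ or $\uw_c(g)$. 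Because the lone $w_c(\#)$ of $v''$ is its very last letter, it gets relabeled $\fw_c(\#)$ at the end of $v$; hence the subsequence of first writes of $v$ ends in $\fw_c(\#)$, i.e., $v \in P_{\tau}$ for some $\tau \in \Upsilon_{\#}$.

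For the ($\Leftarrow$) direction, pick any $v \in L(\N^E) \cap P_{\tau}$ with $\tau \in \Upsilon_{\#}$; then $v$ contains at least one occurrence of $\fw_c(\#)$. Let $\hat v$ be the prefix of $v$ ending at this occurrence; by prefix closure of $L(\N^E)$ we have $\hat v \in L(\N^E)$. Replacing every $\fw_c(g)$ and $\uw_c(g)$ in $\hat v$ by $w_c(g)$ produces a word $\hat v' \in L(\N)$ (the reverse direction of the correspondence after Def.~\ref{def:extendedstore}) ending in $w_c(\#)$, witnessing that $\N$ is unsafe.

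The only subtlety is the forward truncation: without first cutting $v'$ down to its earliest occurrence of $w_c(\#)$, the trailing $w_c(\#)$ need not be a first write and the promoted trace would then fail to lie in $P_{\Upsilon_{\#}}$. All other steps are immediate from the $L(\N) \leftrightarrow L(\N^E)$ correspondence, prefix closure, and the Simulation Lemma.
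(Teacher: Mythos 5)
Your proof is correct and follows essentially the same route as the paper: reduce unsafety of $\N$ to $L(\N^E)\cap P_{\Upsilon_{\#}}\neq\emptyset$ via the extension correspondence and prefix closure, then conclude by the Simulation Lemma applied to each $\tau\in\Upsilon_{\#}$. The paper compresses the intermediate equivalences into one line ("by definition of extension" and "by definition of $P_{\Upsilon_{\#}}$"); your explicit truncation of the unsafe trace at the first occurrence of $w_c(\#)$ is exactly the detail that justifies that step.
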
%
\begin{proof}
	From the semantics of non-atomic networks, \(\N\) is
	unsafe if and only if \(L(\N)\cap (\Sigma^* w_c(\#))\neq
	\emptyset\), equivalently, \(L(\N^{E})\cap (\Sigma_E^*
	\fw_c(\#))\neq\emptyset\) (by definition of extension),
	which in turn is equivalent to \(L(\N^{E})\cap P_{\Upsilon_{\#}}
	\neq\emptyset\) (by definition of \(P_{\Upsilon_{\#}}\)), if and
	only if \(L(\N^S)\cap P_{\Upsilon_{\#}}\neq \emptyset\) (by
	the simulation lemma). 
	\qed
\end{proof}

\subsection{Monotonicity} 

Before stating the monotonicity lemma, we need some language-theoretic definitions.
For an alphabet $\Sigma$, define the {\em subword ordering} $\mathord{\preceq} \subseteq \Sigma^*\times\Sigma^*$
on words as $u \preceq v$ if{}f \(u\) results from \(v\) by deleting some occurrences of symbols.
%
Let $L \subseteq \Sigma^*$, define $S \subseteq L$ to be
\begin{compactitem}
\item {\em cover} of $L$ if for every $u \in L$ there is $v \in S$ such that $u \preceq v$;
\item {\em support} of $L$ if for every $u \in L$ there is $v \in S$ such that $v \preceq u$.
\end{compactitem}
%
Observe that for every $u, v \in S$ such that $u \prec v$: if $S$ is a cover then so is $S \setminus \{u\}$,
and if $S$ is a support then so is $S \setminus \{v\}$. 

Recall that \(\N^S=\D\parallel \ES\parallel \shuffle_{g\in\G} S_g\).
It is convenient to introduce a fourth, redundant component that does
not change \(L(\N^S)\), but exhibits important properties of it.
Recall that the leader cannot observe the reads of the contributors,
and does not read the values introduced by useless writes. We
introduce a local copy $\ESU$ of the store with alphabet
$\G(r_d,w_d,\fw_c,w_c)$ that behaves like $\ES$ for writes and first
writes of the contributors, but has neither contributor reads nor
useless writes in its alphabet. Formally: 

\begin{definition}\label{def:leaderstore}
The {\em leader store} \(\ESU\) is the LTS $(\Sigma_D^E, \EGU, \delta_D^E, c_0)$,
\begin{compactitem}
\item \(\Sigma_D^E=\G(r_d, w_d, \fw_c, w_c)\);
\item $\EGU$ is the set of  pairs $(g, W)$, where $g \in \G\cup\set{g_0}$ and
$W~\colon~\G~\rightarrow~\{0,1\}$;
\item $c_0$ is the pair $(g_0, W_0)$, where $W_0(g)=0$ for every $g \in \G$;
\item $\delta_D^E$ has a transition $(g, W) \by{a} (g', W')$ where
	\(g'\in\G\) if{}f 
one of the following conditions hold:
\begin{inparaenum}[\itshape a\upshape)]
\item $a{=}w_d(g')$ and $W' = W$;
\item $a{=}r_d(g)$, $g'=g$, and $W'=W$;  
\item $a{=}\fw_c(g')$, $W(g')=0$, and $W' = W[W(g')/1]$; 
\item $a{=}w_c(g')$, $W(g')=1$, and $W' = W$.
\end{inparaenum}
\end{compactitem}
\end{definition}

It follows easily from this definition that $L(\ESU)$ is the projection of $L(\ES)$ 
onto \(\Sigma^E_D\), and so $L(\ES) = L(\ESU) \parallel L(\ES)$ holds.
Now, define $\US = \D \parallel \ESU$, we find that:
{
	\setlength\abovedisplayskip{1pt}
	\setlength\belowdisplayskip{0pt}
\begin{align}
 L(\N^S)
 &=  L(\D \parallel \ES \parallel \shuffle_{g\in \G} S_g) &\text{def.~\ref{def:simulation}} \notag \\
 &= L(\D) \parallel L(\ESU) \parallel  L(\ES) \parallel \shuffle_{g\in \G} L(S_g) \notag \\
 &= L(\D \parallel \ESU) \parallel  L(\ES) \parallel \shuffle_{g\in \G} L(S_g) \notag \\
 &= L(\US \parallel  \ES \parallel \shuffle_{g\in \G} S_g) \label{eq:acceptanceleaderstore}%
\end{align}%
}
\begin{lemma}[Monotonicity Lemma]\label{lem:mono}
	Let \(\tau\in \Upsilon\) and let \(\hat{L}_{\tau}\) be a cover of
	\(L(\US)\cap P_{\tau}\).  For every \(g\in \G\), let \(\underline{L}_g\) be a
	support of \( L_g\), and let \(\underline{S}_g\) be an LTS such that \(L(\underline{S}_g)=\pre(\underline{L}_g \cdot w_c^*(g))\):
{
	\setlength\abovedisplayskip{1pt}
	\setlength\belowdisplayskip{0pt}
	\begin{align*}
		(L(\US)\cap P_{\tau})          \parallel  L(\ES)  \parallel  \shuffle_{g\in\G} L(S_g)         \neq\emptyset & 
                \mbox{\; if{}f \; }  
                \hat{L}_{\tau} \parallel L(\ES) \parallel \shuffle_{g\in\G} L(\underline{S}_g) \neq\emptyset
	\enspace . \end{align*}
}
\end{lemma}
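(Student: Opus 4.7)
The $(\Leftarrow)$ direction is immediate: by definition of cover one has $\hat{L}_\tau \subseteq L(\US) \cap P_\tau$, and prefix-closure monotonicity applied to $\underline{L}_g \subseteq L_g$ gives $L(\underline{S}_g) \subseteq L(S_g)$. Hence the right-hand language is a subset of the left-hand one, and nonemptiness transfers. For the $(\Rightarrow)$ direction I would start from an arbitrary witness and transform it in two independent phases, one shrinking each $s_g$ using the support property and one enlarging $u$ using the cover property.

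Phase A (support). Let $w$ be the witness, and let $u \in L(\US) \cap P_\tau$ and $s_g \in L(S_g)$ be its projections onto $\Sigma_D^E$ and onto each $S_g$-alphabet. By the definition $L(S_g) = \pre(L_g \cdot w_c^*(g))$, either $s_g \in \G(r_c,\uw_c)^*$, in which case I set $\underline{s}_g = \varepsilon$, or $s_g = v_g\,\fw_c(g)\,w_c(g)^{k_g}$ with $v_g\,\fw_c(g) \in L_g$, in which case the support property provides $\underline{v}_g\,\fw_c(g) \in \underline{L}_g$ with $\underline{v}_g \preceq v_g$ and I set $\underline{s}_g = \underline{v}_g\,\fw_c(g)\,w_c(g)^{k_g} \in L(\underline{S}_g)$. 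Since $\underline{s}_g$ is obtained from $s_g$ by deleting only symbols of $\G(r_c) \cup \G(\uw_c)$, the Erasable Lemma (Lemma~\ref{lem:erasable}), together with the first-write preservation remark following Corollary~\ref{cor:readlang}, ensures that $u$ remains compatible with $\{\underline{s}_g\}_{g\in\G}$ following $\tau$.

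Phase B (cover). Use the cover property to obtain $\hat{u} \in \hat{L}_\tau$ with $u \preceq \hat{u}$. Because $u, \hat{u} \in P_\tau$ share the first-write subword $\tau$, the extra letters of $\hat{u}$ lie in $\G(r_d, w_d, w_c)$; in particular an extra $w_c(g)$ is only possible when $\fw_c(g)$ already occurs in $\tau$. Let $K_g$ count the $w_c(g)$-occurrences in $\hat{u}$ and define $\hat{s}_g = \underline{v}_g\,\fw_c(g)\,w_c(g)^{K_g}$ when $\fw_c(g) \in \tau$ and $\hat{s}_g = \varepsilon$ otherwise; each $\hat{s}_g \in L(\underline{S}_g)$ by the $w_c^*(g)$ tail built into $L(\underline{S}_g)$. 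What remains is to produce an interleaving of $\hat{u}$ and $\{\hat{s}_g\}_g$ that lies in $L(\ES)$ and projects on first writes to $\tau$.

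This interleaving construction is the main obstacle. The $r_d, w_d, \fw_c, w_c$ actions are already laid out by $\hat{u}$ (and internally consistent by $\hat{u} \in L(\US)$), but the $r_c, \uw_c$ letters of each $\underline{v}_g$ must be threaded, in order and before the $\fw_c(g)$ of $\hat{u}$, at positions compatible with $\ES$, despite the extra writes in $\hat{u}$ possibly disturbing the store trajectory used in Phase A. I would close this gap by a stepwise scheduling along $\hat{u}$: every write of a value $h$ that the Phase-A witness used to serve some $r_c(h)$ is inherited by $\hat{u}$ via $u \preceq \hat{u}$, so $r_c(h)$ can always be inserted immediately after that write and before the next store-changing action; an analogous observation for $\uw_c(h)$ uses that $\fw_c(h) \in \tau$ is common to $u$ and $\hat{u}$. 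Induction on $|\underline{v}_g|$, combined with the order-preserving embedding of $u$ into $\hat{u}$, then yields the required interleaving and witnesses nonemptiness of the right-hand side.
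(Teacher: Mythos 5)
Your overall strategy coincides with the paper's: the paper likewise splits the proof into a contributor phase that shrinks each $s_g$ to a support element (Lemma~\ref{lem:contributormono}, resting on Lemma~\ref{lem:erasable} exactly as your Phase~A does) and a leader phase that replaces $u$ by a cover element $\hat u$ and absorbs the extra $w_c(g)$'s into the $w_c(g)^*$ tails of the simulators (Lemma~\ref{lem:leadermono}); the re-interleaving you single out as the main obstacle is precisely Lemma~\ref{lem:coversuffices}. The one step of your sketch that would fail as written is the scheduling of useless writes in Phase~B. For $\uw_c(h)$ you verify only the \emph{pre}condition of the extended store ($W(h)=1$, guaranteed because $\fw_c(h)$ occurs in $\tau$ in both $u$ and $\hat u$); but $\uw_c(h)$ also sets the useless flag $b$ to $1$, which blocks \emph{every} subsequent read --- including the leader's --- until the next write. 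Since the letters added in passing from $u$ to $\hat u$ are factors of the form $r_d(g)$ or $\wb(g)\,r_d(g)^*$, placing a $\uw_c(h)$ at its old relative position can strand a freshly inserted $r_d(g)$ behind a raised flag, so the resulting word is not in $L(\ES)$. The repair is exactly what Lemma~\ref{lem:coversuffices} does: between two consecutive $\Sigma_D^E$-actions the contributor letters form a block in $\G(r_c)^*\,\G(\uw_c)^*$, and one slides each inserted leader factor so that inserted reads land \emph{before} such a block while inserted writes land \emph{after} it --- equivalently, in your formulation, every $\uw_c(h)$ must be rescheduled to sit immediately before the next write of $\hat u$. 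With that amendment (plus the routine check that consecutive $r_c$'s of a single $\underline{v}_g$ keep their order because their serving writes embed order-preservingly into $\hat u$), your induction on $|\underline{v}_g|$ goes through and the argument matches the paper's.
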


The proof of the monotonicity lemma breaks down into 
monotonicity for the contributors (Lemma~\ref{lem:contributormono}) 
and for the leader (Lemma~\ref{lem:leadermono}).

\begin{lemma}[Contributor Monotonicity Lemma]
\label{lem:contributormono}
For every \(g\in \G\), let \(\underline{L}_g\) be a
support of \( L_g\), and let \(\underline{S}_g\) be an LTS such that \(L(\underline{S}_g)=\pre(\underline{L}_g w_c^*(g))\).  Let $u \in \G(r_d, w_d)^*$ and \(\tau\in \Upsilon\):
{
	\setlength\abovedisplayskip{1pt}
	\setlength\belowdisplayskip{0pt}
\begin{align*}
	(u \parallel L(\ES) \parallel \shuffle_{g\in\G} L(S_g) )\cap P_{\tau}\neq\emptyset &
	\mbox{\; if{}f\; }  (u \parallel L(\ES) \parallel \shuffle_{g\in\G} L(\underline{S}_g) )\cap P_{\tau}\neq\emptyset \enspace .
\end{align*}
}
\end{lemma}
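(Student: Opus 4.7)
The plan is to separate the two directions: the backward inclusion is essentially free, and the forward direction reduces, after a case analysis on each factor, to an application of Lemma~\ref{lem:erasable}. For the backward direction, observe that \(\underline{L}_g \subseteq L_g\) (a support is a subset of its language), so \(\pre(\underline{L}_g\cdot w_c^*(g)) \subseteq \pre(L_g\cdot w_c^*(g))\), giving \(L(\underline{S}_g)\subseteq L(S_g)\); hence the right-hand side of the biconditional trivially implies the left-hand side.

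For the forward direction I would take a witness \(w\in (u \parallel L(\ES) \parallel \shuffle_{g\in\G} L(S_g))\cap P_\tau\) and split its contributor projection into factors \(x_g \in L(S_g)\), one per \(g\in\G\). Since \(L_g \subseteq \G(r_c,\uw_c)^*\,\fw_c(g)\), every \(x_g\) either lies in \(\G(r_c,\uw_c)^*\) or has the shape \(v_g\,\fw_c(g)\,w_c(g)^{n_g}\) with \(v_g\,\fw_c(g)\in L_g\); the first case holds precisely when \(g\notin \tau\), because the subsequence of first writes of \(w\) is exactly \(\tau\). For \(g\notin\tau\) I would erase \(x_g\) entirely (\(\varepsilon\in L(\underline{S}_g)\) by prefix closure), which is legal by Lemma~\ref{lem:erasable}. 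For \(g\in\tau\), the support property applied to \(v_g\,\fw_c(g)\in L_g\) yields some \(v'_g\,\fw_c(g)\in \underline{L}_g\) with \(v'_g\,\fw_c(g)\preceq v_g\,\fw_c(g)\); since \(v_g\) contains no \(\fw_c(g)\), the trailing \(\fw_c(g)\) of the subword is forced to match the trailing \(\fw_c(g)\) of \(v_g\,\fw_c(g)\), so \(v'_g\preceq v_g\) and \(v'_g\in \G(r_c,\uw_c)^*\). Replacing \(x_g\) by \(x'_g = v'_g\,\fw_c(g)\,w_c(g)^{n_g}\) amounts to erasing from \(v_g\) the reads and useless writes absent in \(v'_g\), so Lemma~\ref{lem:erasable} applies here as well. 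The resulting \(x'_g\) lies in \(L(\underline{S}_g)=\pre(\underline{L}_g\cdot w_c^*(g))\), as \(v'_g\,\fw_c(g)\,w_c(g)^{n_g}\) is a prefix of \(v'_g\,\fw_c(g)\cdot w_c^*(g)\).

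The main obstacle I anticipate is the bookkeeping around ``compatibility following \(\tau\)'', i.e., verifying that all these simultaneous erasures preserve not just compatibility but the first-write subsequence \(\tau\). Because every erasure only touches symbols in \(\G(r_c)\cup\G(\uw_c)\) and never any \(\fw_c\) symbol, the subsequence of first writes remains untouched, and the refinement of Lemma~\ref{lem:erasable} that preserves compatibility following a given sequence of first writes (explicitly noted in the paragraph just before the Monotonicity Lemma) applies. Beyond this bookkeeping the argument is purely syntactic, relying on prefix closure of each \(L(\underline{S}_g)\) and the preservation of the trailing count \(n_g\) of \(w_c(g)\).
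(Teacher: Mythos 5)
Your proof is correct and follows essentially the same route as the paper's: the backward direction from the inclusion \(L(\underline{S}_g)\subseteq L(S_g)\), and the forward direction by using the support property together with the fact that \(L_g\subseteq \G(r_c,\uw_c)^*\,\fw_c(g)\) forces the support witness to differ only by erased reads and useless writes, so that Lemma~\ref{lem:erasable} (and its refinement preserving compatibility following \(\tau\)) applies. Your version merely spells out the case analysis on the factors \(x_g\) and the matching of the trailing \(\fw_c(g)\) in more detail than the paper does.
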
%
\begin{proof}
	(\(\Leftarrow\)): It suffices to observe that 
	since \(\underline{L}_g\subseteq L_g\) we have 
	\(L(\underline{S}_g)\subseteq L(S_g)\) and we are done.
	(\(\Rightarrow\)): 
	Since \(L_g\subseteq \G(r_c,u_c)^* \fw_c(g)\) and \(\underline{L}_g\subseteq L_g\) we find
	that for every word \(w'\in L_g\setminus \underline{L}_g\) there
	exists a word \(w\in \underline{L}_g\) resulting from \(w'\) by erasing symbols in \(\G(u_c,r_c)\). 
	Hence, Lemma~\ref{lem:erasable} shows that erasing symbols in \(\G(u_c,r_c)\) does not affect compatibility.
	The proof concludes by observing that compatibility is further preserved for \(\tau\), and we are done.
	\qed
\end{proof}

The leader monotonicity lemma requires the following technical observation.

\begin{lemma} 
\label{lem:coversuffices}
Let \(\tau\in \Upsilon\) and \(L\subseteq \G(r_c, \fw_c, w_c, \uw_c)^*\) satisfying the following 
condition: if $\alpha \, \fw_c(g) \, \beta_1\beta_2 \in L$, then $\alpha \, \fw_c(g) \, \beta_1 \,  w_c(g)\, \beta_2 \in L$.  For every \(v, v' \in P_{\tau}\cap L(\ESU)\):
{
	\setlength\abovedisplayskip{1pt}
	\setlength\belowdisplayskip{0pt}
\begin{align*}
\mbox{if }   & v \parallel L(\ES) \parallel L \neq \emptyset & \mbox{ and } & v'\succeq v, 
&\mbox{ then }  v'\parallel L(\ES) \parallel L \neq \emptyset \enspace .
\end{align*}
}
\end{lemma}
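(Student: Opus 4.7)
The plan is to take any \(w \in v \parallel L(\ES) \parallel L\) and splice the ``extra'' symbols of \(v'\) into \(w\) at carefully chosen positions, producing a witness \(w' \in v' \parallel L(\ES) \parallel L\). Write \(v = y_1 \cdots y_n\) and decompose \(w = x_0\, y_1\, x_1 \cdots y_n\, x_n\), where each gap \(x_i \in \G(r_c,\uw_c)^*\). A key structural observation is that every \(x_i\) factors as \(\alpha_i \beta_i\) with \(\alpha_i \in r_c(g_i^*)^*\) and \(\beta_i \in \G(\uw_c)^*\), where \(g_i^*\) denotes the store value right after \(y_i\): the first \(\uw_c\) in \(x_i\) sets the useless flag \(b\) to~\(1\) and thereby blocks any further \(r_c\) inside the gap. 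Moreover \(x_0 = \varepsilon\), since the initial state \((g_0, W_0, 0)\) admits neither \(r_c\) (the current value \(g_0\) is outside \(\G\)) nor \(\uw_c\) (since \(W_0 \equiv 0\)). Let \(\phi\) be the increasing injection witnessing \(v \preceq v'\), extended by \(\phi(0)=0\) and \(\phi(n{+}1)=|v'|{+}1\), and let \(S_i = s_{i,1}\cdots s_{i,p_i}\) collect the symbols of \(v'\) sitting strictly between positions \(\phi(i)\) and \(\phi(i{+}1)\); since \(v, v' \in P_\tau\) share the same \(\fw_c\)-subsequence, each \(s_{i,k}\) lies in \(\G(r_d,w_d,w_c)\). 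The construction splices each \(S_i\) strictly between \(\alpha_i\) and \(\beta_i\):
\[
w' \;=\; S_0\, y_1\, \alpha_1 S_1 \beta_1\, y_2\, \alpha_2 S_2 \beta_2\, \cdots\, y_n\, \alpha_n S_n \beta_n.
\]

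To verify \(w' \in L(\ES)\) I would first show that the \(\ESU\)-state reached after processing \(y'_1\cdots y'_{\phi(i)}\) in \(v'\) coincides with the state reached after \(y_1\cdots y_i\) in \(v\): no \(\fw_c\) is inserted, so the write record \(W\) matches, and a brief case split on whether \(y_i\) is a write or an \(r_d\) shows that the current value matches as well. Tracing \(w'\) gap by gap: starting from this common state after \(y_i\) (with \(b=0\)), the reads of \(\alpha_i\) pass unchanged; then \(S_i\) executes exactly as in the \(\ESU\)-run of \(v'\) on the gap, because no \(s_{i,k}\) raises \(b\) and \(W\) is unchanged; the \(\uw_c\)s of \(\beta_i\) then fire using only the (unchanged) \(W\)-record; and \(y_{i+1}\) is enabled, either because it is a write (the only option when \(\beta_i \neq \varepsilon\), already forced in \(w\)) or because its expected read value matches the one produced by \(S_i\) (by validity of \(v'\) in \(\ESU\)). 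For the contributor projection, each inserted \(s_{i,k}=w_c(g)\) requires \(W(g)=1\) in \(\ESU\) on \(v'\) just before it fires, so \(\fw_c(g)\) occurs earlier in \(v'\); since \(\fw_c\) is never inserted, it occurs at some \(y_j\) with \(j \leq i\), hence strictly before the gap \(i\) in the contributor projection of \(w\). Applying the hypothesis on \(L\) once per inserted \(w_c\), in order, yields \(w'|_{\G(r_c,\fw_c,w_c,\uw_c)} \in L\).

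The main obstacle is reconciling the useless-flag semantics with inserted reads. Naively appending all of \(S_i\) after the whole gap \(x_i\) breaks down when \(\beta_i\) ends with \(\uw_c\) and \(S_i\) starts with an \(r_d\), since then \(b=1\) blocks the read; dually, inserting \(S_i\) before \(\alpha_i\) spoils the reads of \(\alpha_i\) whenever \(S_i\) changes the current value. Splicing \(S_i\) strictly between \(\alpha_i\) and \(\beta_i\) avoids both problems: at that point \(b\) is still \(0\), the inserted symbols in \(\G(r_d,w_d,w_c)\) keep \(b=0\), and the trailing \(\uw_c\)s of \(\beta_i\) leave the store in exactly the state that \(w\) presented just before \(y_{i+1}\), so the rest of \(w\) can be reused verbatim.
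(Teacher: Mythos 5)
Your proof is correct, but it takes a genuinely different route from the paper's. The paper first observes that, because \(v,v'\in P_{\tau}\cap L(\ESU)\), the symbols of \(v'\) missing from \(v\) necessarily group into factors of the form \(\wb(g)\,r_d(g)^*\) or \(r_d(g)\), and then proceeds by induction on the number of such factors, re-deriving a witness after each single insertion: a standalone read is placed at the very beginning of a gap of the witness (where the flag is still \(0\) and the current value is the one the read expects), while a write-factor is placed at the very end of a gap, immediately before the next leader-store symbol, so that the leading write resets the useless flag itself. You instead perform a single global splicing: you analyze the structure of the gaps of the witness \(w\) (reads of the current value followed by useless writes, since the first \(\uw_c\) blocks all further \(r_c\)) and insert each whole unmatched block \(S_i\) of \(v'\) at the unique point of the gap where \(b\) is still \(0\) but all of the gap's contributor reads have already fired, validating the block wholesale against the \(\ESU\)-run of \(v'\) rather than factor by factor. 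Both arguments rest on the same three facts --- no \(\fw_c\) is ever inserted because \(\tau\) is repetition-free, inserted \(w_c(g)\)'s are absorbed by the closure hypothesis on \(L\), and inserted \(r_d\)'s are enabled because \(v'\in L(\ESU)\) --- so the difference is one of bookkeeping: your one-shot construction avoids the induction and the factor decomposition, at the price of a slightly more delicate invariant, namely that the \((g,W)\)-components of the \(\ES\)-state along \(w'\) track the \(\ESU\)-state of \(v'\) at every matched position; this resynchronization after a non-empty \(\beta_i\) (where the current values of \(w\) and \(w'\) agree but may both differ from \(v'\)'s) versus after a non-empty \(S_i\) with \(\beta_i=\varepsilon\) (where \(w'\) follows \(v'\) but may differ from \(w\)) is asserted rather than spelled out, but it does go through.
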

Because \(v,v' \in P_{\tau}\cap L(\ESU)\) over alphabet
\(\Sigma^E_D=\G(r_d,w_d,\fw_c,w_c)\) and \(v'\succeq v\) we
find that \(v\) can be obtained from \(v'\) by erasing factors that are
necessarily of the form \(\wb(g)\; r_d(g)^*\) or \(r_d(g)\).
In particular \(v,v'\in P_{\tau}\) shows that
\(\proj_{\G(\fw_c)}(v)=\proj_{\G(\fw_c)}(v')=\tau\).\footnote{\(\proj_{\Sigma'}(w)\) returns the projection of \(w\) onto alphabet \(\Sigma'\).}
The proof of Lem.~\ref{lem:coversuffices} is by induction on the number of those factors. 

\begin{lemma}[Leader Monotonicity Lemma]
	\label{lem:leadermono}
	Let \(\tau\in \Upsilon\) and \(L\subseteq \G(r_c, \fw_c, w_c, \uw_c)^*\) satisfying: if $\alpha \, \fw_c(g) \, \beta_1\beta_2 \in L$, then $\alpha \, \fw_c(g) \, \beta_1 \,  w_c(g) \beta_2 \in L$.
	For every cover \(\hat{L}_{\tau}\) of  \(P_{\tau}\cap L(\US)\):
	\[
	\setlength\abovedisplayskip{1pt}
	\setlength\belowdisplayskip{0pt}
	(P_{\tau}\cap L(\US)) \parallel L(\ES) \parallel L \neq \emptyset \quad\text{if{}f}\quad \hat{L}_{\tau}\parallel L(\ES) \parallel L\neq\emptyset \enspace .\]
\end{lemma}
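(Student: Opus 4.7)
The plan is to prove the two directions separately, using the cover property of $\hat{L}_\tau$ together with Lemma~\ref{lem:coversuffices}, which supplies the real combinatorial content.

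For the $(\Leftarrow)$ direction the argument is immediate. By the definition of cover, $\hat{L}_\tau \subseteq P_\tau \cap L(\US)$. Hence any word witnessing non-emptiness of $\hat{L}_\tau \parallel L(\ES) \parallel L$ is automatically a witness of $(P_\tau \cap L(\US)) \parallel L(\ES) \parallel L$.

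For the $(\Rightarrow)$ direction, assume the left-hand side is non-empty and fix a witness trace; by definition of $\parallel$, this yields some $v \in P_\tau \cap L(\US)$ such that $v \parallel L(\ES) \parallel L \neq \emptyset$. Since $\US = \D \parallel \ESU$ we have $L(\US) \subseteq L(\ESU)$, so $v \in P_\tau \cap L(\ESU)$. The cover property provides $v' \in \hat{L}_\tau$ with $v \preceq v'$, and the same inclusion gives $v' \in P_\tau \cap L(\ESU)$. Now the hypotheses of Lemma~\ref{lem:coversuffices} are exactly in place: $v, v' \in P_\tau \cap L(\ESU)$ with $v' \succeq v$, the language $L$ satisfies the required insertion-closure condition, and $v \parallel L(\ES) \parallel L \neq \emptyset$. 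Its conclusion gives $v' \parallel L(\ES) \parallel L \neq \emptyset$, so $\hat{L}_\tau \parallel L(\ES) \parallel L \neq \emptyset$ as required.

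The conceptual work is entirely done by Lemma~\ref{lem:coversuffices}; once it is in place, Leader Monotonicity is just the observation that going from $v$ to a $\preceq$-larger $v'$ in the leader-store language is precisely a cover replacement, and that the closure property of $L$ allows the contributor traces to absorb the extra writes of $v'$. The only small sanity check is the inclusion $L(\US) \subseteq L(\ESU)$ used to land the witnesses inside $L(\ESU)$, which is immediate from $\US = \D \parallel \ESU$ and the definition of asynchronous product. I do not anticipate any real obstacle: there is no induction to set up and no extra case analysis beyond invoking Lemma~\ref{lem:coversuffices}.
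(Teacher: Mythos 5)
Your proof is correct and follows essentially the same route as the paper's: the $(\Leftarrow)$ direction from the inclusion $\hat{L}_{\tau}\subseteq P_{\tau}\cap L(\US)$, and the $(\Rightarrow)$ direction by picking a witness $v$, lifting it to a cover element $v'\succeq v$, noting both lie in $P_{\tau}\cap L(\ESU)$ via $\US=\D\parallel\ESU$, and invoking Lemma~\ref{lem:coversuffices}. No differences worth noting.
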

\begin{proof}
	(\(\Leftarrow\)): It follows from \(\hat{L}_{\tau}\subseteq (P_{\tau} \cap L(\US))\).
	(\(\Rightarrow\)): We conclude from the hypothesis that there exists \(w\in P_{\tau}\cap L(\US)\)
	such that \(w \parallel L(\ES)\parallel L\neq \emptyset\).
	Since \(\hat{L}_{\tau}\) is a cover \(P_{\tau}\cap L(\US)\), we find that there exists \(w'\in \hat{L}_{\tau}\) such that \(w'\succeq w\)
	ans \(w'\in P_{\tau}\cap L(\US)\). Finally, \(\US= \D\parallel \ESU\) shows that \(w,w'\in P_{\tau}\cap L(\ESU)\),
	hence that \(w' \parallel L(\ES) \parallel L \neq \emptyset\) following Lem.~\ref{lem:coversuffices}, and finally that \(\hat{L}_{\tau} \parallel L(\ES)\parallel L\neq\emptyset \) because \(w'\in\hat{L}_{\tau}\).
	\qed
\end{proof}

%
%

\section{Complexity of safety verification of non-atomic networks}\label{sec:complexity}

%

Recall that the \emph{safety verification problem} for machines $M_D$ and $M_C$ consists in 
deciding if the \( (\lts{M_D},\lts{M_C})\)-network is safe. 
Notice that the size of the input is the 
size of the machines, and not the size of its LTSs, which might even be infinite.
We study the complexity of safety verification for different machine classes.

Given two classes of machines \texttt{D}, \texttt{C} (like
finite-state machines or pushdown machines, see below), we define
 \emph{the class of} (\texttt{D},\texttt{C})-\emph{networks} as the set  \(
\set{(\lts{D},\lts{C})\text{-network} \mid D\in\mathtt{D},C\in\mathtt{C}}\)
and denote by \(\mathtt{Safety}(\texttt{D},\texttt{C})\) the restriction of the safety 
verification problem to pairs of machines $M_D \in \texttt{D}$ and $M_C\in \texttt{C}$.
We study the complexity of the problem when leader and
contributors are {\em finite-state machines} (FSM) and {\em pushdown machines}
(PDM).%
\footnote{We also define FSA and PDA as the automaton (i.e. language acceptor)
counterpart of FSM and PDM, respectively. As expected, definitions are
identical except for an additional accepting component given by a subset of
states in which the automaton accepts.}
In this paper a FSM is just another name for a
finite-state LTS, and the LTS \(\lts{A}\) of a FSM \(A\) is \(A\), i.e.
\(\lts{A}=A\). We define the size \(|A|\) of a FSM \(A\) as the size of
its transition relation.  
A (read/write) {\em pushdown machine} is a tuple $P = (\Q,
\G(r,w), \Gamma, \Delta, \gamma_0, q_0)$, where $\Q$ is a finite set of {\em
states} including the \emph{initial state} \(q_0\), $\Gamma$ is a {\em stack
alphabet} that contains the \emph{initial stack symbol} \(\gamma_0\), and $\Delta
\subseteq (\Q \times \Gamma) \times (\G(r,w)\cup\set{\varepsilon}) \times (\Q
\times \Gamma^*)$ is a set of {\em rules}. A {\em configuration} of a PDM $P$
is a pair $(q, y) \in \Q \times \Gamma^*$.  The LTS \(\lts{P}\) over
\(\G(r,w)\) associated to $P$ has \(\Q \times \Gamma^*\) as states,
$(q_0,\gamma_0)$ as initial state, and a transition $(q, \gamma y) \by{a}
(q',y'y)$ if{}f $(q,\gamma,a,q',y')\in \Delta$.  Define the size of a
rule \((q,\gamma,a,q',y')\in \Delta\) as \(|y'|+5\) and the size \(|P|\) of a
PDM as the sum of the size of rules in \(\Delta\).%

\smallskip %
\noindent %
{\it Determinism.} %
We show that lower bounds (hardness)
for the safety verification problems can be achieved already for {\em deterministic}
machines.
An LTS \(\T\) over a read-write alphabet is {\em deterministic} if for every 
state $s$ and every pair of transitions $s \by{a_1} s_1$ and $s \by{a_2} s_2$,
if $s_1 \neq s_2$ then $a_1$ and $a_2$ are reads, and they read different values.
Intuitively, for any state of a store \(\S\), a deterministic LTS \(\T\)
can take at most one transition in \(\S\parallel \T\).
A \( (\D,\C)\)-network is deterministic if $\D$ and \(\C\) are deterministic LTSs.
Given a class $\mathcal{X}$ of machines, we denote by $\mathrm{d}\mathcal{X}$ the
subclass of machines \(M\) of $\mathcal{X}$ such that
\(\lts{M}\) is a deterministic LTS over the read-write alphabet.
Notice that this
notion does not coincide with the usual definition of a deterministic
automaton. 

The observation is that a network with non-deterministic processes
can be simulated by deterministic ones while preserving safety; intuitively,
the inherent non-determinism of interleaving can simulate non-deterministic choice in the machines.

\begin{lemma}[Determinization Lemma]\label{lem:determinism}
There is a polynomial-time procedure that takes a pair $(\D,\C)$ of LTSs and outputs
a pair $(\D',\C')$ of deterministic LTSs such that the $(\D,\C)$-network is safe 
if{}f the $(\D',\C')$-network is safe.
\end{lemma}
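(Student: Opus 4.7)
The plan is to add fresh ``guard'' values that disambiguate each non-deterministic state of $\D$ and $\C$, and to equip $\C'$ with auxiliary ``helper'' modes that write these guards into the shared store on demand. For each state $s$ of $\D$ or $\C$ whose outgoing transitions $s \by{a_1} s_1, \dots, s \by{a_k} s_k$ violate the determinism condition, fresh distinct values $v^{s,1}, \dots, v^{s,k}$ are added to $\G$ and every transition $s \by{a_i} s_i$ is replaced by the two-step sequence $s \by{\rb(v^{s,i})} s^{\mathit{mid},i} \by{a_i} s_i$, where $\rb$ is the appropriate read action ($r_d$ for $\D'$, $r_c$ for $\C'$). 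At the modified state $s$ the outgoing transitions are reads of pairwise distinct fresh values, and each $s^{\mathit{mid},i}$ has a single outgoing transition, so $\D'$ and $\C'$ both satisfy the determinism condition.

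To make the guards actually writable from inside the network, $\C'$ is given a new initial state $q_0$ whose outgoing transitions are reads of pairwise distinct \emph{activation} values, each routing to a short deterministic sub-machine: one branch (triggered by the initial store value $g_0$) enters a \emph{main} mode that executes the determinized original $\C$; one branch per guard value $v^{s,i}$ is a \emph{helper} whose unique action is $w_c(v^{s,i})$; and one branch per value $g \in \G$ is a \emph{writer} whose unique action is $w_c(g)$, used to restore the store after a guard read. Since the transitions out of $q_0$ are reads of pairwise distinct values, $\C'$ remains deterministic. The activation values themselves are written by a short sequential prelude at the start of the main mode; because contributors are unbounded and execution is asynchronous, some $\C'$ can always be kept paused at the appropriate point of this prelude, so any desired activation value is always reachable in the store.

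Correctness is shown by two translations between unsafe traces. Forward: given an unsafe trace $\sigma$ of the $(\D,\C)$-network, insert, immediately before each guarded read realising a non-deterministic choice of branch $i$ at state $s$, one fresh helper contributor that writes $v^{s,i}$; if in addition $a_i$ is a read $\rb(g)$ or an $\varepsilon$-transition, insert a writer contributor that writes $g$ (respectively, the value the store held just before the step) between the guard read and the following original read, so that the store is restored to the value expected by the rest of the trace. Backward: any unsafe trace of the $(\D',\C')$-network can be rearranged, by commuting independent actions, into an ``orderly'' form in which helper and writer writes occur directly adjacent to their consuming reads; erasing these auxiliary actions together with the guard reads then yields a valid unsafe trace of the $(\D,\C)$-network where each erased guard read corresponds to the matching non-deterministic choice. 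The main obstacle is precisely the read and $\varepsilon$ cases, since a bare guard read would otherwise leave the store in $v^{s,i}$ rather than the value needed by the subsequent action; the writer helpers together with the scheduling flexibility granted by unbounded contributors resolve this uniformly. Only $O(|\D|+|\C|+|\G|)$ fresh values and new states are added, so the construction runs in polynomial time.
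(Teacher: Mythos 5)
Your overall idea---resolve non-determinism by letting asynchrony pick among fresh store values---is the same one the paper uses, but your realization has a genuine soundness gap in the backward direction. The culprit is the family of \emph{writer} branches: once activated, a contributor whose unique remaining action is $w_c(g)$ for an original value $g\in\G$ can fire at \emph{any} point the scheduler likes, not only ``directly after a guard read, restoring the previous value''. Since you add one such writer for every $g\in\G$ and activate them via a prelude that runs at the very start of main mode, the $(\D',\C')$-network can inject arbitrary values of $\G$ into the store at arbitrary times. Concretely, if $\C$ contains $q_0 \by{r_c(g_1)} q_1 \by{w_c(\#)} q_2$ and no process of the original network ever writes $g_1$ (the network is safe), but some unrelated state is non-deterministic so your construction is triggered, then in $(\D',\C')$ a writer for $g_1$ writes $g_1$, a main-mode contributor reads it and writes $\#$: the new network is unsafe while the old one is safe. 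Your ``rearrange into orderly form and erase'' argument cannot repair this, because erasing the writer's $w_c(g)$ orphans a read $\rb(g)$ that genuinely depended on it; nothing in the construction forces a writer to reproduce the value the store held before the guard read. (A secondary, fixable problem: your main mode is entered by reading $g_0$, but in this model no read of the initial value $g_0$ is ever enabled---the store only has transitions $g\by{\rb(g)}g$ for $g\in\G$ and $g_0\notin\G$---so main mode is unreachable as written.)

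The paper's gadget avoids both issues by inverting the direction of control and keeping all auxiliary writes on fresh values only. The process that faces a non-deterministic choice (say the leader at $q$ after reading $g$) \emph{requests} a decision by writing a fresh value $\mathbf{nd}$; contributors answer by writing the fresh values $0$ then $1$ from their initial state; the leader reads whichever of $0,1$ the schedule delivers, and then \emph{restores the store itself} with $w_d(g)$, which it can do deterministically because $g$ is recorded in its finite control. Thus contributors never acquire the ability to write any original value they could not already write, and no global ``restore'' machinery is needed. If you replace your writers by this self-restoration step (the process re-writes the value it remembers from its own guard/branch) and restrict contributor-side auxiliary writes to fresh values read only inside the gadget, your argument goes through along essentially the paper's lines.
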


We prove the lemma by eliminating non-determinism as follows.
Suppose \(\D\) is non-deterministic by having transitions \( (q,r_d(g), q')\)
and \( (q,r_d(g),q'')\).  To resolve this non-determinism, we define $\D'$ and
$\C'$ by modifying $\D$ and $\C$ as follows:
we add new states $q_1, q_2, q_3, q_4$ to $\D$ and replace the two transitions
\( (q,r_d(g),q')\) and \( (q,r_d(g),q'')\) by the transitions \(
(q,r_d(g),q_1)\), \((q_1,w_d(\mathbf{nd}),q_2)\), \((q_2,r_d(0),q_3)\),
\((q_3,w_d(g),q')\), \((q_2,r_d(1),q_4)\) and \((q_4,w_d(g),q'')\).  Let $q_0$
be the initial state of $\C$.  We add two new states $\hat{q}$ and $\tilde{q}$
to $\C$ and the transitions \( (q_0, r_c(\mathbf{nd}),\hat{q}) (\hat{q},
w_c(0), \tilde{q}) (\tilde{q},w_c(1),q_0) \).  Finally, we extend the store to
accommodate the new values $\set{0,1,\mathbf{nd}}$. It follows that \(\D'\) has
one fewer pair of non-deterministic transitions than \(\D\).
Similar transformations can eliminate other non-deterministic transitions
(e.g., two writes from a state) or non-determinism in $\C$. 

\subsection{Complexity of Safety Verification for FSMs and PDMs}

We characterize the complexity of the safety verification problem of non-atomic
networks depending on the nature of 
the leader and the contributors. We show:
\hspace*{\stretch{1}}\begin{tabular}{|lll|l|}
\hline
%
	$\mathtt{Safety}(\mathrm{dFSM},\mathrm{dFSM})$,& $\mathtt{Safety}(\mathrm{PDM},\mathrm{FSM})$,&$\mathtt{Safety}(\mathrm{FSM},\mathrm{PDM})$& $\mathrm{coNP}$-complete\\
	$\mathtt{Safety}(\mathrm{dPDM},\mathrm{dPDM})$,& $\mathtt{Safety}(\mathrm{PDM},\mathrm{PDM})$ && $\mathrm{PSPACE}$-complete\\
\hline
\end{tabular}\hfill\vspace{0pt}

\begin{theorem}{}
\label{th:nphard-dfa}
\(\mathtt{Safety}(\mathrm{dFSM},\mathrm{dFSM})\) is $\mathrm{coNP}$-hard.
\end{theorem}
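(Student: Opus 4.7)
The plan is to reduce 3SAT to the complement of $\mathtt{Safety}(\mathrm{dFSM},\mathrm{dFSM})$, thereby establishing coNP-hardness. Given an instance $\varphi = C_1 \wedge \cdots \wedge C_m$ over variables $x_1,\ldots,x_n$, I will construct deterministic finite-state machines $\D$ and $\C$, of size polynomial in $|\varphi|$, such that the $(\D,\C)$-network admits an unsafe trace if{}f $\varphi$ is satisfiable. The central mechanism will be the determinization trick used in the proof of Lem.~\ref{lem:determinism}: a single deterministic contributor that writes $0$ and then $1$ in sequence can, through the parameterized number of contributors and the freedom of interleaving, present either value to a subsequent reader.

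The leader $\D$ operates in three phases. In Phase~1 it iterates over variables: for each $i \in \set{1,\ldots,n}$ it writes a query value $q_i$, reads some $b_i^{\beta}$ with $\beta \in \set{0,1}$ (the two read transitions branching the leader into different successor states), then writes a commit value $c_i^{\beta}$. In Phase~2 it iterates over clauses: for each $j$ it writes $v_j$ and waits to read $s_j$. Finally it writes $d$. The contributor $\C$ selects one of three roles by its first read. Role~1 (triggered by reading $q_i$) writes $b_i^0$ and then $b_i^1$ and halts. Role~2 (triggered by reading $c_i^{\beta}$) enters a witness state for the literal $x_i$ if $\beta=1$ and for $\neg x_i$ if $\beta=0$; from there, for each clause $C_j$ containing that literal, the contributor reads $v_j$ and writes $s_j$. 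Role~3 (triggered by reading $d$) writes $\#$. By inspection, in every state of $\D$ and $\C$ the outgoing transitions are either a single write or several reads of pairwise distinct values, so both LTSs are deterministic in the sense of the paper.

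For correctness, suppose $\alpha=(\alpha_1,\ldots,\alpha_n)$ satisfies $\varphi$. I exhibit an unsafe trace as follows: in Phase~1 step $i$, a Role~1 contributor writes $b_i^0$; if $\alpha_i=0$ the leader reads it immediately, otherwise the contributor also writes $b_i^1$ first and the leader reads $b_i^1$. The leader then writes $c_i^{\alpha_i}$, and we schedule enough (at most $m$) Role~2 contributors to read this commit before the leader's next action, thereby covering every clause $C_j$ that contains the literal induced by $(x_i,\alpha_i)$. Since $\alpha$ satisfies every clause, each $C_j$ acquires at least one Role~2 witness. In Phase~2 those witnesses answer each $v_j$ with $s_j$, and Phase~3 triggers $w_c(\#)$. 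Conversely, any unsafe trace requires the leader to complete Phase~2, so for each $j$ some Role~2 contributor wrote $s_j$; that contributor must previously have read some $c_i^{\beta}$, and the leader writes $c_i^{\beta}$ only with $\beta$ equal to its Phase~1 guess $\alpha_i$. Hence the guessed assignment satisfies every clause, so $\varphi$ is satisfiable.

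The main subtlety will be controlling the stray $b_i^1$ writes that Role~1 contributors issue after the leader has moved past Phase~1 step $i$: if one such write falls between the leader's $c_i^{\alpha_i}$ and a Role~2 read, or between a $v_j$ and the witness's $s_j$, the store is corrupted at an inopportune moment. The fix is that we only need the existence of one convenient interleaving to witness unsafety: in the satisfiable case we schedule each Role~1 contributor's two writes back-to-back so that both complete before the leader's commit, and stray writes occurring later can be deferred until after the relevant witness has already written $s_j$. No analogous complication arises in the other direction, which is a simple structural argument about any unsafe trace.
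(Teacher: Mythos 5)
Your proposal is correct and follows essentially the same reduction as the paper: 3SAT to the complement of safety, with the leader fixing each variable via the write-$0$-then-write-$1$ determinization trick, contributors that latch onto the committed value serving as persistent witnesses, and a final clause-checking phase ending in $w_c(\#)$. The only (immaterial) differences are that you have witnesses confirm clause satisfaction directly via $s_j$ rather than report variable values for the leader to check locally, and that you build determinism into the gadgets directly instead of invoking Lemma~\ref{lem:determinism} afterwards; your handling of the stray second proposal write (simply never scheduling it, since only one witnessing interleaving is needed) is the right resolution.
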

We show hardness by a reduction from 3SAT to the complement of the safety verification
problem.
Given a 3SAT formula, we design a non-atomic network in which the leader and contributors
first execute a protocol that determines an assignment to all variables, and uses subsets
of contributors to store this assignment.
For a variable $x$, the leader writes $x$ to the store, inviting proposals for values.
On reading $x$, contributors non-deterministically write either ``$x$ is 0'' or ``$x$ is 1''
on the store, possibly over-writing each other.
At a future point, the leader reads the store, reading the proposal that was last written,
say ``$x$ is 0.''
The leader then writes ``commit $x$ is 0'' on the store.
Every contributor that reads this commitment moves to a state where it returns $0$ every time
the value of $x$ is asked for.
Contributors that do not read this message are stuck and do not participate further.
The commitment to $1$ is similar.
This protocol ensures that each variable gets assigned a consistent value. 

Then, the leader checks that each clause is satisfied by 
querying the contributors for the values of variables (recall that contributors
return consistent values) and checking each clause locally. 
If all clauses are satisfied, the leader writes a special symbol \(\#\).
The safety verification problem checks that \(\#\) is never written, which happens
if{}f the formula is unsatisfiable.
Finally, Lemma~\ref{lem:determinism} ensures all processes are deterministic.

\begin{theorem}{}
\label{thm:npeasy}
$\mathtt{Safety}(\mathrm{PDM},\mathrm{FSM})$ is in $\mathrm{coNP}$.
\end{theorem}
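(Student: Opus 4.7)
The plan is to show that the complement of $\mathtt{Safety}(\mathrm{PDM},\mathrm{FSM})$ is in $\mathrm{NP}$. By Proposition~\ref{prop:acceptsimulation}, $\N$ is unsafe iff $L(\N^S)\cap P_\tau\neq\emptyset$ for some $\tau\in\Upsilon_\#$. Unfolding $\N^S$, this is equivalent to exhibiting $\tau=\fw_c(g_1)\cdots\fw_c(g_m)\in\Upsilon_\#$, a trace $u\in L(\D)$, and for each $g_i$ a word $v_{g_i}\fw_c(g_i)\in L_{g_i}$ and an integer $k_{g_i}\geq 0$, such that $u\parallel L(\ES)\parallel \shuffle_i v_{g_i}\fw_c(g_i)w_c(g_i)^{k_{g_i}}$ contains a trace in $P_\tau$. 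The goal is to expose a polynomial-size certificate of this fact and to verify it in polynomial time using the fact that emptiness of a pushdown machine is in $\mathrm{P}$.

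First I would argue that every $v_{g_i}$ may be assumed of length at most $|\C|$. Since $\C$ is an FSM, $\C^E$ has the same (polynomial) number of states, and $v_{g_i}\fw_c(g_i)\in L_{g_i}$ labels an accepting path of $\C^E$. If this path is longer than $|\C|$ it revisits some control state, and excising the corresponding loop yields a shorter $v_{g_i}'\fw_c(g_i)\in L_{g_i}$; because the excised factor lies in $\G(r_c,\uw_c)^*$, Lemma~\ref{lem:erasable} ensures that the new multiset stays compatible with $u$. Iterating shortens every $v_{g_i}$ to length at most $|\C|$.

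The $\mathrm{NP}$ algorithm then proceeds by guessing (i)~$\tau\in\Upsilon_\#$ (at most $|\G|$ symbols); (ii)~for each $g_i$ in $\tau$, a word $v_{g_i}\fw_c(g_i)\in L_{g_i}$ of length at most $|\C|$, certified by replaying its accepting path in $\C^E$; and (iii)~a linear ordering $\pi$ of the $O(|\G|\cdot|\C|)$ ``scheduled'' simulator letters (the symbols of each $v_{g_i}\fw_c(g_i)$) that respects the internal order of each $v_{g_i}\fw_c(g_i)$ and places the $\fw_c(g_i)$'s in the order dictated by $\tau$. From these guesses I would construct a PDM $\D'$ of polynomial size obtained from $\D$ by augmenting its control with (a) the current store value, (b) a pointer into $\pi$, and (c) the useless-write flag of $\ES$; the write record $W$ of $\ES$ is then redundant, since it is determined by the pointer. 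At every step $\D'$ non-deterministically (1) simulates one transition of $\D$, enforcing store consistency on reads, (2) consumes the next letter of $\pi$, updating the store and flag according to the rules of $\ES$, or (3) emits a ``free'' write $w_c(g_i)$ for any $g_i$ whose $\fw_c(g_i)$ has already been consumed in $\pi$, resetting the store to $g_i$. The algorithm accepts iff $L(\D')$ contains a trace that exhausts $\pi$, which can be decided in polynomial time by the standard emptiness algorithm for pushdown machines.

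The main obstacle is that the naive product $\D\parallel\ES\parallel\shuffle_g S_g$ has $\Omega(|\C|^{|\G|})$ finite control states; the trick that keeps $\D'$ polynomial is the $\mathrm{NP}$ guess of $\pi$, which serialises the shuffle into a single polynomial-size pointer. Soundness is immediate, since any accepting run of $\D'$ directly unfolds to a trace of $L(\N^S)\cap P_\tau$ and hence, by the Simulation Lemma, to an unsafe trace of $\N$. Completeness follows from the polynomial-size shortening of the $v_{g_i}$'s together with the observation that any compatible interleaving on the contributor side induces some valid linearization $\pi$.
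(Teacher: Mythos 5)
Your proposal is correct and follows essentially the same route as the paper's proof: guess $\tau\in\Upsilon_\#$, shrink each contributor witness to a simple (loop-free) path of $\C^E$ using Lemma~\ref{lem:erasable} (the paper phrases this via the support $\underline{L}_g$ and Lemma~\ref{lem:contributormono}), guess a polynomial-size linearization of the shuffle, and reduce to pushdown emptiness against a polynomial-size finite-state component in which the write record is determined by the position in the guessed interleaving. The only difference is presentational: the paper packages the store and the serialized contributor behaviour as two FSAs ($A_1$, $A_2$, the latter with $w_c(g)$ self-loops playing the role of your ``free'' writes) and composes them with the leader's CFG via $\bowtie$, whereas you fold the same information into the PDM's control state.
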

\begin{proof}
Fix a \((\D,\C)\)-network \(\N\), where 
$P_D$ is a PDM generating $\D=\lts{P_D}$, and \(\C\) is a FSM.
Hence $L(\D)$ is a context-free language and $L(\C)$ is regular.
Prop.~\ref{prop:acceptsimulation} and
Def.~\ref{def:simulation} (of \(\N^S\)) show
that the \((\D,\C)\)-network \(\N\) is accepting if{}f 
\(L(\D\parallel \ES \parallel \shuffle_{g\in \G} S_g)\cap P_{\Upsilon_{\#}}\neq \emptyset\).
Since \(\C\) is given by a FSM, so is \(\C^{E}\).
Further,
\(L_g=L(\C^E)\cap \G(r_d,u_c)^* \fw_c(g)\) has a support captured by those paths in \(\C^E\) starting from the initial state and whose label ends by
\(\fw_c(g)\) and in which no state is entered more than once.
	Therefore if \(\C^E\) has \(k\) states then the set of paths starting from the initial state, of length at most \(k+1\)
	and whose label ends with \(\fw_c(g)\) is a support, call it \(\underline{L}_g\), of  \(L_g\). 
	Next, Lem.~\ref{lem:contributormono} shows that deciding \(L(\D\parallel \ES
\parallel \shuffle_{g\in \G} S_g)\cap P_{\Upsilon_{\#}}\neq
\emptyset\) is
	equivalent to \(L(\D\parallel \ES \parallel \shuffle_{g\in \G}
	\pre(\underline{L}_g\cdot w_c(g)^*))\cap P_{\Upsilon_{\#}}\neq \emptyset\).

	Note that this last check does not directly provide a NP algorithm for non-safety because,
	due to the write records, \(\ES\) is exponentially larger than \(|\G|\).
	So, we proceed by pushing down sequences of first writes and obtain the following equivalent statement: 
	\(L(\D)\parallel (L(\ES)\cap P_{\Upsilon_{\#}}) \parallel (\shuffle_{g\in \G} L(\pre(\underline{L}_g\cdot w_c(g)^*)) \cap P_{\Upsilon_{\#}}) \neq \emptyset\).
	
	Now, we get an NP algorithm as follows: 
\begin{inparaenum}[\upshape(\itshape a\upshape)]
\item 	guess \(\tau\in \Upsilon_{\#}\) (this can be done in time polynomial in \(|\G|\));
\item 	construct in polynomial time a FSA \(A_1\) for \(L(\ES)\cap P_{\tau}\)
	(\(A_1\) results from \(\ES\) by keeping the \(|\tau|\) write records corresponding to \(\tau\));
\item   for each \(g\in \tau\), guess \(z_g \in \underline{L}_g\) 
	(the guess can be done in polynomial time);
\item   guess \(z\in (\shuffle_{g\in \G} z_g)\cap P_{\tau}\) 
	(this fixes a sequence of reads, useless writes and first writes of the contributors according to \(\tau\));
\item 	construct in polynomial time a FSA \(A_2\) such that \(L(A_2)\) is the
	least language containing \(z\) and if \(\alpha \fw_c(g) \beta_1 \beta_2 \in L(A_2)\) then \(\alpha \fw_c(g) \beta_1 w_c(g) \beta_2 \in L(A_2)\)
	(intuitively we add selfloops with write actions of \(\G(w_c)\) to the FSA accepting \(z\) such that \(w_c(g)\) occurs provided \(\fw_c(g)\) has previously occurred);
\item   construct in time polynomial in \(|P_D|\) a context-free grammar (CFG) \(G_D\) such that \(L(G_D)=L(P_D)\);
\item 	construct in polynomial time a CFG \(G\)  such that \(L(G)=L(G_D)\parallel L(A_1) \parallel L(A_2)\) 
	(this can be done in time polynomial in \(|G_D|+|A_1|+|A_2|\) as stated in Prop.~\ref{prop:bowtie}, Sect.~\ref{sec:language});
\item 	check in polynomial time whether \(L(G)\neq \emptyset\).
\end{inparaenum} 
\qed
\end{proof}

We continue with the following results showing that even if all processes but
the leader are given a stack then the safety verification problem remains in
coNP. A detailed proof is given in Appendix~\ref{sec:npNotSoEasiness}.

\begin{theorem}{}
\label{thm:npnotsoeasy}
$\mathtt{Safety}(\mathrm{FSM},\mathrm{PDM})$ is in $\mathrm{coNP}$.
\end{theorem}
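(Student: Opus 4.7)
The plan is to mirror the proof of Theorem~\ref{thm:npeasy}, but with the roles of leader and contributors swapped. This swap creates the main technical obstacle: the leader side, now regular, is easy to approximate by short witnesses, but the contributor side, a shuffle of $|\G|$ context-free languages, is not itself context-free, so the polynomial-size product CFG built in Theorem~\ref{thm:npeasy} is no longer available.

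First, I would apply Prop.~\ref{prop:acceptsimulation} and Eq.~(\ref{eq:acceptanceleaderstore}) to reduce non-safety to the existence of $\tau \in \Upsilon_\#$ with $(L(\US) \cap P_\tau) \parallel L(\ES) \parallel \shuffle_{g \in \G} L(S_g) \neq \emptyset$, and nondeterministically guess such a $\tau$. Since $\D$ is an FSM and, with $\tau$ fixed, only the $|\tau|{+}1$ monotone write-records consistent with $\tau$ are reachable, both $\US \cap P_\tau$ and $\ES \cap P_\tau$ are captured by polynomial-size FSAs. By the Leader Monotonicity Lemma~\ref{lem:leadermono}, $L(\US) \cap P_\tau$ may be replaced by the cover $\hat{L}_\tau$ consisting of the labels of the simple paths of the first FSA, all of polynomial length; note that the closure condition of Lem.~\ref{lem:leadermono} is satisfied by $L(\ES) \parallel \shuffle_{g\in\G} L(S_g)$ because each $L(S_g)$ is closed under appending (and hence inserting) extra $w_c(g)$'s once $\fw_c(g)$ has occurred. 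I would guess $w_D \in \hat{L}_\tau$ and, as in the proof of Theorem~\ref{thm:npeasy}, build in polynomial time an FSA $A$ recognising $\{w_D\} \parallel (L(\ES) \cap P_\tau)$ augmented with self-loops that allow $w_c(g)$ to occur at any position after $\fw_c(g)$.

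The remaining obligation is to decide whether $L(A) \parallel \shuffle_{g \in \G} L(S_g) \neq \emptyset$, and this is the hard part. Each $L(S_g)$ is the language of a polynomial-size PDM derived from $P_C$, but the shuffle of several CFLs is not context-free in general, so the product construction used in Theorem~\ref{thm:npeasy} fails. To overcome this I plan to exploit two structural properties: (i) by $\tau$, the first writes occur in a fixed order and each $S_g$ contributes exactly one of them; (ii) once $\fw_c(g)$ has been issued, the remaining contribution of $S_g$ is a regular repetition of $w_c(g)$. Together these bound by $O(|\G|)$ the number of PDMs that are simultaneously in a ``context-free'' phase, which allows the whole shuffle intersected with $L(A)$ to be captured by a context-free grammar $G$ of polynomial size and bounded index $O(|\G|)$---this is precisely where the novel language-theoretic constructions on bounded-index approximations of CFLs announced in the introduction are invoked. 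Non-emptiness of a bounded-index CFG of polynomial size can be decided in NP, since a non-empty index-$k$ grammar always admits a derivation tree whose size is polynomial in the grammar and in $k$, which an NP machine can guess and verify. Chaining all the guesses yields a single polynomial-size NP certificate for non-safety, placing safety in $\mathrm{coNP}$.
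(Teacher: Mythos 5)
There are two genuine gaps, and the second is fatal to the approach. The central difficulty you correctly identify---that $\shuffle_{g\in\G} L(S_g)$ is a shuffle of several context-free languages and hence not context-free---is not resolved by your proposed fix. Your claim that the number of simulators ``simultaneously in a context-free phase'' is bounded does not help: \emph{all} $|\G|$ simulators are in their context-free phase at once, since each $L(S_g)$ has an unrestricted $\G(r_c,\uw_c)^*$ prefix before its single first write, and these prefixes interleave arbitrarily. Bounded-index approximation does not make a shuffle of CFLs context-free (each component still needs its own stack), and the paper contains no construction producing a single polynomial-size CFG for such a shuffle; its bounded-index machinery is used only for the \emph{leader's} CFL in the PSPACE proof. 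The paper's actual proof avoids ever forming the shuffle as a language: it checks $L(q_\tau) \parallel L(\ES) \parallel LL_{g_i} \neq \emptyset$ \emph{separately for each} $g_i$ in $\tau$ (each check involves only one pushdown against regular languages, hence polynomial time), and then invokes a dedicated composition lemma (Lemma~\ref{lem:composition1}) showing that witnesses for the individual $LL_{g_i}$ against a \emph{common} leader word can be merged into a single interleaving, essentially because contributor reads and useless writes can be freely inserted without disturbing the store as observed by anyone else. That composition lemma is the real content of the theorem and is entirely absent from your proposal.

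The second gap is your treatment of the leader side: the set of labels of simple paths of the FSA for $P_\tau \cap L(\US)$ is \emph{not} a cover in the paper's sense (a cover must contain a \emph{superword} of every word of the language, so no finite set of polynomial-length words can cover an infinite regular language with loops). Lemma~\ref{lem:leadermono} therefore cannot be applied to it. The paper instead guesses a simple accepting path \emph{together with the SCCs it visits}, obtaining a sub-automaton $q_\tau$ whose (generally infinite) language is \emph{directed}---any two words have a common superword in it (Lemma~\ref{lem:decomp})---and this directedness, combined with Lemma~\ref{lem:coversuffices}, is what lets the $d$ per-value witnesses $v_1,\dots,v_d$ be replaced by a single leader word $v$ before the composition lemma is applied. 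Your opening reduction via Prop.~\ref{prop:acceptsimulation} and the guess of $\tau$ match the paper, but from that point on the two key ideas---directed sub-automata for the leader and per-value decomposition plus recomposition for the contributors---are missing.
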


The complexity of the problem becomes higher when all the processes are PDMs.

\begin{theorem}{}
\label{th:pspace-hard}
$\mathtt{Safety}(\mathrm{dPDM}, \mathrm{dPDM})$ is $\mathrm{PSPACE}$-hard.
\end{theorem}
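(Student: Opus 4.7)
The plan is to reduce from QBF, a canonical $\mathrm{PSPACE}$-complete problem. Given a formula $\phi = Q_1 x_1 \cdots Q_n x_n.\, \psi(x_1, \ldots, x_n)$ with each $Q_i \in \{\forall, \exists\}$ and $\psi$ in 3-CNF, I will construct in polynomial time two deterministic pushdown machines $M_D$ (leader) and $M_C$ (contributor) such that the $(\lts{M_D}, \lts{M_C})$-network is unsafe if{}f $\phi$ is true. Note that the leader alone cannot witness PSPACE-hardness, since reachability for a single PDM is polynomial; the power of the construction comes from harnessing unboundedly many contributors as witnesses to the $\exists$-layers of the QBF.

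The leader $M_D$ performs a depth-first evaluation of the QBF tree, using its stack to store the current partial assignment $\alpha$ and to backtrack over $\forall$-choices. At a $\forall$-node at depth $i$, $M_D$ pushes $x_i\!:=\!0$, recurses, backtracks, pushes $x_i\!:=\!1$, recurses, and propagates $\textsf{true}$ only if both sub-calls did. At an $\exists$-node at depth $i$, $M_D$ writes a query tagged with $(i, \alpha)$ on the register, awaits a response bit $b$, pushes $x_i\!:=\!b$, and recurses. At a leaf, $M_D$ evaluates $\psi$ clause by clause using the assignment on its stack. If the root evaluation returns $\textsf{true}$, $M_D$ writes the error value $\#$. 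The contributor $M_C$ acts as an existential witness: in a short setup phase it commits, via non-determinism resolved by the scheduling of reads, to a triple $(i, \alpha, b)$ encoded on its polynomial-depth stack; whenever it reads a query and verifies (by popping $\alpha$ bit by bit and comparing against the tag) that it matches its commitment, it writes $b$ on the register.

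The forward direction is routine: if $\phi$ is true, fix a winning existential strategy $s$; arrange that for each triple $(i, \alpha, s(i, \alpha))$ some contributor commits to it, and pick a schedule in which the matching contributor answers each query first. The leader then completes its traversal, finds $\psi$ true at every leaf, and writes $\#$. The backward direction uses that $M_D$ is deterministic: its trajectory is a function only of the bits it reads at existential queries, and by structural induction on the QBF tree, $M_D$ returns $\textsf{true}$ at the root precisely when the sequence of received answers defines a winning $\exists$-strategy, which requires $\phi$ to be true.

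The main obstacle is the non-atomicity of the shared register: a commitment bit written during setup could in principle be mistaken for a query answer, or a late response from a previous query could reach the leader after the round has advanced. I handle this by tagging every message with the current query identifier $(i, \alpha)$, so the leader's state (which knows $(i, \alpha)$ from the top of its stack) accepts only responses whose tag matches, and by placing setup and evaluation messages in disjoint sub-alphabets guarded by distinct states; any other read leaves $M_D$ stuck rather than advancing toward $\#$. Residual non-determinism in the commitment protocol is eliminated in polynomial time by the Determinization Lemma (Lemma~\ref{lem:determinism}), yielding the required deterministic $M_D, M_C$ and completing the reduction.
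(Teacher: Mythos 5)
Your reduction has a genuine gap at the leaf step ``$M_D$ evaluates $\psi$ clause by clause using the assignment on its stack.'' At a leaf the assignment $x_1,\dots,x_n$ lies $n$ symbols deep in the leader's stack, interleaved with the DFS bookkeeping needed to backtrack over $\forall$-nodes. A pushdown machine can only inspect its top symbol; to reach the variables of a clause it must pop everything above them, and a polynomial-size finite control cannot remember the popped prefix in order to restore it. The standard alternatives also fail within polynomial size: maintaining the satisfaction status of all $m$ clauses during the descent needs $2^m$ control states (or an exponential stack alphabet), and checking one clause per descent is incompatible with visiting each leaf once. There is also a structural reason the construction cannot be repaired locally: in your protocol the contributors' stacks are used only to verify the query tag, and since each query $(i,\alpha)$ is issued exactly once by the deterministic leader, \emph{any} bit the leader happens to read at a query position defines a legal existential strategy --- the tag verification is not load-bearing. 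So if the reduction were sound it would remain sound with finite-state contributors that simply emit bits, which would make $\mathtt{Safety}(\mathrm{PDM},\mathrm{FSM})$ PSPACE-hard and contradict Theorem~\ref{thm:npeasy} unless $\mathrm{NP}=\mathrm{PSPACE}$. (Separately, the tag $(i,\alpha)$ is $n$ bits and must be streamed over the non-atomic register; a contributor matching $\alpha$ against the broadcast path cannot undo partial matches when the leader backtracks, since it cannot re-push forgotten symbols.)

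The paper takes a different route that puts the exponential work \emph{inside the contributors' stacks}: it reduces from acceptance of a linear-space deterministic Turing machine running for $2^n$ steps. In each round the leader pops the current configuration symbol by symbol onto the register with per-symbol acknowledgements, contributors compute the successor configuration on their stacks and stream it back, and --- crucially --- both the leader and each contributor keep a separate counter on their stacks that counts down from $2^n$ using $n{+}1$ stack symbols. The network writes $\#$ only if the leader \emph{and some single contributor} both reach count $2^n$ in an accepting configuration; this counting is exactly the mechanism that rules out the stale-participant interleavings your construction cannot exclude (contributors that dropped out or joined mid-simulation cannot finish their countdown in sync with the leader). If you want to salvage a QBF-style argument, you would still need a ping-pong protocol of this kind to let the leader re-read its own stack contents, at which point you inherit the same staleness problem and need the same counting gadget, so the paper's TM reduction is the more direct path.
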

PSPACE-hardness is shown by reduction from the acceptance problem of a polynomial-space
deterministic Turing machine.
The proof is technical.
The leader and contributors simulate steps of the Turing machine in rounds.
The stack is used to store configurations of the Turing machine.
In each round, the leader sends the current configuration of the Turing machine
to contributors by writing the configuration one element at a time on to the store
and waiting for an acknowledgement from some contributor that the element was received.
The contributors receive the current configuration and store the next configuration
on their stacks.
In the second part of the round, the contributors send back the configuration to the leader. 
The leader and contributors use their finite state to make sure all elements of the configuration are sent
and received.

Additionally, the leader and the contributors use the stack to count to $2^n$ steps.
If both the leader and some contributor count to $2^n$ in a computation, 
the construction ensures that the Turing machine has been correctly simulated for
$2^n$ steps, and the simulation is successful.
The counting eliminates bad computation sequences in which 
contributors conflate the configurations from different steps due to asynchronous reads
and writes.

Next we sketch the upper PSPACE bound that uses constructions on approximations
of context-free languages. The details of the proof are available in
Appendix~\ref{sec:language}.



%

\begin{theorem}{}
\label{thm:pspaceeasy}
$\mathtt{Safety}(\mathrm{PDM},\mathrm{PDM})$ is in $\mathrm{PSPACE}$.
\end{theorem}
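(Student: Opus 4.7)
The plan is to push Simulation and Monotonicity (Prop.~\ref{prop:acceptsimulation}, Lem.~\ref{lem:mono}) to reduce safety to an emptiness question on a product of a context-free language with regular approximations, and then decide that emptiness in polynomial space using bounded-index approximations of CFLs.

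By Prop.~\ref{prop:acceptsimulation} and (\ref{eq:acceptanceleaderstore}), it suffices to test, for some guessed $\tau \in \Upsilon_{\#}$, whether $(L(\US) \cap P_\tau) \parallel L(\ES) \parallel \shuffle_{g \in \G} L(S_g) \neq \emptyset$. Since $|\tau| \leq |\G|$, the guess fits into polynomial space. Once $\tau$ is fixed, the extended stores $\ES$ and $\ESU$ restricted to $P_\tau$ need only track the prefix of $\tau$ already realized by first writes and hence have polynomial size; in particular $L(\US) \cap P_\tau$ is accepted by a polynomial-size PDM, which may be used directly as the cover $\hat{L}_\tau$ in Lem.~\ref{lem:mono}. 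The remaining task is to bound the supports $\underline{L}_g$ of the $L_g$.

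Since $\C$ is a PDM, each $L_g$ is context-free, and a shuffle of context-free languages is in general not context-free, so a naive product construction fails. The central contribution, to be developed in Sect.~\ref{sec:language}, is to take each $\underline{L}_g$ as a \emph{regular} bounded-index approximation of $L_g$: the $\preceq$-minimal witnesses in $L_g$ all admit derivations of index polynomial in the input, because any high-index derivation of such a witness can be contracted by pumping out repeated subderivations while staying $\preceq$-below the original witness. So minimal witnesses already appear in an index-$k$ sublanguage for polynomial $k$, and that sublanguage admits a regular representation computable in polynomial space.

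With each $L(\underline{S}_g) = \pre(\underline{L}_g \cdot w_c^*(g))$ regular, $\shuffle_{g\in\G} L(\underline{S}_g)$ is regular with polynomially-encoded (but exponentially many) product states. Its asynchronous product with the polynomial-size regular $L(\ES) \cap P_\tau$ and the polynomial-size context-free $\hat{L}_\tau$ (via Prop.~\ref{prop:bowtie}) yields a PDA whose emptiness is decidable in PSPACE by on-the-fly exploration of polynomial-size configurations. The main obstacle is the language-theoretic step: proving that a polynomial index retains $\preceq$-minimal witnesses of each $L_g$ even under the constraints imposed by $\tau$, and producing an explicit polynomial-space regular representation of the bounded-index sublanguage from the PDM for $\C$.
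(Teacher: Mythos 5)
Your overall architecture matches the paper's up to the last step: reduce via Prop.~\ref{prop:acceptsimulation} and \eqref{eq:acceptanceleaderstore} to an emptiness test for a fixed guessed $\tau$, apply the Monotonicity Lemma, and realize each support $\underline{L}_g$ as the language of leftmost bounded-index derivations of the contributor grammar, giving an exponential-size FSA whose transition relation is computable in polynomial space (this is exactly Lemma~\ref{lem:mininduction} and Theorem~\ref{thm:min-cfl}). The gap is in how you treat the leader side and, consequently, in the final emptiness check. You take $\hat{L}_\tau$ to be $L(\US)\cap P_\tau$ itself (the identity cover), so the object whose emptiness you must decide is the asynchronous product of a context-free language with an FSA having exponentially many states --- a PDA/CFG of exponential size. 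Emptiness of a context-free grammar is not a reachability question on bounded-size configurations: the stack is unbounded, the shortest witness can be doubly exponential in the number of grammar variables, and deciding productivity of variables is P-complete, so ``on-the-fly exploration of polynomial-size configurations'' does not yield a PSPACE bound. As the paper itself notes, this route only gives EXPTIME (polynomial time in the exponential-size grammar), which is precisely the barrier Hague's earlier construction and the naive improvement run into.

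The missing ideas are the three language-theoretic facts the paper isolates: (i) for a CFG $G$ in CNF with $m$ variables, the bounded-index approximation $L^{(3m)}(G)$ is a \emph{cover} of $L(G)$ (Theorem~\ref{thm:bamboo}), so one may take $\hat{L}_\tau = L^{(3m)}(G_D^\tau)$ with $3m$ polynomial in the input; (ii) the product construction commutes with index bounds, $L^{(k)}(G\bowtie A) = L^{(k)}(G)\parallel L(A)$ (Prop.~\ref{prop:bowtie}), so the index bound survives the product with the exponential-size FSA $A_C$; and (iii) emptiness of $L^{(k)}(G)$ is decidable in $\mathrm{NSPACE}(k\log|G|)$ (Theorem~\ref{thm:emptiness}), which for $k$ polynomial and $|G|$ exponential is polynomial space --- the recursion depth of the emptiness check is bounded by the index $k$ rather than by the (exponential) number of variables. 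You correctly identified that bounded-index derivations are the engine for the \emph{support} (contributor) side, but the theorem also needs them, in the form of a \emph{cover}, on the leader side; without that, the last step of your argument fails.
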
%
\begin{proof}
Let $P_D$ and $P_C$ be PDMs respectively generating $\D=\lts{P_D}$
and $\C=\lts{P_C}$, hence $L(\D)$ and $L(\C)$ are context-free languages.
Proposition~\ref{prop:acceptsimulation} shows
that the \((\D,\C)\)-network \(\N\) is accepting if{}f \(L(\N^S)\cap P_{\Upsilon_{\#}}\neq
\emptyset\) if{}f \(L(\US\parallel \ES \parallel \shuffle_{g\in \G}
S_g)\cap P_{\Upsilon_{\#}}\neq \emptyset\) (by
\eqref{eq:acceptanceleaderstore}).
From the construction of the Simulation
Lemma, for each $g \in \G$ the language $L_g=L(\C^E)\cap \G(r_d,u_c)^*
\fw_c(g)$ is context-free, and so is \(L(S_g)\). Given \(P_C\) we compute in polynomial time a PDA $P_g$ such
that \(L(P_g)=L_g\). Next, 
{
\setlength\abovedisplayskip{1pt}
\setlength\belowdisplayskip{0pt}
\begin{align}
	& L(\US\parallel \ES \parallel \shuffle_{g\in \G} S_g)\cap P_{\Upsilon_{\#}}\neq \emptyset \notag\\
	\text{if{}f }& (L(\US) \cap P_{\Upsilon_{\#}}) \parallel
	L(\ES) \parallel \shuffle_{g\in \G} L(S_g) \neq \emptyset \notag\\
	\text{if{}f }& (L(\US) \cap P_{\Upsilon_{\#}}) \parallel
	L(\ES) \parallel \shuffle_{g\in \G} \pre(L(P_g)\cdot
	w_c(g)^*) \neq \emptyset \label{eq:pspace-step3}\\
	\text{if{}f }& (\textstyle{\bigcup_{\tau\in\Upsilon_{\#}}}
	\hat{L}_{\tau}) \parallel L(\ES) \parallel\shuffle_{g\in \G} \pre(\underline{L(P_g)}\cdot w_c(g)^*) \neq \emptyset \label{eq:continuebis}
\end{align}
}
\eqref{eq:pspace-step3} follows from definition of \(S_g\) and \(L_g=L(P_g)\);
\eqref{eq:continuebis}  follows from Lem.~\ref{lem:mono} and by letting \(\hat{L}_\tau\) and \(\underline{L(P_g)}\) be a cover and support of \(L(\US)\cap P_{\tau}\) and \(L(P_g)\), respectively.

Next, for all \(g\in \G\) we compute a FSA \(A_g\) such that \(L(A_g)\) is a
support of \(L(P_g)\).
    Our first language-theoretic construction shows 
    that the FSA \(A_g\) can be computed in time exponential but space polynomial in \(|P_g|\).
Then, because $L(\ES)$ is a
regular language, we compute in space polynomial in
\(|P_D|+|P_C|\) a FSA $A_C$ such that $L(A_C) = L(\ES) \parallel \shuffle_{g\in \G}
\pre(L(A_g)\cdot w_c(g)^*)$.  
Hence, by \eqref{eq:continuebis} and because
of \(\Upsilon_{\#}\) (guessing and checking \(\tau\in\Upsilon_{\#}\) is done in time polynomial in \(|\G|\)) we find that it
suffices to prove \(\hat{L}_{\tau} \parallel L(A_C)\neq\emptyset\) is decidable
in space polynomial in \(|P_D|+|P_C|\).

To compute a cover \(\hat{L}_{\tau}\) of \(L(\US)\cap P_{\tau}\), 
we need results about the $k$-index approximations of a context-free language
\cite{Brainerd1967}. 
Given a CFG $G$ in CNF and $k \geq 1$, we define the {\em $k$-index approximation} of $L(G)$, denoted by $L^{(k)}(G)$, 
consisting of the words of $L(G)$ having a derivation in which every
intermediate word contains at most $k$ occurrences of variables. 
We further introduce an operator $\bowtie$ which, given $G$ and FSA $A$, computes in polynomial time
a context-free grammar $G \bowtie A$ such that $L(G \bowtie A) = L(G) \parallel L(A)$.
We prove the following properties:
\begin{enumerate}
\item \label{item:one} $L^{(3m)}(G)$ is a cover of $L(G)$, where $m$ is the number of variables of $G$; 
\item \label{item:two} for every FSA $A$ and \(k\geq 1\), $L^{(k)}(G \bowtie A) = L^{(k)}(G) \parallel L(A)$;
\item \label{item:three} $L^{(k)}(G)\neq\emptyset$ on input $G, k$ can be decided in \(\mathrm{NSPACE}(k \log(|G|))\).
\end{enumerate}

Equipped with these results, the proof proceeds as follows. Let $G_D$ be a
context-free grammar such that $L(G_D) = L(P_D)$. 
It is well-known that $G_D$ can be computed in time polynomial in $|P_D|$. 
Next, given \(\tau\), we compute a grammar $G_D^\tau$ recognizing $P_{\tau} \cap L(\US)$ as follows. 
The definition of \(\US\) shows that \(P_{\tau} \cap L(\US)=L(\D)\parallel (L(\ESU)\cap P_{\tau})\).
We then compute a FSA \(\ESUtau\) such that \(L(\ESUtau)=L(\ESU)\cap P_{\tau}\). 
It can be done in time polynomial in \(|P_D|+|P_C|\) because it is a restriction of \(\ES\) where write records are totally ordered
according to \(\tau\) and there are exactly \(|\tau|\) of them. 
Therefore we obtain, 
\(P_{\tau} \cap L(\US)=L(G_D)\parallel L(\ESUtau)\) because \(L(G_D)=L(\D)\). 
Define \(G^{\tau}_D\) as the CFG \(G_D\bowtie \ESUtau\) which can be computed in polynomial time
in \(G_D\) and \(\ESUtau\), hence in \(|P_D|+|P_C|\). 
Clearly \(L(G_D^{\tau})= P_{\tau} \cap L(\US)\).
Further, $L^{(k)}(G_D^\tau)$ is a cover of $L(G_D^\tau)$ for some $k \leq p(|P_D|)$,
where $p$ is a suitable polynomial.

By item \ref{item:two}, $L^{(k)}(G_D^\tau)
\parallel L(A_C) = L^{(k)}( G_D^\tau \bowtie A_C)$, where the grammar $G_D^\tau
\bowtie A_C$  can be constructed in exponential time and space polynomial in $|P_D|+|P_C|$. 
Now we apply a generic
result of complexity (see e.g. Lemma 4.17, \cite{AroraB09}), slightly adapted: given functions $f_1, f_2 \colon \Sigma^* \rightarrow \Sigma^*$ and $g \colon \Sigma^* \times \Sigma^*
\rightarrow \Sigma^*$  if $f_i$ can be computed by a 
 $s_{f_i}$-space-bounded Turing machine, and $g$ can be computed by a
$s_{g_1}(|x_1|)\cdot s_{g_2}(|x_2|)$-space-bounded Turing machine, then 
$g(f_1(x),f_2(x))$ can be computed in  $\log(|f_1(x)|+|f_2(x)|) + s_{f_1}(|x|) + s_{f_2}(|x|) + s_{g_1}(|f_1(x)|) \cdot s_{g_2}(|f_2(x)|))$ space. 
We have
\begin{compactitem}
\item $f_1$ is the function that computes $G_D^\tau \bowtie A_C$ 
	on input $(P_D, P_C)$, and $f_2$ is the function that on input \(P_D\) computes $3m$, where 
$m$ is the number of variables of \(G_D^\tau\). 
So the output size of $f_1$ is exponential in the input size, while it is polynomial for \(f_2\).
Moreover, \(s_{f_i}\) for \(i=1,2\) is polynomial.
\item $g$ is the function that on input $(G_D^\tau \bowtie A_C, 3m)$ 
yields $1$ if $L^{(3m)}( G_D^\tau \bowtie A_C) \neq \emptyset$, and $0$ otherwise,
where $m$ is the number of variables of $G_D^\tau$. By (\ref{item:three})
$s_{g_1}$ is logarithmic, and $s_{g_2}$ is linear. 
\end{compactitem}
\noindent Finally, the generic complexity result shows that $g \circ f$ can be computed in space polynomial in \(|P_D|+|P_C|\),
and we are done.
\qed
\end{proof}

We note that our three language-theoretic constructions 
(the construction of automaton $A_g$ that is a cover of $L(P_g)$ of size at most exponential
in $|P_g|$,
and results~\ref{item:one}, \ref{item:two}, and~\ref{item:three} in the proof above)
improve upon previous constructions, and are all required for the optimal upper bound.
Hague \cite{hague11} shows an alternate doubly exponential construction
using a result of Ehrenfeucht and Rozenberg in place of 
Theorem~\ref{thm:min-cfl}. 
This gave a 2EXPTIME algorithm.  
Even after using our exponential time construction for $A_g$, 
we can only get an EXPTIME algorithm,
since the non-emptiness problem for (general) context-free languages is P-complete \cite{Jonespcomplete}. 
Our bounded-index approximation for the cover and the space-efficient
emptiness algorithm for bounded-index languages are crucial to the PSPACE upper bound.

\subsection{The bounded safety problem}

Given $k > 0$, we say that a \( (\D,\C)\)-network is \emph{$k$-safe} if all traces in which
the leader and each contributor make at most $k$ steps
are safe; i.e., we put a bound of
$k$ steps on the runtime of each contributor, and consider only safety within this bound. 
Here, a step consists of a read or a write of the shared register.
The bound does not limit the total length of traces, because the number of contributors is unbounded. 
The {\em bounded safety} problem asks, given $\D$, $\C$, and $k$ written in unary, 
if the \( (\D,\C)\)-network is $k$-safe.

Given a class of (\texttt{D},\texttt{C})-networks, we 
define \(\mathtt{BoundedSafety}(\texttt{D},\texttt{C})\) as the restriction of the $k$-safety 
problem to pairs of machines $M_D \in \texttt{D}$ and $M_C\in \texttt{C}$, where we write $k$ in unary. 
A closer look to Theorem \ref{th:nphard-dfa} shows that its proof reduces the satisfiability problem for 
a formula $\phi$ to the bounded safety problem for a (\texttt{D},\texttt{C})-network and a number
$k$, all of which have polynomial size $|\phi|$. This proves that 
\(\mathtt{BoundedSafety}(\mathrm{dFSM},\mathrm{dFSM})\) is coNP-hard. 
We show that, surprisingly, bounded safety remains coNP-complete for pushdown systems, and, even further, for
arbitrary Turing machines. 
Notice that the problem is already coNP-complete for one single Turing machine.

We sketch the definition of the Turing machine model (TM), which
differs slightly from the usual one. Our Turing machines have two kind
of transitions: the usual transitions that read and modify the
contents of the work tape, and additional transitions with labels
in $\G(r,w)$ for communication with the store. The machines are input-free, 
i.e., the input tape is always initially empty.

\begin{theorem}{}
	$\mathtt{BoundedSafety}(\mathrm{TM},\mathrm{TM})$ is $\mathrm{coNP}$-complete.
\end{theorem}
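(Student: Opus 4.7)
The plan is to prove coNP-hardness and coNP membership separately. For hardness I would reuse Theorem~\ref{th:nphard-dfa}: its reduction from 3SAT already builds a $(\mathrm{dFSM},\mathrm{dFSM})$-network whose unsafe traces have polynomial length in $|\phi|$, so setting $k$ to this bound (in unary) yields a reduction to $\mathtt{BoundedSafety}(\mathrm{dFSM},\mathrm{dFSM})$, and since $\mathrm{dFSM}\subseteq\mathrm{TM}$, coNP-hardness transfers to $\mathtt{BoundedSafety}(\mathrm{TM},\mathrm{TM})$.

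For the coNP upper bound I would combine the Simulation Lemma with a guess-and-check argument. By Proposition~\ref{prop:acceptsimulation}, unsafety is equivalent to the existence of a leader trace $u\in L(\D)$ and simulator traces $s_g\in L(S_g)$, one per $g\in\G$, whose interleaving with the extended store $\ES$ projects onto some $\tau\in\Upsilon_{\#}$ on first writes. The nondeterministic certificate would comprise: (i)~the sequence $\tau\in\Upsilon_{\#}$, of length at most $|\G|$; (ii)~a bounded $M_D$-computation witnessing $u$; (iii)~for each $g\in\tau$, a decomposition $s_g=\alpha_g\,\fw_c(g)\,w_c(g)^{n_g}$ together with a single bounded $M_C$-computation producing $\alpha_g\fw_c(g)$ and a counter $n_g$ recording the number of copycat contributors executing $\alpha_g w_c(g)$; and (iv)~an explicit interleaving of $u$, the $s_g$'s, and $\ES$ that ends in $\fw_c(\#)$. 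Verification would simulate each guessed TM computation step by step, confirm it produces the claimed register-action sequence within its step bound, and replay the interleaving on $\ES$ to validate the write-record and useless-flag constraints.

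The key step is bounding the certificate. Because the leader performs at most $k$ reads, only $k$ copies of each $w_c(g)$ can ever be observed, so I may restrict to $n_g\le k$; each $s_g$ then has length $O(k)$ and the overall interleaving has length $O(|\G|\cdot k)$. The main obstacle is the bound on the TM computations themselves: the step bound must be interpreted at the LTS level so that every TM transition (internal as well as register) is counted, ensuring each guessed TM computation uses $O(k)$ tape cells and admits a description of size $\mathrm{poly}(|M|+k)$. With $k$ given in unary, the full certificate is then polynomial in $|M_D|+|M_C|+k+|\G|$ and the verifier runs in polynomial time, yielding the coNP algorithm. This upper bound plus coNP-hardness delivers coNP-completeness.
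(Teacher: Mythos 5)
Your proposal follows essentially the same route as the paper: hardness is inherited from Theorem~\ref{th:nphard-dfa}, and the coNP upper bound guesses $\tau\in\Upsilon_{\#}$, the leader trace, one simulator trace $\alpha_g\,\fw_c(g)\,w_c(g)^{n_g}$ per value, and an interleaving checked against the extended store --- exactly the certificate the paper extracts via Corollary~\ref{cor:readlang} and Lemma~\ref{lem:erasable}. Your remark that the step bound must count internal TM transitions (so that membership of a guessed register-action sequence in $L(\lts{M})$ is verifiable) is a point the paper glosses over, and it is the right reading of the definition.

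The one flaw is your justification of $n_g\le k$: you argue that ``the leader performs at most $k$ reads, so only $k$ copies of each $w_c(g)$ can ever be observed,'' but the copies $w_c(g)^{n_g}$ must also feed the reads occurring inside the \emph{other} simulator traces $\alpha_{g'}$, not only the leader's reads. The paper accounts for this by bounding the total number of reads across all $m+1$ interleaved sequences by $k\cdot(m+1)\le k\cdot(|\G|+1)$ and taking $n_g = O(|\G|\cdot k)$. As stated, your restriction to $n_g\le k$ could make the verifier reject a genuine witness (soundness of the ``safe'' answer would then fail); replacing the bound by $O(|\G|\cdot k)$ keeps the certificate polynomial and repairs the argument without changing anything else.
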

\begin{proof}
Co-NP-hardness follows from Theorem \ref{th:nphard-dfa}.
To prove $\mathtt{BoundedSafety}(\mathrm{TM},\mathrm{TM})$ is in NP we use the simulation lemma. Let
$M_D, M_C, k$ be an instance of the problem, where $M_D, M_C$ are Turing machines
 of sizes $n_D, n_C$ with LTSs $\D=\lts{M_D}$ and $\C=\lts{M_C}$, and let $n_D+n_C=n$. 
In particular, we can assume $|\G| \leq n$, because we only need to
consider actions that appear in $M_D$ and $M_C$.
If the \( (\D,\C)\)-network is not $k$-safe, then by definition there exist
\(u\in L(\D)\) and a multiset $M=\{v_1, \ldots, v_k\}$ over $L(\C^E)$ 
such that \(u\) is compatible with \(M\) following some \(\tau\in\Upsilon_{\#}\);
moreover, all of $u,v_1, \ldots, v_m$ have length at most $k$. 
By Cor.~\ref{cor:readlang} and Lem.~\ref{lem:erasable} (showing we can drop traces
without a first or regular write), there exists a set \(S= \{s_{g_1}, \ldots,
s_{g_m} \}\) with \(m\leq |\G|\leq n\), where $s_{g_i} \in L_{g_i}\cdot w_c(g_i)^*$, 
and numbers $i_1, \ldots, i_m$ such that $u$ is compatible with 
$\{s_{g_1}\, w_c(g_1)^{i_1}, \ldots, s_{g_m}\, w_c(g_m)^{i_m} \}$ following \(\tau\).
Since each of the $s_{g_i}$ is obtained by suitably renaming the actions of a trace,
we have $|s_{g_i}| \leq k$. Moreover, since the $w_c(g_j)^{i_j}$ parts provide the writes
necessary to execute the reads of the $s_g$ sequences, and there are at most \(k
\cdot (m+1) \leq k \cdot (n+1) \) of them, the numbers can be chosen so
that $i_1, \ldots, i_m \leq O(n \cdot k)$ holds. 

We present a nondeterministic polynomial algorithm that decides if the \( (\D,\C)\)-network is
$k$-unsafe. The algorithm guesses \(\tau\in\Upsilon_{\#}\) and traces
$u, s_{g_1}, \ldots, s_{g_m}$ of length at most $k$.  Since there are at most $n+1$ of those traces,
this can be done in polynomial time. Then, the algorithm guesses numbers $i_1, \ldots, i_m$.  
Since the numbers can be chosen so that $i_1, \ldots, i_m \leq O(n \cdot k)$, this can also
be done in polynomial time. Finally, the
algorithm guesses an interleaving of
\(u, s_{g_1}\, w_c(g_1)^{i_1}, \ldots, s_{g_m}\, w_c(g_m)^{i_m}\) and
checks compatibility following \(\tau\). This can be done in
$O(n^2 \!\cdot\! k)$ time. If the algorithm succeeds, then there is a witness that $(L(\D) \parallel
L(\ES) \parallel \shuffle_{g\in\G} L(S_g))\cap P_{\tau} \neq \emptyset$ holds,
which shows, by Prop.~\ref{prop:acceptsimulation}
and Def.~\ref{def:simulation} (of \(\N^S\)) that the \( (\D,\C)\)-network is unsafe.
\qed
\end{proof}

A TM is poly-time if it
takes at most $p(n)$ steps for some polynomial $p$, where
$n$ is the size of (the description of) the
machine in some encoding. 
As a corollary, we get that the safety verification problem when
leaders and contributors are poly-time Turing machines is $\mathrm{coNP}$-complete.
Note that the coNP upper bound holds even though the LTS corresponding to a poly-time TM is
exponentially larger than its encoding.


{\small
\bibliographystyle{splncs03}
\bibliography{ref}
}


\appendix 

\section{Combinatorics}

\begin{proof}[of Lem~\ref{lem:coversuffices}]
	Because \(v,v' \in P_{\tau}\cap L(\ESU)\) over alphabet
	\(\Sigma^E_D=\G(r_d,w_d,\fw_c,w_c)\) and \(v'\succeq v\) we
	find that \(v\) can be obtained from \(v'\) by erasing factors that are
	necessarily of the form \(\wb(g)\; r_d(g)^*\) or \(r_d(g)\).
	In particular \(v,v'\in P_{\tau}\) shows that
	\(\proj_{\G(\fw_c)}(v)=\proj_{\G(\fw_c)}(v')=\tau\).\footnote{\(\proj_{\Sigma'}(w)\) returns the projection of \(w\) onto alphabet \(\Sigma'\).}

	The proof is by induction on the number \(m\) of those factors.
	If \(m=0\) then \(v'=v\) and we are done.  Now, let \(m>0\) and let
	\(v_{\dag}\) be the trace which results from erasing one factor
	\(\sigma \in (\G(w_d,w_c) \G(r_d)^*) \cup (\G(r_d)) \) from \(v'\).
	That is \(v'=v_{\dag}^1 \sigma v_{\dag}^2\) where \(v_{\dag}^1
	v_{\dag}^2=v_{\dag}\succeq v\).
	Without loss of generality we can assume that if \(\sigma\notin \G(r_d)\) then the
	first symbol of \(v_{\dag}^2\) is a write action (it belongs to \(\G(\fw_c,w_c,w_d)\)).
	Also observe that by \(\ESU\), if \(\sigma\in \G(r_d)\) then the last write in \(v_{\dag}^1\) writes the value needed by \(\sigma\).

	Since \(v'\in P_{\tau}\cap L(\ESU)\), it is routine to check that \(v_{\dag}\in P_{\tau}\cap L(\ESU)\).
	Therefore we conclude from the induction hypothesis and \( v_{\dag}\succeq v\) that there exists
	a trace \(w_{\dag}\in L(\ES) \parallel L\) such that
	\(v_{\dag}\parallel w_{\dag}\neq \emptyset\), equivalently that 
	\(\proj_{\Sigma_D^E}(w_{\dag})=v_{\dag}\) since
	\(\Sigma^E_D\subseteq \Sigma_E\). Observe that \(w_{\dag}\) can be divided into \(w_{\dag}^1 w_{\dag}^2\) such that
	\(\proj_{\Sigma_D^E}(w_{\dag}^i)=v_{\dag}^i\) for \(i=1,2\). 
	Let \(w_{\ddag}^1\) be the (possibly empty) suffix of \(w_{\dag}^1\) starting
	at the last occurrence of an action of \(\Sigma_D^E\) and let
	\(w_{\ddag}^2\) be the prefix of \(w_{\dag}^2\) which ends at the first
	occurrence of an action of \(\Sigma_D^E\).  Then \(L(\ES)\) shows that
	\(w_{\ddag}^1 w_{\ddag}^2\) belongs to \(\G(r_c)^* \G(\uw_c)^*\).

	Let us now consider the added factor \(\sigma\).  First, let us notice that
	\(\proj_{\Sigma_D^E}(w_{\dag}^1 \sigma w_{\dag}^2)= v_{\dag}^1 \sigma
	v_{\dag}^2=v'\). Thus it suffices to show that  \(w_{\dag}^1 \sigma
	w_{\dag}^2\in L(\ES)\parallel L\).
	
	Now, if \(\sigma\in \G(r_d)\), we can choose \(w_{\dag}^1\) and \(w_{\dag}^2\) such that \(w_{\ddag}^1=\varepsilon\). 
        Notice that as for \(v_{\dag}^1\) and \(v_{\dag}^2\),  then the last write that occurs in \(w_{\dag}^{1}\) 
	writes the value needed by \(\sigma\) and so \(w_{\dag}^1 \sigma
	w_{\dag}^2\in L(\ES)\).
	Hence we find that \(w_{\dag}^1 \sigma w_{\dag}^2 \in L(\ES)\parallel L\)
	because the alphabet of \(L\) is disjoint from \(\G(r_d)\).

	On the other hand if \(\sigma\notin \G(r_d)\) we can choose \(w_{\dag}^1\) and \(w_{\dag}^2\) 
	such that \(w_{\ddag}^2=\varepsilon\). Then \(\sigma=\wb(g)r_d(g)^i\) for
	some \(i\in\nats\).  Also \(w_{\dag}^1 \sigma w_{\dag}^2\in L(\ES)\) because, as assumed above, the first symbol of \(w_{\dag}^2\) is a write action.

	Finally, if \(\sigma=w_d(g)r_d(g)^i\) we find that \(w_{\dag}^1 \sigma w_{\dag}^2 \in L(\ES) \parallel L\) because
	the alphabet of \(L\) is disjoint from \(\G(w_d,r_d)\).
	Else if \(\sigma=w_c(g)r_d(g)^i\) we find, by \(\ES\), that \(\fw_c(g)\) must occur in \(w_{\dag}^1\), hence
	that \(w_c(g)\) can be matched in \(L\) following the hypothesis on \(L\). 
	\qed
\end{proof}

%
%
\section{coNP lower bound}

\begin{proof}[of Theorem~\ref{th:nphard-dfa}]
We give a reduction from 3SAT to the complement of the safety
verification problem.  Given a 3SAT formula with $n$ variables and \(m\) clauses, we
construct a deterministic leader \(\D\) and a deterministic
contributor \(\C\) such that the \( (\D,\C)\)-network writes a special
symbol $\#$ if{}f the formula is satisfiable.  
The leader uses the contributors to guess and store
an assignment, and then checks if each clause is satisfied.  

\smallskip %
\noindent %
{\it Gadget to guess and retrieve a bit.} %
The reduction uses the following protocol between the leader and the
contributors to guess a bit and maintain the guess consistently. 
To assign a value to {\tt bit}, the leader writes {\tt bit} to the global store.  
The contributors who read {\tt bit} from the store then write consecutively {\tt
propose-bit-is-\(i\)}, $i=0,1$, on the store.  The leader reads the store at
some (non-deterministic) point, and reads the last write by one of the
contributors proposing either $0$ or $1$.  If it reads {\tt propose-bit-is-0}
(the $1$ case is identical), it writes back that it commits to setting the bit
to $0$ (writing {\tt commit-bit-is-0}).  Contributors who read {\tt
commit-bit-is-0} move on to the next phase, where they deliver {\tt bit-is-0}
each time they are asked the value of {\tt bit}.  That is, they wait to read a {\tt
get-value-of-bit} message, and reply with {\tt bit-is-0}.

Similarly, if the leader commits to a $1$, contributors who read the message
come to the consensus that the bit is $1$.
Contributors who miss the commit message are stuck. 
This protocol ensures that the leader and contributors can reach consensus on the value
of a bit, and even though they are deterministic, the value of the bit is chosen non-deterministically,
based on when the leader reads a value ({\tt propose-bit-is-\(i\)}, $i=0,1$) from the store. 
Notice that an arbitrary number of contributors can participate and potentially overwrite
each other, but the bit is fixed to a chosen value.
\smallskip %
\noindent %
{\it Reduction from 3SAT.} %
The leader uses the above protocol to ``assign'' non-deterministically chosen
consensus values to each variable \(x_1, \ldots, x_n\).  Then, it checks
sequentially that each clause is satisfied by this assignment. 
To do this, it gets the literals from the contributors and checks if the clause is satisfied.  
To get the assigned value to a variable $x$, the leader writes $\mbox{\tt get-value-of-}x$
on the store. The contributors that are storing an assignment to $x$ (i.e.,
those who completed the consensus protocol for $x$) and who read this message, 
write the consensus value (using values $x\mbox{\tt -is-}0$ or $x\mbox{\tt -is-}1$). 
Even if
several contributors write, they write the same value.

Suppose the formula is satisfiable. Then, there is an execution of the protocol
where the contributors reach a consensus for each bit corresponding to a
satisfying assignment, and the leader succeeds in checking all clauses.
Then, the value \(\#\) gets written to the store and the \( (\D,\C)\)-network is accepting.  
On the other hand, if the formula is not satisfiable, then the 
leader never succeeds checking all clauses and \(\#\) never gets written.
Note that the size of \(\D\) and \(\C\) is \(O(m+n)\) and that they are deterministic.
\qed
\end{proof}

%
%
\section{coNP upper bound}\label{sec:npNotSoEasiness}

We first recall some basic structural property of FSAs. Define the
reachability relation of a FSA as follows: state \(s_1\) is reachable from
\(s_2\) if{}f there is a (possibly empty) path from \(s_2\) to \(s_1\). We
define a strongly connected component (\texttt{scc}) of a FSA as an equivalence
class for the mutual reachability relation.

\begin{lemma}
	For every FSA \(A = (\Sigma, Q, \delta, q_0, F) \) there exists a finite
	collection \(A_1, \ldots, A_d\) such that each \(A_i\) satisfies the
	following properties:
\begin{compactenum}
	\item each \(A_i\) results from removing states and transitions from \(A\); and
	\item \(\bigcup_{i=1}^{d} L(A_i) = L(A)\); and
	\item for each \(v_1,v_2\in L(A_i)\) there exists \(v\in L(A_i)\) such that
		\(v \succeq v_1\) and \(v \succeq v_2\).
\end{compactenum}
	\label{lem:decomp}
\end{lemma}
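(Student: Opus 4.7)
My plan would be to use the strongly connected component (SCC) decomposition of $A$. The condensation of $A$ is a DAG, so any accepting run visits SCCs in a strict order without repetition. To each accepting run I associate a \emph{skeleton}: a finite tuple consisting of (a) the sequence $C_0, C_1, \ldots, C_k$ of SCCs visited (with $q_0 \in C_0$), (b) the inter-SCC transitions $e_1, \ldots, e_k$ used (with $e_i$ going from $C_{i-1}$ to $C_i$), and (c) the final accepting state $q_f \in C_k \cap F$. Since the SCC DAG is finite and acyclic, there are only finitely many skeletons.

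For each skeleton $\sigma = (C_0, \ldots, C_k, e_1, \ldots, e_k, q_f)$, define $A_\sigma$ to be the sub-FSA of $A$ whose states are $\bigcup_{i=0}^k C_i$, whose transitions are all transitions of $A$ internal to some $C_i$ together with the inter-SCC transitions $e_1, \ldots, e_k$, with initial state $q_0$ and unique accepting state $q_f$. Property (1) holds by construction. For Property (2), every accepting run of $A$ determines a unique skeleton $\sigma$, and that run is an accepting run of $A_\sigma$; conversely, every accepting run of $A_\sigma$ is an accepting run of $A$, so $\bigcup_\sigma L(A_\sigma) = L(A)$.

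The heart of the proof is Property (3). Given $v_1, v_2 \in L(A_\sigma)$ with accepting runs $\pi_1, \pi_2$, both runs traverse $C_0, \ldots, C_k$ in order via the same inter-SCC edges $e_1, \ldots, e_k$. Within each $C_i$, the run $\pi_j$ (for $j = 1, 2$) contributes a sub-path from a fixed entry state $a_i$ (equal to $q_0$ when $i = 0$, and to the target of $e_i$ otherwise) to a fixed exit state $b_i$ (equal to the source of $e_{i+1}$ when $i < k$, and to $q_f$ when $i = k$), producing a word $u_j^{(i)}$. Because $C_i$ is strongly connected and $a_i, b_i \in C_i$, there is a path inside $C_i$ from $b_i$ back to $a_i$; let $\rho_i$ be its label. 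Then concatenating, inside $C_i$, $\pi_1$'s sub-path, followed by this return path, followed by $\pi_2$'s sub-path, gives a valid path in $A_\sigma$ from $a_i$ to $b_i$ with label $u_1^{(i)}\, \rho_i\, u_2^{(i)}$, which is a super-word of both $u_1^{(i)}$ and $u_2^{(i)}$. Splicing these enlarged sub-paths together via the fixed inter-SCC edges $e_1, \ldots, e_k$ yields an accepting run of $A_\sigma$ whose label $v$ satisfies $v \succeq v_1$ and $v \succeq v_2$.

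The main obstacle to watch out for is bookkeeping at the two ends: one must verify that the return path inside $C_0$ (resp.\ $C_k$) genuinely exists from $b_0$ back to $q_0$ (resp.\ from $q_f$ back to $a_k$), but this is immediate from strong connectedness since $q_0, b_0 \in C_0$ and $q_f, a_k \in C_k$. The degenerate case $k = 0$, where $q_0$ and $q_f$ lie in the same SCC $C_0$ and no inter-SCC edge is used, is handled by the same argument specialized to a single SCC.
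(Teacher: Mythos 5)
Your proposal is correct and follows essentially the same route as the paper: the paper indexes the collection by simple accepting paths (which determine exactly your ``skeletons'') and takes as $A_i$ the path together with the full SCCs it visits, then argues property (3) by noting that any two traces through a single SCC are covered by a third, which is precisely your entry--exit--return--re-entry concatenation. Your write-up merely makes explicit the splicing argument that the paper leaves as ``easily seen.''
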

\begin{proof}
	A path in a FSA is said to be \emph{accepting} if its first and last state are
	initial and final respectively.  Let \(\pi\) be an accepting path in \(A\)
	such that no state repeats in \(\pi\). 
	Define \(A_{\pi}\) to be the FSA that consists exactly of 
	\begin{inparaenum}[\upshape(\itshape 1\upshape)]
		\item the states and transitions of \(\pi\); and
		\item the states and transitions of the \texttt{scc}s visited
			by \(\pi\).
	\end{inparaenum}

	Clearly, \(A_\pi\) results from removing states and transitions from \(A\).
	Define the set \(\{A_{1},\ldots, A_{d}\}\) such that each accepting path
	\(\pi\) with no repeating state induces exactly one automaton \(A_{\pi}\in
	\set{A_{1},\ldots, A_{d}}\). This set is finite since there are only finitely
	many states and transitions in \(A\).
	Furthermore, it is easily checked that \(\bigcup_{i=1}^{d} L(A_i)=L(A)\).

	We turn to point 3. Because no state is repeated, \(\pi\) fixes a total order
	on the \texttt{scc}s it visits (and are also included in \(A_{\pi}\)).  So any
	two accepting paths in \(A_{\pi}\) must visit the \texttt{scc}s in that order.
	Therefore, it suffices to show that given a \texttt{scc}, any two traces drawn
	upon that \texttt{scc} are covered by a third one. This is easily seen from
	the definition of subword ordering and the fact that in a \texttt{scc} all
	states are mutually reachable.\qed
\end{proof}

Next, let $g \in \G$ and define \(LL_g= L_g \cdot w_c(g)^*\).
Observe that \(LL_g \subseteq \G(r_c,\uw_c)^*\; \fw_c(g)\; w_c(g)^* \).
Hence, the alphabet of \(LL_g\) is given by \(\G(r_c,\uw_c)\cup\set{\fw_c(g),w_c(g)}\).

\begin{lemma}
	Let \(v \in \G(r_d, w_d, \fw_c, w_c)^*\),
	\(g\in \G\), \(G_1\subseteq \G\) such that \(g\notin G_1\).
	We have: if \(v \parallel L(\ES) \parallel (\shuffle_{g'\in G_1} LL_{g'}) \neq
	\emptyset\) and \(v \parallel L(\ES) \parallel LL_{g} \neq \emptyset\) then
	\(v\parallel L(\ES) \parallel (\shuffle_{g\in G_1\cup\set{g}} LL_g) \neq
	\emptyset\).
	\label{lem:composition1}
\end{lemma}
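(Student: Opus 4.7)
The plan is to construct an explicit witness $\sigma$ for the combined composition by merging the two given witnesses $\sigma_1 \in v \parallel L(\ES) \parallel (\shuffle_{g'\in G_1} LL_{g'})$ and $\sigma_2 \in v \parallel L(\ES) \parallel LL_g$. Since the alphabet of $v$ is $\G(r_d,w_d,\fw_c,w_c)$ and the projection of $\sigma_j$ to this alphabet equals $v$, every letter of $\sigma_j$ outside $\G(r_c,\uw_c)$ is forced to be a letter of $v$. I would therefore uniquely split each witness at the positions of $v$'s letters: $\sigma_j = \alpha_j^{(0)}\,v_1\,\alpha_j^{(1)}\,v_2\cdots v_k\,\alpha_j^{(k)}$, where each $\alpha_j^{(i)} \in \G(r_c,\uw_c)^*$.

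I would next analyse the shape of each $\alpha_j^{(i)}$. In the extended store, a $\uw_c$ transition sets the useless flag $b := 1$, after which no read is possible until a non-useless write restores $b=0$; since $\alpha_j^{(i)}$ contains no non-useless writes, every read in it must precede every useless write, giving the factorisation $\alpha_j^{(i)} = R_j^{(i)}\,U_j^{(i)}$ with $R_j^{(i)} \in \G(r_c)^*$ and $U_j^{(i)} \in \G(\uw_c)^*$. Moreover, the store state right after $v_i$ is determined entirely by the prefix $v_1\cdots v_i$, so it is the same triple $(g^{(i)},W^{(i)},0)$ in both $\sigma_1$ and $\sigma_2$. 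Consequently every read in $R_j^{(i)}$ must be $r_c(g^{(i)})$, and every useless write $\uw_c(g'')$ in $U_j^{(i)}$ has the single side condition $W^{(i)}(g'')=1$, which depends only on $W^{(i)}$ and hence holds throughout the whole slot.

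I would then define $\sigma = R_1^{(0)}R_2^{(0)}U_1^{(0)}U_2^{(0)}\,v_1\,R_1^{(1)}R_2^{(1)}U_1^{(1)}U_2^{(1)}\,v_2\cdots v_k\,R_1^{(k)}R_2^{(k)}U_1^{(k)}U_2^{(k)}$ and verify the three projection conditions of the asynchronous product. The projection on $\G(r_d,w_d,\fw_c,w_c)$ is $v$ because the inserted $R$- and $U$-segments contain no such letters. For $\sigma \in L(\ES)$ I would argue by induction on $i$ that the state after $v_i$ in $\sigma$ is again $(g^{(i)},W^{(i)},0)$: inside a slot, $R_1^{(i)}R_2^{(i)}$ is a sequence of $r_c(g^{(i)})$-reads valid from that state, $U_1^{(i)}U_2^{(i)}$ consists of useless writes whose side conditions depend only on the unchanged $W^{(i)}$, and if $v_{i+1}$ is a read then each $\sigma_j$ forces $U_j^{(i)}=\varepsilon$ (otherwise the flag would still be $1$ in $\sigma_j$ before $v_{i+1}$), so $b=0$ holds just before $v_{i+1}$ in $\sigma$ as well. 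For the projection on the shuffle alphabet, each $\fw_c(g'), w_c(g')$ letter of $v$ is assigned to the $y_{g'}$ chosen by $\sigma_1$ (if $g' \in G_1$) or by $\sigma_2$ (if $g'=g$), and each $r_c$ or $\uw_c$ action inherits its assignment from the witness $\sigma_j$ it originated in, exhibiting the projection as an element of $\shuffle_{g' \in G_1 \cup \{g\}} LL_{g'}$. The main obstacle is to ensure the merge does not break $\ES$'s transition rules despite the useless-write flag; the clean $RU$-factorisation of each $\alpha_j^{(i)}$ combined with the invariance of $W$ across reads and useless writes is precisely what makes the ``reads-first, useless-writes-last'' concatenation inside each slot go through.
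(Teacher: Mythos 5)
Your proof is correct and follows essentially the same route as the paper's: the paper likewise splits each witness at the occurrences of $v$'s letters into inter-letter blocks over $\G(r_c,\uw_c)$, observes that each block has the form $r_c(g^{(i)})^{*}\,\uw_c(\cdot)^{*}$ with useless writes only permitted when the next $v$-letter is a write, and merges the two witnesses slot-by-slot by concatenating reads first and useless writes last, using $g\notin G_1$ to place the projection in $x\shuffle y$. The only cosmetic difference is that the paper packages the prefix closure of $v\parallel L(\ES)$ as an auxiliary LTS $Q$ before performing the same merge.
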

\begin{proof}
	Let \(x \in \shuffle_{g'\in G_1} LL_{g'}\) and \(y\in LL_{g}\).
	We prove: if \(v \parallel L(\ES) \parallel x \neq \emptyset\) and
	\(v \parallel L(\ES) \parallel y \neq \emptyset\) then 
	\(v\parallel L(\ES) \parallel (x \shuffle y) \neq \emptyset\).

	Let \(v=v_0\ldots v_{n-1}\), where \(v_i\in \G(r_d, w_d, \fw_c, w_c)\) for
	each \(i\), \(0\leq i< n\), and consider the asynchronous product \(v
	\parallel L(S^E)\). We define the LTS \(Q\) over alphabet \(\Sigma_E\) as follows. 
	The states of \(Q\) are the set \(\set{ (V_i,g) \mid 0\leq i\leq n,\,
	g\in\G\cup\set{\$} }\). The initial state is \( (V_0,\$)\).
	There is a transition \( (V_i,g) \by{a} (V_j,g')  \) between two states of \(Q\) if{}f
	one of the following condition holds:
	\begin{compactitem}
		\item \(a = r_d(g) = v_i\), \(j=i+1\), \(g'=g\);
		\item \(a = r_c(g)\), \(j=i\), \(g'=g\);
		\item \(a = w_d(g') = v_i\), \(j=i+1\);
		\item \(a = \fw_c(g') = v_i\), \(j=i+1\);
		\item \(a = w_c(g') = v_i\), \(j=i+1\);
		\item \(a = \uw_c\), \(j=i\), \(g'=\$\).
	\end{compactitem} 
        It is easy to see that \(L(Q)\) is the prefix closure of the language \(v \parallel L(S^{E})\).
	
        Since \(v \parallel L(\ES) \parallel x \neq
	\emptyset\) and \(v \parallel L(\ES) \parallel y \neq \emptyset\) by hypothesis, there exist two words
	\(\sigma_x\) and \(\sigma_y\) of \(Q\) such that \(\sigma_x\parallel x \neq
	\emptyset\) and \(\sigma_y\parallel y \neq \emptyset\).
	Let \(\sigma'_x\) be the word resulting from erasing all symbols
	in \(\G(\uw_c,r_c)\) from \(\sigma_x\), and define \(\sigma'_y\) similarly.
	It can be easily checked that \(\sigma'_x=\sigma'_y=v\).
	Therefore, there exist \(\alpha_i, \beta_i \in \G(\uw_c,r_c)^*\) for every \(i\), \(0\leq i\leq n\) such that
	\begin{align*}
		\sigma_x &= \alpha_0 v_0 \alpha_1 v_1 \ldots \alpha_{n-1} v_{n-1} \alpha_n\\	
		\sigma_y &= \beta_0 v_0 \beta_1 v_1 \ldots \beta_{n-1} v_{n-1} \beta_n
	\end{align*}
	The definition of \(Q\) further shows that 
	if \(v_{i-1}\) is a read or write to \(g\) then	\(\alpha_i = r_c(g)^{i_1}
	\uw_c^{i_2}  \) for some positive integer \(i_1,i_2\) and similarly \(\beta_i =
	r_c(g)^{j_1} \uw_c^{j_2}  \) for some positive integer \(j_1,j_2\).
	Moreover, if \(i_2+j_2>0\) then \(v_{i}\) must be a write. Define for each \(0\leq i \leq n\) the word
	\(\gamma_i= r_c(g)^{i_1 + j_1} \uw_c^{i_2+j_2}\), and let
	\(\sigma_{sh}=\gamma_0 v_0 \ldots \gamma_{n-1} v_{n-1} \gamma_n \).
	It is routine to check that \(\sigma_{sh}\in L(Q)\) holds, and so
	\(L(Q)\parallel \sigma_{sh}\neq \emptyset\). 
	Let \(\sigma'_{sh}\) be the result of erasing from \(\sigma_{sh}\) all symbols
	of \(\G(r_d,w_d)\) (which necessarily correspond to symbols of \(v\)). 
	Again we find that \( L(Q)\parallel \sigma'_{sh}\neq \emptyset\).  Finally,
	by the definition of \(\gamma_i\) and since \(g \notin G_1\) we get
	\(\sigma'_{sh}\in (x\shuffle y)\), and so \(L(Q)\parallel (x\shuffle
	y)\neq\emptyset\). By the definition of \(Q\) this implies \(v\parallel L(S^{E}) \parallel (x\shuffle
	y)\neq \emptyset\), and we are done.\qed
\end{proof}

\begin{proof}[of Thm.~\ref{thm:npnotsoeasy}]
\begin{compactenum}
	\item guess \(\tau=g_1\ldots g_d \in \Upsilon_{\#}\), in particular \(g_d=\#\);
	\item compute a FSA \(Q_{\tau}\) such that \(L(Q_{\tau})=P_{\tau}\cap L(\US)\); 
	\item compute the \texttt{scc}s of \(Q_{\tau}\); 
	\item guess an accepting path \(\pi_{\tau}\) of \(Q_{\tau}\) where no state repeats;
	\item compute the FSA \(q_{\tau}\) consisting exactly of the states and transitions of \(Q_{\tau}\) visited by \(\pi_{\tau}\) and the states and transitions of the \texttt{scc}s of \(Q_{\tau}\) visited by \(\pi_{\tau}\);
	\item check whether \(L(q_{\tau})\parallel L(S^{E}) \parallel LL_{g_i}\neq\emptyset\) for each \(i\), \(1\leq i\leq d\). 
\end{compactenum}

\noindent %
{\bf Non-deterministic polynomial time.} %
The two guesses are of polynomial size, the first in the size of \(\G\), the
second in the size of the FSA \(Q_{\tau}\) which can be computed in time polynomial
in the size of \(\G\) and \(\D\). Computing the \texttt{scc}s is done in time linear in the size of \(Q_{\tau}\) using Tarjan's algorithm \cite{CLR1990}.
Each of the \(d\) checks can be made in polynomial time.
To see this, we first notice that we can safely replace \(L(S^{E})\) by
\(L(S^{E})\cap P_{\tau}\) for which an FSM can be computed in polynomial time.
Therefore it is clear that each check can be made in polynomial time.

\noindent %
{\bf Correctness.} %
We want to show that 
\[ L(q_\tau) \parallel L(S^E) \parallel (\shuffle_{g\in \G} L(S_g)) \neq \emptyset 
\text{ if{}f }  L(q_\tau) \parallel L(S^E) \parallel LL_{g_i}\neq \emptyset  \text{ for each }i, 1\leq i\leq d\enspace .\]
For the only if direction it suffices to notice that by Lem.~\ref{lem:erasable} and
definition of \(P_{\tau}\) we can restrict the shuffle to those values occurring
in \(\tau\) only. Hence, the definition of \(L(S_g)\) and the shuffle operator
show the rest.

For the if direction, let \(v_1,\ldots,v_d\) be the \(d\) words from
\(L(q_{\tau})\) such that \( v_i \parallel L(S^{E}) \parallel LL_{g_i} \neq
\emptyset\) for each \(i\), \(1\leq i\leq d\).  By repeatedly applying
Lemma~\ref{lem:decomp} point 3 we find there exists \(v\in L(q_{\tau})\) such that \(v\succeq v_i\) for
every \(i\). Furthermore, Lemma~\ref{lem:coversuffices} shows that 
\(v \parallel L(S^{E}) \parallel LL_{g_i} \neq \emptyset\) for each \(i\).

Finally, we conclude from repeated applications of Lemma~\ref{lem:composition1}
that \( v\parallel L(S^{E})\parallel (\shuffle_{i=1}^d LL_{g_i})\neq \emptyset\),
hence that \(L(q_\tau) \parallel L(S^E)
\parallel (\shuffle_{g\in \G} L(S_g)) \neq \emptyset\) by definition of
\(L(S_g)= \pre(LL_g)\) and we are done.\qed
\end{proof}

%
%

\section{PSPACE lower bound}\label{sec:pspacehard}

\begin{proof}[of Theorem~\ref{th:pspace-hard}]
We give a reduction from the acceptance problem of a linear space-bounded
deterministic Turing machine to the complement of the safety verification
problem.  Fix a deterministic TM $M$ that on input of size
$n$ uses at most $n$ tape cells and accepts in exactly $2^{n}$ steps.  We are
given an input $x$ and want to check if $M$ accepts $x$.  An accepting run is a
sequence of TM configurations \(c_0 \rightarrow c_1 \rightarrow \ldots
\rightarrow c_{2^n}\), where $c_0$ is the initial configuration (the input $x$
is written on the tape, the head points to the leftmost cell, and the TM is in
its initial state), there is a transition of the TM from $c_i$ to $c_{i+1}$ for
$i=0,\ldots,2^n-1$, and $c_{2^n}$ is accepting.  Following the above
assumptions, configurations of \(M\) can be encoded by words of fixed length.

We define a \((\D,\C)\)-network that simulates \(M\) and such that
\(\D=\lts{P_D}\) and \(\C=\lts{P_C}\) where \(P_D\) and \(P_C\) are dPDMs.
The leader and the contributors co-operatively simulate computations of \(M\) using their stack, and
also use the stack to count up to $2^n$ steps. 
We start by describing the basic gadgets used in the simulation.

\smallskip %
\noindent %
{\it Counting to $2^n$ using $n$ stack symbols.} %
We show how a contributor can use its stack to count down from $2^n$.
Consider a stack alphabet of with $n+2$ symbols
$\set{\mathsf{I}_0,\ldots, \mathsf{I}_{n}} \cup \set{\$}$, where $\$$ is a special
bottom-of-stack marker.
Given a stack over this alphabet,
the contributor PDM, provided the top of the stack is \(\mathsf{I}_i\) for some \(0\leq i\leq n\), performs a \texttt{decrement} operation defined as follows:
\begin{compactenum}
\item While the top of the stack is $\mathsf{I}_i$ for some $i>0$, do \texttt{pop}($\mathsf{I}_i$) \texttt{;} \texttt{push}($\mathsf{I}_{i-1}$) \texttt{;} \texttt{push}($\mathsf{I}_{i-1}$)\texttt{;}
\item \texttt{pop}($\mathsf{I}_0$) and \texttt{return};
\end{compactenum}
Suppose initially the stack contains $\mathsf{I}_{n}\$$ (the bottom of the stack is to the right).
Then, we reach a stack with $\$$ on the top exactly after popping \(\mathsf{I}_0\) $2^n$ times, that is after performing \(2^n\) times the \texttt{decrement} operation.

\begin{comment}
%
\begin{figure}
\vspace{0pt}%
\begin{subfigure}{\textwidth}
\hspace{\stretch{1}}%
\begin{minipage}[t]{0.25\linewidth}
\begin{verbatim}
Step-Exchange() {
  r(start-round)
  sym1 = pop()
  w(step-sym1)
  r(ack-step-sym1)
  sym2 = pop()
  w(step-sym2)
  r(ack-step-sym2)
  ...
  symn = pop()
  w(step-symn)
  r(ack-step-symn) 
  r(stop-round) 
}
\end{verbatim}
\end{minipage}
%
\hspace{\stretch{1}}%
\begin{minipage}[t]{0.25\linewidth}
\begin{verbatim}
Leader-Exchange() {
  w(start-round)
  r(step-sym1)
  w(store-sym1)
  r(ack-store-sym1)
  w(ack-step-sym1)
  ...
  r(step-symn)
  w(store-symn)
  r(ack-store-symn)
  w(ack-step-symn)
  w(stop-round)   
}
\end{verbatim}
\end{minipage}
%
\hspace{\stretch{1}}%
\begin{minipage}[t]{0.25\linewidth}
\begin{verbatim}
Store-Exchange() {
  r(start-round)
  r(store-sym1)
  push(sym1)
  w(ack-store-sym1)
  ...
  r(store-symn)
  push(symn)
  w(ack-store-symn)
  r(stop-round)
}
\end{verbatim}
\end{minipage}%
\hfill
\caption{\texttt{Exchange}: protocol to exchange configurations from a step- to a store-contributor through the leader acting as a proxy.\label{pspace-config-exchange}}
\end{subfigure}\\
\begin{subfigure}{\textwidth}
\includegraphics[width=\textwidth]{pspace.pdf}
\caption{Overview of a round. Assume that initially, step and store respectively have \(I_i\$\) and $c_j^R I_i\$$ on their stack. 
First, the step decrements the counter (e.g., by replacing the top symbol $\mathsf{I}_i$
by $\mathsf{I}_{i-1}\mathsf{I}_{i-1}$).
Then, after {\tt OneStep} the stack of step contains a successor $c_{j+1}$ of $c_j$.
At this point, the store decrements its counter, so that both step and store
have $2^{i}-1$ as their count.
Finally, {\tt Exchange} moves \(c_{j+1}\) from the stack of the step to the one of the store while being reversed.
Thus, after $2^i$ rounds, the stack of step and store respectively are $\$$ and \(c_{j+2^i}^R \$\).\label{pspace-one-step}}
\end{subfigure}%
\vspace{-.3cm}
\caption{Overview of the simulation of a Turing Machine}
\end{figure}

\end{comment}

\smallskip %
\noindent %
{\it Computing one step of the TM.} %
In the construction, we simulate one step of the \(M\)
by sending a configuration from the leader to contributors,
and then sending back the next configuration from contributors to the leader.

Assume the reverse of a configuration of the Turing machine is
stored as a word $w$ of length $n$ in the stack of the leader and the stack
of the contributors is empty. 
We want to ensure that at the end of the protocol, the stack of the leader
contains the reversal of a successor configuration
of \(M\), and the stack of the contributor is again empty.

We use the following protocol.
The leader and contributors use their finite set of control states to
count till $n$.
The leader pops one symbol of $w$ at a time from its stack, writes it
on to the global store, and waits for an acknowledgment from some 
contributor that the symbol has been received.
Conversely, the contributors read the letters of $w$ one symbol at a time from
the global store.
Moreover, using its finite state, the contributors compute the successor configuration
of the configuration that is received from the leader and store it on to their stacks. 
Additionally, a contributor sends an acknowledgment for the receipt of each symbol read
from the leader.

After $n$ steps of the leader and contributors, the stack of the leader is empty and the
stack of the contributor contains $w'$ where $w'$ can be reached from $w^R$ by
executing one step in \(M\).

Notice that at the end of this part of the protocol, in spite of asynchronous reads and writes,
the leader is certain that all $n$ symbols were received in order,
but not necessarily by the same contributor.

The second part of the protocol sends this configuration back from a contributor to the leader.
Again, the leader and the contributor use their finite state to count till $n$.
The contributor sends $n$ symbols one at a time to the leader, and waits for an acknowledgement
to check that the leader read the same symbol it transferred.
After $n$ steps, the leader's stack contains the reverse of $w'$ and the contributor's stack is again
empty.
Moreover, the contributor is certain that the entire configuration has been correctly received by the
leader.

Notice that even in the presence of non-atomic reads and writes, if the leader and {\em some} contributor 
successfully reach the end of the protocol, then the leader and that particular contributor has
faithfully simulated one step of the machine.
However, we cannot ensure that the same contributor participated in one whole round of the protocol,
always reading and writing the latest values and faithfully simulating one step of the Turing machine.
For example, it is possible that several contributors, that have simulated the Turing machine
for different number of steps, participate in the protocol. 
The simulation catches these discrepancies by counting, as described below.

\smallskip %
\noindent %
{\it The reduction.} %
Initially, all contributors push $\mathsf{I}_{n}\$$ onto their stacks.
The leader pushes $\mathsf{I}_{n}\$$ onto the stack, and additionally,
the reverse of the starting configuration of the Turing machine.

Then, the leader and contributors execute the protocol described above.
At the end of the first part of a round, the leader perform a \texttt{decrement}.
At the end of the second part of a round, the contributor perform a \texttt{decrement}.

The network accepts the computation (e.g., by outputting a special
symbol $\#$) if (1) both the leader and some contributor
count up to $2^n$, and (2) at that point, the Turing machine is in an accepting configuration.

Notice that if the leader interacts with the same contributor for $2^n$ rounds, then
both of them will simultaneously reduce the counter on the stack to $\$$ at the same
time, and thus, would have correctly simulated the Turing machine for $2^n$ steps.
So, if the stack encodes an accepting configuration, the Turing machine accepts.

Conversely, if the leader interacts with multiple contributors in different rounds,
then there will not be any contributor whose count reaches $2^n$ simultaneously with
the leader.
All such computations are not faithful simulations of the Turing machine and none
of them therefore lead to accepting the computation.

Finally, we note that in the above reduction, all processes are deterministic machines.
\qed
\end{proof}



%
%

\section{Language Theoretic Constructions}\label{sec:language}

We now complete the proof of Theorem~\ref{thm:pspaceeasy} by providing the
language-theoretic constructions.  We assume familiarity with basic formal
language theory \cite{sipser}.

A \emph{context-free grammar} (CFG) is a tuple $G=(\mathcal{X},\Sigma,\prod,X_0)$ where
$\mathcal{X}$ is a finite set of \emph{variables} containing the \emph{axiom} \(X_0\), $\Sigma$ is an
alphabet, $\prod \subseteq \mathcal{X}\times (\Sigma\cup \mathcal{X})^*$ is a
finite set of \emph{productions} (the production $(X,w)$ may also be noted
$X\rightarrow w$).
The size of a production $X\rightarrow w$ is $|w|+2$.
The size \(|G|\) of a CFG \(G\) is the sum of all the sizes of productions in $\prod$.
A CFG \(G=(\mathcal{X},\Sigma,\prod,X_0)\) is in Chomsky normal form (CNF) if{}f
$\prod \subseteq \bigl(\mathcal{X} \times (\Sigma \cup \mathcal{X}^2)\bigr) \cup \set{(X_0,\varepsilon)}$.
A CFG can be converted to CNF in time polynomial in its size. 

Given two strings $u,v \in (\Sigma \cup \mathcal{X})^*$ we define a \emph{step} relation
$u \Rightarrow v$ if there exists a production $(X, w)\in\prod$ and some words
$y,z \in (\Sigma \cup \mathcal{X})^*$ such that $u=yXz$ and $v=ywz$.
A step is further said to be \emph{leftmost} if \(y\in \Sigma^*\), that is the
production is applied on the leftmost variable of \(u\).
We use
$\Rightarrow^*$ to denote the reflexive transitive closure of $\Rightarrow$.
The \emph{language} of \(G\) is \(L(G)=\set{w\in\Sigma^*\mid X_0\Rightarrow^*
w}\) and we call any sequence of steps from \(X_0\) to \(w\in \Sigma^*\)
a \emph{derivation}. 
A derivation is \emph{leftmost} if it is a sequence of leftmost steps. 

Given a CFG with relation $\Rightarrow$ between strings,
for every \(k\geq 1\) we define the subrelation
\(\By{k}\) of \(\Rightarrow\) as follows: 
\(u \By{k} v\) if{}f \(u\Rightarrow v\) and both \(u\) and \(v\)
contain at most $k$ occurrences of variables.  We denote by
\(\Bystar{k} \) the reflexive transitive closure of
\(\By{k}\).  The \emph{\(k\)-index language} of \(G\) is
\(L^{(k)}(G)=\{w\in\Sigma^* \mid X_0 \Bystar{k} w\}\) and we
call the sequence of steps from \(X_0\) to \(w\in\Sigma^*\) a \emph{\(k\)-index
derivation}.  

The following properties holds: \(\By{k} \subseteq \By{k+1}\) for all
\(k\geq 1\); if \(B \Bystar{k-1} w\) then \( BC
\Bystar{k} w C\); moreover, if \( BC \Bystar{k} w\), then there exist \(w_1, w_2\) such that
\(w = w_1 w_2\) and either \((i)\) \(B \Bystar{k-1} w_1 \),
\(C \Bystar{k} w_2\), or \((ii)\) \(C
\Bystar{k-1} w_2\) and \(B \Bystar{k} w_1 \).

\smallskip %
\noindent %
{\it Asynchronous product of CFGs and FSAs.} %
Given a CFG $G$ and a FSA $A$, 
we now define a CFG $G \bowtie A$ such that $L(G\bowtie A) = L(G) \parallel L(A)$.
Without loss of generality, we assume the set of accepting states of
\(A\) is a singleton. We further show that the \(k\)-index language of
\(G \bowtie A\) is the asynchronous product of the \(k\)-index
language of \(G\) and the language of $A$.

\begin{definition}
  \label{def:bowtie}
  Given a CFG \(G=(\mathcal{X},\Sigma_g,\prod,X_0)\) in CNF and a FSA
  \(A=(\Sigma_a,Q,\delta,q_0,\set{q_f})\), we define 
  \(G\bowtie A\) as the CFG \(G^{\bowtie}=(\mathcal{X}^{\bowtie}, \Sigma_{\bowtie}, \prod^{\bowtie}, X_0^{\bowtie})\).
  First replace in \(G\) every production of the form
  \(X\rightarrow \sigma(\in\Sigma_g\cup\set{\varepsilon})\) by two productions \(X\rightarrow \sigma \bot\) and \(\bot \rightarrow \varepsilon\)
  where \(\bot\) is a variable not in \(\mathcal{X}\). This modified
  grammar is again referred to as \(G=(\mathcal{X},\Sigma_g,\prod,X_0)\).

  Then  define \(G^{\bowtie}=(\mathcal{X}^{\bowtie},\Sigma_{\bowtie},\prod^{\bowtie},X_0^{\bowtie})\) as follows:
  \begin{compactitem}
    \item \(\mathcal{X}^{\bowtie}= Q\times \mathcal{X}\times Q\); \(\Sigma_{\bowtie}=\Sigma_g\cup \Sigma_a\); \(X_0^{\bowtie}= \tuple{q_0,X_0,q_f}\);
    \item \(\prod^{\bowtie}\) contains no more than the following transitions:
      \begin{compactitem}
        \item if \(X\rightarrow \sigma \bot \in \prod\)  and \(\sigma\notin \Sigma_a\) then \( \tuple{q,X,q'} \rightarrow \sigma \tuple{q,\bot,q'}\in\prod^{\bowtie} \)
        \item if \(\sigma\notin \Sigma_g\) and \( (q,\sigma,q')\in \delta\) then \(\tuple{q,X,q''} \rightarrow \sigma \tuple{q',X,q''}\in\prod^{\bowtie} \)
        \item if \(X\rightarrow \sigma \bot \in \prod\), \((q,\sigma,q')\in \delta\) and \(\sigma\neq\varepsilon\) then \(\tuple{q,X,q''} \rightarrow \sigma \tuple{q',\bot,q''}\in\prod^{\bowtie}
	  \)
        \item if \(X\rightarrow Y Z \in \prod\) then \(\tuple{q_1,X,q_2} \rightarrow \tuple{q_1,Y,q'}\tuple{q',Z,q_2}\in\prod^{\bowtie} \)
        \item if \(q\in Q\) then \(\tuple{q,\bot,q}\rightarrow \varepsilon\in\prod^{\bowtie}\)
      \end{compactitem} 
  \end{compactitem} 
\end{definition}

\begin{proposition}
Let \(G\), \(A\) and \(G^{\bowtie}\) as in def.~\ref{def:bowtie}. We have
$L^{(k)}(G \bowtie A) = L^{(k)}(G) \parallel L(A)$ for every $k \geq 1$,
hence $L(G \bowtie A) = L(G) \parallel L(A)$. 
Moreover, \(G^{\bowtie}\) is computable in time polynomial in \(|G|+|A|\).
\label{prop:bowtie}
\end{proposition}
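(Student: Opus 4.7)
The plan is to prove the index-preserving identity $L^{(k)}(G \bowtie A) = L^{(k)}(G) \parallel L(A)$ for every $k \geq 1$; the plain-language identity $L(G \bowtie A) = L(G) \parallel L(A)$ then follows at once by taking the union over $k$, using $\bigcup_k L^{(k)}(G) = L(G)$ and the distributivity of $\parallel$ over union. Polynomial-time computability is routine: the variable set $\mathcal{X}^{\bowtie}$ has size $O(|Q|^2 \cdot |\mathcal{X}|)$ and the number of productions listed in Definition~\ref{def:bowtie} is polynomial in $|G|+|A|$, each of size at most $O(|Q|)$, so $G \bowtie A$ is constructible in time polynomial in $|G| + |A|$.

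The heart of the argument is a correspondence lemma: for every $q, q' \in Q$, every variable $Y$ of (the preprocessed) $G$, every $w \in \Sigma_{\bowtie}^*$ and every $k \geq 1$,
\[ \tuple{q, Y, q'} \Bystar{k} w \text{ in } G \bowtie A \;\;\iff\;\; Y \Bystar{k} \proj_{\Sigma_g}(w) \text{ in } G \;\text{ and }\; q \by{\proj_{\Sigma_a}(w)} q' \text{ in } A. \]
The statement of the proposition follows by specialising to $q = q_0$, $Y = X_0$, $q' = q_f$. The key observation that ties the two index bounds together is that each of the five production schemas of Definition~\ref{def:bowtie} changes the number of triple-variables in a sentential form by exactly the same amount that the associated $G$-production, if any, changes the number of $G$-variables: types 1 and 3 mirror a $G$-step on a single variable, type 5 mirrors $\bot \to \varepsilon$, type 4 mirrors $X \to YZ$, and type 2 is a pure $A$-move that preserves the triple-variable count. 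The indices of matched derivations therefore coincide exactly.

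For the forward direction I would induct on the length of a derivation $\tuple{q, Y, q'} \Bystar{k} w$ and case-split on the first production applied. Types 1, 2, 3, and 5 yield linear recursions that the induction hypothesis discharges directly; for type 4, which expands $\tuple{q_1, X, q_2}$ into $\tuple{q_1, Y, q'}\tuple{q', Z, q_2}$, I invoke the decomposition property of $k$-index derivations recalled earlier in the excerpt to obtain $w = w_1 w_2$ with sub-derivations of indices $k-1$ and $k$, apply the induction hypothesis to each factor, glue the two $A$-runs at the intermediate state $q'$, and combine the two $G$-sub-derivations via $X \to YZ$. For the backward direction I would start from a leftmost $G$-derivation $Y \Bystar{k} w_g$, an $A$-run $q \by{w_a} q'$, and any $w$ whose projections onto $\Sigma_g$ and $\Sigma_a$ are $w_g$ and $w_a$ respectively, and build the $G \bowtie A$-derivation by induction on the length of the $G$-derivation: each $G$-step is mimicked by the matching type~1/3/4/5 production, while the $A$-only letters are consumed by interspersed type~2 productions applied to the current leftmost triple-variable. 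The main technical obstacle I expect is making this interleaving consistent: letters of $\Sigma_g \cap \Sigma_a$ must be emitted jointly via a type~3 rule (which is where the alignment between $G$-step and $A$-transition is enforced), and the intermediate $A$-state chosen in a type~4 step must match the natural split of the $A$-run induced by the split of $w_g$. Since type~2 productions never change the number of triple-variables, interspersing $A$-only moves cannot inflate the index beyond the bound imported from $G$, so the $k$-index bound is preserved throughout.
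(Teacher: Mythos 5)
Your overall strategy coincides with the paper's: the paper's entire proof of Prop.~\ref{prop:bowtie} consists of stating exactly your correspondence lemma---for all \(q,q'\in Q\), \(X\in\mathcal{X}\) and \(k\geq 1\), \(\tuple{q,X,q'}\Bystar{k} w\) in \(G\bowtie A\) if{}f \(w\in w^a\parallel w^g\) with \(X\Bystar{k} w^g\) in \(G\) and \(q\by{w^a}q'\) in \(A\)---and declaring that it suffices. You supply the same lemma together with the correct index-bookkeeping observation (each production schema of \(G\bowtie A\) changes the variable count of a sentential form exactly as its \(G\)-counterpart does, and type-2 productions leave it unchanged), and your forward direction, via the decomposition property of \(k\)-index derivations, is sound.

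There is, however, one step in your backward direction that would fail as written: you propose to ``start from a leftmost \(G\)-derivation \(Y\Bystar{k} w_g\)'', but not every word of \(L^{(k)}(G)\) admits a \emph{leftmost} derivation of index \(k\). For instance, with productions \(S\to AB\), \(A\to A_1A_2\), \(A_1\to a\), \(A_2\to a\), \(B\to b\), the word \(aab\) has the index-\(2\) derivation \(S\Rightarrow AB\Rightarrow Ab\Rightarrow A_1A_2b\Rightarrow aA_2b\Rightarrow aab\), yet every leftmost derivation must pass through \(A_1A_2B\) and hence has index \(3\). (This asymmetry is precisely why the decomposition property recalled in the paper, and the emptiness algorithm of Theorem~\ref{thm:emptiness}, allow the lower-index sub-derivation to sit on \emph{either} side of \(BC\).) Normalizing to leftmost form can therefore inflate the index and loses the bound you need for the inclusion \(L^{(k)}(G)\parallel L(A)\subseteq L^{(k)}(G\bowtie A)\). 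The repair is local and keeps the rest of your argument intact: instead of a leftmost derivation, induct on the structure of the given \(k\)-index derivation of \(w_g\); when it begins with \(X\to BC\), use the decomposition \(w_g=w_{g,1}w_{g,2}\) with \(B\Bystar{k-1}w_{g,1}\) and \(C\Bystar{k}w_{g,2}\) (or the symmetric case), split the interleaving \(w\) and the \(A\)-run accordingly to pick the intermediate state of the type-4 production, and expand the two triple-variables in the same order as the \(G\)-derivation does, interspersing the type-2 steps on whichever triple-variable is currently being expanded.
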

\begin{proof}
It suffices to show that for each \(q,q'\in Q\), \(X\in\mathcal{X}\) and \(k\geq 1\): \(\tuple{q,X,q'}\By{k}^* w\)
if{}f \(w\in w^a \parallel w^g\)  and \(X\By{k}^* w^g\) in
\(G\) and \(q\by{w^a} q'\) in \(A\).
From Def.~\ref{def:bowtie} it is clear that \(G^{\bowtie}\) is computable
in time polynomial in \(|G|\) and \(|A|\).
\qed
\end{proof}

\smallskip %
\noindent %
{\it Computing a FSA supporting \(L(G)\).} %
We first show that, given a CFG $G$, 
one can construct a FSA accepting a support of \(L(G)\) and whose
size is at most exponentially larger than the size of $G$.

\begin{theorem}{}\label{thm:min-cfl}
  Given a CFG $G=(\mathcal{X},\Sigma,\prod,X_0)$ in CNF with $n$ variables, we can compute 
a FSA $A$ with $O(2^{n \log(n)})$ states such that \(L(A)\) is a support of \(L(G)\). 
\end{theorem}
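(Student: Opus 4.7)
The plan is to take $A$ to accept the yields of \emph{simple} derivation trees of $G$, where a derivation tree is simple when no variable label repeats along any root-to-leaf path. I first argue this set is a support of $L(G)$: given any derivation tree $T$ with yield $w$, if a variable $X$ occurs both at an ancestor $u$ and a descendant $v$ on some root-to-leaf path, replacing the subtree rooted at $u$ by the one rooted at $v$ produces a valid derivation tree whose yield $w'$ results from $w$ by erasing the symbols produced by the $u$-subtree outside the $v$-subtree, so $w' \preceq w$. Iterating terminates (the tree shrinks) and outputs a simple tree whose yield is still a subword of $w$. Because $G$ is in CNF, every root-to-leaf path of a simple tree visits at most $n$ distinct variables, so the pending stack in a leftmost derivation of such a tree has size at most $n$.

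I then design $A$ to simulate leftmost derivations of simple trees. Its states are annotated stacks $(Y_1, A_1)(Y_2, A_2) \cdots (Y_k, A_k)$ with $0 \leq k \leq n$, where $Y_i \in \mathcal{X}$ is an unexpanded variable, $A_i \subseteq \mathcal{X}$ is the set of ancestor labels of the node labelled $Y_i$, $Y_i \notin A_i$, and $A_1 \supseteq A_2 \supseteq \cdots \supseteq A_k$ (equal consecutive values correspond to siblings). The initial state is the singleton $(X_0, \emptyset)$ and the accepting state is the empty stack. Each production $X \to Y Z$ of $G$ induces an $\varepsilon$-transition replacing a top $(X, A)$ by $(Y, A \cup \{X\})(Z, A \cup \{X\})$ provided $Y, Z \notin A \cup \{X\}$; each production $X \to a$ induces an $a$-transition popping the top $(X, A)$; and $X_0 \to \varepsilon$, if present, sends $(X_0, \emptyset)$ to the empty stack. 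A routine induction on derivation length then shows that the language accepted by $A$ coincides with the yields of simple leftmost derivations of $G$, so $L(A)$ is a support of $L(G)$.

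For the state count, I encode each state by separately specifying the chain of ancestor sets and the variable labels. The chain $A_1 \supseteq \cdots \supseteq A_k$ is determined by assigning to each $X \in \mathcal{X}$ its deepest level $\max \{ j : X \in A_j \}$, or $0$ if $X$ never occurs, giving at most $(n+1)^n$ chains; the $k \leq n$ labels $Y_i$ then contribute at most $\sum_{k=0}^{n} n^k \leq n^{n+1}$ further choices. The resulting bound is $O(n^{2n+1}) = O(2^{n \log n})$. The main obstacle will be the careful verification that the transition relation above captures exactly the simple leftmost derivations: namely, showing that the structural invariants $Y_i \notin A_i$ and $A_i \supseteq A_{i+1}$ are preserved by every transition and are enough to exclude non-simple derivations, so that no spurious computations inflate $L(A)$ beyond $L(G)$.
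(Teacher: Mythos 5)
Your overall strategy is the same as the paper's: compress repetitions of a variable along a root-to-leaf path (each compression replaces the yield by a subword), conclude that a bounded-depth/bounded-index fragment of $L(G)$ is a support, and then recognize that fragment with a FSA whose states are the stacks of pending variables of a leftmost derivation, of length at most $n$. Your subtree-replacement argument is exactly the paper's Lemma on leftmost $n$-index derivations in tree form, and your stack-length invariant ($k \leq |A_1|+1 \leq n$) is sound.

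The genuine problem is the final count. Your states carry, besides the $k\le n$ pending variables, the chain of ancestor sets, and your own tally gives $O\bigl((n+1)^n \cdot n^{n+1}\bigr) = O(n^{2n+1})$ states. The asserted identity $O(n^{2n+1}) = O(2^{n\log n})$ is false: $n^{2n+1} = 2^{(2n+1)\log_2 n}$, which is roughly the \emph{square} of $2^{n\log_2 n}$ and is not $O(2^{n\log n})$ for any base of the logarithm. So as written you prove a $2^{O(n\log n)}$ bound, not the $O(2^{n\log n})$ bound claimed by the theorem. The fix is to drop the ancestor-set annotations entirely: you do not need the automaton to accept \emph{exactly} the yields of simple trees, only some language sandwiched between a support and $L(G)$. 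Since any word accepted by the plain pending-stack automaton is derivable in $G$ (so $L(A)\subseteq L(G)$), and every simple-tree yield has pending stacks of length at most $n$ (so $L(A)$ contains your support), the unannotated automaton with states $\mathcal{X}^{\le n}$ already works and has $O(n^n)=O(2^{n\log_2 n})$ states --- this is precisely the paper's construction. (Your weaker bound would in fact still suffice for the PSPACE application downstream, since states remain polynomial-size, but it does not establish the theorem as stated.)
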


The proof of Theorem~\ref{thm:min-cfl} requires the following technical lemma.

\begin{lemma}
Let $G=(\mathcal{X},\Sigma,\prod,X_{\imath})$ be in CNF and \(D\colon X_0
\Rightarrow^* v\in \Sigma^*\) a leftmost derivation for some
\(X_0\in\mathcal{X}\). There exists $v' \preceq v$ and a leftmost \(n\)-index
derivation \(D'\colon X_0 \Bystar{n} v'\), where \(n\) is the number of
distinct variables appearing in \(D\). 
\label{lem:mininduction}
\end{lemma}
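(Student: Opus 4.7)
The plan is to prove the lemma by induction on the length $\ell$ of the derivation $D$, with the induction hypothesis uniformly quantified over the starting variable. For the base case $\ell=1$, the single step $X_0 \Rightarrow v$ satisfies $v \in \Sigma \cup \{\varepsilon\}$ (by CNF), uses only $X_0$, and is itself a leftmost $1$-index derivation, so taking $v' = v$ works with $n = 1$.

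For the inductive step $\ell > 1$, the first production of $D$ must have the form $X_0 \to YZ$, and because $D$ is leftmost I can decompose it uniquely as
\[ X_0 \Rightarrow YZ \;\Rightarrow^*\; wZ \;\Rightarrow^*\; wv'' = v, \]
where $D_Y \colon Y \Rightarrow^* w$ and $D_Z \colon Z \Rightarrow^* v''$ are leftmost sub-derivations of length strictly less than $\ell$. Writing $n_Y = |\mathrm{Var}(D_Y)|$ and $n_Z = |\mathrm{Var}(D_Z)|$, note that $\mathrm{Var}(D) = \{X_0\} \cup \mathrm{Var}(D_Y) \cup \mathrm{Var}(D_Z)$, and that $n_Y \ge 1$ since $Y$ itself must be expanded. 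I split on whether $X_0$ recurs in $D_Y$. If $X_0 \notin \mathrm{Var}(D_Y)$, the IH yields leftmost derivations $Y \Bystar{n_Y} w'$ and $Z \Bystar{n_Z} v'''$ with $w' \preceq w$ and $v''' \preceq v''$; these compose, via the context rule $B \Bystar{k-1} w \Rightarrow BC \Bystar{k} wC$ cited just before the lemma (and the trivial observation that prepending a terminal string on the left does not affect the index), into the leftmost derivation $X_0 \By{2} YZ \Bystar{n_Y+1} w'Z \Bystar{n_Z} w'v'''$. Since $X_0 \notin \mathrm{Var}(D_Y)$ forces $n \ge n_Y + 1$, and $n \ge n_Z$ is immediate, the overall index $\max(n_Y+1,\,n_Z)$ is at most $n$, and $w'v''' \preceq v$ as required. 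Otherwise $X_0 \in \mathrm{Var}(D_Y)$: leftmostness of $D_Y$ ensures that when $X_0$ is first re-expanded it is the leftmost variable of the current sentential form, so all subsequent steps devoted to its descendants form a contiguous leftmost sub-derivation $D_X \colon X_0 \Rightarrow^* \gamma$ with $\gamma$ a factor of $w$. Since $D_X$ has length strictly less than $\ell$ and $\mathrm{Var}(D_X) \subseteq \mathrm{Var}(D)$, the IH directly supplies $X_0 \Bystar{n} \gamma'$ with $\gamma' \preceq \gamma \preceq v$.

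The main obstacle is the ``$+1$'' incurred by carrying $Z$ along the right during the $Y$-phase of the composition: an $n_Y$-index derivation of $Y$ becomes an $(n_Y+1)$-index derivation inside the context $YZ$. This extra unit must be absorbed by a variable in $\mathrm{Var}(D)$ that lies outside $\mathrm{Var}(D_Y)$, and the most natural such witness is $X_0$ itself, which is precisely why the induction splits on whether $X_0$ reappears inside $D_Y$ and, in the alternative case, recurses on the strictly smaller leftmost sub-derivation rooted at a fresh occurrence of $X_0$.
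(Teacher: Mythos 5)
Your proof is correct, but it takes a genuinely different route from the paper's. The paper proves the lemma by an outer induction on the number $m$ of self-embedding sub-derivations $X \Rightarrow^* w X \alpha \Rightarrow^* w w' \alpha$ with $w \neq \varepsilon$ (excising one such loop at a time, which only shrinks the derived word in the order $\preceq$), and then, for $m=0$, by an inner induction on the number of distinct variables in which the $(n-1)$-index hypothesis is applied to \emph{both} branches of $X_0 \Rightarrow BC$; this is why the paper must argue that $X_0$ is absent from both sub-derivations. You instead run a single induction on derivation length and split only on whether $X_0$ recurs in the \emph{left} branch. Your Case~1 exploits the asymmetry of leftmost composition --- only the left factor pays the $+1$ for carrying $Z$ alongside it, so the right factor may reuse $X_0$ freely and needs only the trivial bound $n_Z \le n$ --- and your Case~2 simply discards the entire outer context and recurses on the contiguous leftmost sub-derivation rooted at the recurring occurrence of $X_0$, which is legitimate precisely because the lemma only demands a subword of $v$. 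Both arguments deliver the same bound $n$; yours replaces the paper's two nested inductions by one, and conditions the case split directly on the property that is actually needed ($X_0 \notin \mathrm{Var}(D_Y)$) rather than deriving it from a global count of pumping pairs, which also quietly sidesteps the corner case of re-occurrences of $X_0$ with empty left terminal context.
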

\begin{proof}
	By induction on the number \(m\) of sequences of steps of the
	form \(X \Rightarrow^* w X \alpha \Rightarrow^* w w' \alpha\) with
	\(w\neq\varepsilon\) occurring in \(D\).

	{\it Basis.} \(m=0\).
	The proof for this case is by induction on the number \(n\) of distinct
	variables appearing in \(D\).

	{\it Basis.} \(n=1\). Because \(G\) is in CNF and the assumption \(m=0\), \(D\) necessarily is such that \(X_0\Rightarrow v\in\Sigma\). Hence setting \(v'=v\) we find
	that \(X_0\Bystar{n}v'\) which concludes the case.

	{\it Step.} \(n>1\). Because \(G\) is in CNF and the assumption \(m=0\), it must
	be the case that \(D\) has the following form \(X_0\Rightarrow B C
	\Rightarrow^* w_1 C \Rightarrow^* w_1 w_2 = v\). Moreover \(m=0\) shows that
	\(X_0\) does not appear in the subsequence of steps \(D_1\colon
	B\Rightarrow^* w_1\) and \(D_2\colon C \Rightarrow^* w_2\).
	The number of distinct variables appearing in \(D_1\) and \(D_2\) being at
	most \(n-1\) we conclude, by induction hypothesis, that there exists leftmost
	\( (n-1)\)-index derivations \(D'_1\colon B\Bystar{n-1} w'_1\) and \(D'_2\colon C\Bystar{n-1}
	w'_2\) with \(w'_1 w'_2 \preceq w_1 w_2 = v\), hence that there exists a
	derivation \(D'\colon X_0 \Bystar{n} B C \Bystar{n} w'_1 C \Bystar{n} w'_1
	w'_2 =v'\) such that \(v'\preceq v\) and we are done.

	{\it Step.} \(m>0\). 
	Therefore in \(D\) there exists some variable \(X\) such that
	\(X \Rightarrow^* w X \alpha \Rightarrow^* w w' \alpha\) and
	\(w\neq\varepsilon\). Define the derivation \(D'\) given by \(D\) where the
	above subsequence of steps is replaced by \(X\Rightarrow^* w'\). Clearly we
	have that the word \(v'\) produced by \(D'\) is a subword of \(v\), the word
	produced by \(D\).  Moreover the above transformation on \(D\) allows to use
	the induction hypothesis on \(D'\), hence we find that there exists there
	exists a leftmost \(n\)-index derivation \(D''\colon X_0 \Bystar{n} v''\) and
	\(v''\preceq v'\) and we are done since \(v''\preceq v'\preceq v\).
	\qed
\end{proof}

\begin{proof}[of Theorem~\ref{thm:min-cfl}]
From Lem.~\ref{lem:mininduction} it is easy to see 
that the words given by the leftmost \(n\)-index derivations is a support of \(L(G)\).
Recall that \(G\) is in CNF. Next we define a FSA \(A=(\Sigma,Q,\delta,q_0, F)\) such that 
\begin{inparaenum}[\itshape i\upshape)]
	\item \(Q=\set{ w \in \mathcal{X}^j \mid 0\leq j\leq n}\);\linebreak
	\item \(\delta=\set{(\alpha\, w,\gamma, \beta\, w)\mid (\alpha,\gamma\beta)\in\prod\land \gamma\in\Sigma\cup\set{\varepsilon}}\);
	\item \(q_0=X_0\);
	\item \(F=\set{\varepsilon}\).
\end{inparaenum}
It is easy to see that \(A\) simulates all the leftmost sequence of steps of
index at most \(n\) and accepts only when those corresponds to
\(n\)-index leftmost derivations, hence that \(L(A)\) is a support of \(L(G)\).
Also since \(n=|\mathcal{X}|\), \(Q\) has \(O(n^n)\) states, or equivalently \(O(2^{n\log(n)})\).
\qed
\end{proof}

From the construction, it is clear that there is a polynomial-space bounded algorithm
(in $|G|$) that can implement the transition relation of $A$, that is, given an
encoding of a state of $A$, produce iteratively the successors of the state.

\smallskip %
\noindent %
{\it Covering Context-free Languages by Bounded-Index Languages.} %
Our second construction shows that, given a CFG $G$, we can construct
an $O(|G|)$-index language that is a cover of $L(G)$.

Given a CFG $G=(\mathcal{X},\Sigma,\prod,X_0)$ 
and $X, Y \in \mathcal{X}$, we say that
$Y$ \emph{depends on} $X$ if $G$ has a production $X \rightarrow \alpha Y \beta$ 
for some $\alpha, \beta \in (\mathcal{X} \cup \Sigma)^*$, or if there is a 
variable $Z$ such that $Y$ depends on $Z$ and $Z$ depends on $X$. 
A {\em strongly connected component} (SCCs) of $G$ if a maximal subset 
of mutually dependent variables.

\begin{theorem}{}\label{thm:bamboo}
Let $G=(\mathcal{X},\Sigma,\prod,X_0)$ be a CFG in
CNF with $n$ variables and $k$ SCCs. 
Then \(L^{(n+2k)}(G)\) is a cover of \(L(G)\). 
\end{theorem}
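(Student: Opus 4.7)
The plan is to induct on the SCC DAG of $G$, proving the strengthened claim that for every variable $X$, every word in $L_X$ (the language generated from axiom $X$) admits a supersequence derivable from $X$ via a leftmost derivation of index at most $n_X + 2k_X$, where $n_X$ and $k_X$ count the variables and SCCs reachable from $X$'s SCC in the SCC DAG (inclusive). The theorem corresponds to the instance $X = X_0$. Throughout, I would track leftmost derivations in CNF as stack traces, so that the index at each step equals the current stack height of pending variables.

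The core is an intra-SCC pumping lemma: for a single SCC $\mathcal{S}$ with $m$ variables, every derivation starting from some $X \in \mathcal{S}$ whose leftmost variable stays inside $\mathcal{S}$ has a supersequence derivation that keeps at most $m+2$ $\mathcal{S}$-variables on the stack at any time. The argument: if more than $m$ variables from $\mathcal{S}$ ever accumulate, pigeonhole forces two stack slots to hold the same variable $Y$; strong connectedness of $\mathcal{S}$ lets us splice a pumping cycle at the lower $Y$ that reproduces, with possibly extra terminals, the work the upper $Y$ was going to perform, so the upper copy can be discarded. The two-slot slack accounts for the leftmost variable traversing the splice while still holding its own stack position.

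For the inductive step, a leftmost derivation from $X$ alternates between staying in $X$'s SCC $\mathcal{S}(X)$ (handled by the intra-SCC lemma above, contributing at most $n_{\mathcal{S}(X)} + 2$) and descending, via a production $X' \to YZ$ with $Y$ and/or $Z$ in a child SCC, into lower SCCs. Each such descent triggers the inductive hypothesis on the child variable, which recursively bounds the stack contribution of that subtree. Summing the SCC-local bounds along any maximal root-to-leaf branch of the derivation telescopes to $\sum_\mathcal{S} (n_\mathcal{S} + 2) = n + 2k$, provided each SCC appears at most once in the decomposition, which follows from the acyclicity of the SCC DAG: once the leftmost variable leaves $\mathcal{S}(X)$ for a child SCC, it cannot return there.

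The main obstacle is the intra-SCC pumping lemma, and specifically attaining the sharp constant $+2$. Producing a genuine supersequence (not merely a rearranged derivation of some possibly incomparable word) requires choosing the pumping site and cycle content so that the splice only inserts terminals and never deletes any; the CNF assumption is essential here, since every production either emits a single terminal or splits into exactly two variables, letting the pumped cycle be threaded through matching positions in the parse tree. The second subtlety is the telescoping argument: one must verify that the SCCs present in the bounded stack configuration truly partition cleanly (each SCC occupying a single contiguous block with at most $n_\mathcal{S} + 2$ height at any moment) so the contributions add rather than multiply, which again rests on the one-way nature of the SCC DAG.
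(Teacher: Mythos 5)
Your global architecture --- induction over the SCC DAG with a per-SCC budget of $m+2$ and a telescoping sum $\sum_{\mathcal S}(n_{\mathcal S}+2)=n+2k$ --- is essentially the paper's inductive step (the paper peels off a bottom SCC $\mathcal Y$, covers the $G'$-part with index $n-|\mathcal Y|+2(k-1)$ and each $G_Y$-part with index $|\mathcal Y|+2$). The gap is in your intra-SCC pumping lemma, which is exactly where the difficulty of the theorem lives. Your surgery takes two incomparable stack occurrences of the same variable $Y$ in a sentential form $u\,\beta\,Y\,\gamma\,Y\,\delta$ and merges them. Whatever you do, the yield $y_1$ of the discarded occurrence gets re-created inside the subtree of the surviving occurrence, i.e., on the other side of the yield $g$ of $\gamma$: the original word reads $\cdots y_1\, g\, y_2\cdots$ while the new one reads $\cdots g' \cdots y_1' \cdots y_2'\cdots$ (or symmetrically). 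Since $G$ is in CNF, $y_1$ and $g$ are nonempty, so the new word is in general not a supersequence of the old one under $\preceq$ --- it dominates it only in the commutative (Parikh) sense. You flag ``the splice only inserts terminals and never deletes any'' as the obstacle, but the real failure mode is \emph{reordering}, not deletion, and no choice of pumping site repairs it: placing the merged $Y$ at the upper slot would force its subtree to also cover $g$, which $Y$ need not be able to generate. A second, independent problem is that the spliced cycle $Y\Rightarrow^*\alpha Y\beta$ contributes its own stack content, so it is not clear that repeated pigeonhole-plus-splice steps decrease any well-founded measure and terminate at index $m+2$.

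The paper sidesteps both issues: it imports the fact that $L^{(n+1)}(G)$ and $L(G)$ have the same commutative image (Lemma~\ref{lem:cover}, from \cite{egkl11-ipl}) --- whose proof does perform reordering surgeries of the kind you describe, but claims only Parikh equivalence --- and then restores the subword order by an induction on $|w|$ that peels one letter off the left end at a time; the Parikh result is invoked only to witness a single designated letter, for which order is vacuous, and concatenating the resulting one-letter covers preserves $\preceq$. To keep your self-contained route you would need either to fall back on this letter-peeling scheme for the single-SCC case anyway, or to exhibit a genuinely order-preserving surgery, which pigeonhole plus cycle-splicing does not provide.
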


The proof of Theorem~\ref{thm:bamboo} uses the following technical lemma which
follows from the fact that the commutative images of $L(G)$ and $L^{(n+1)}(G)$
coincide \cite{egkl11-ipl}.

\begin{lemma}
\label{lem:cover}
Let $G=(\mathcal{X},\Sigma,\prod,X_0)$ be a CFG in CNF with $n$ variables.
For every $a \in \Sigma$, if $X_0 \Rightarrow^* w\, a\, v$ for some $w, v \in \Sigma^*$, then 
$X_0 \By{n+1}^* w' \, a \, v'$ for some $w', v' \in \Sigma^*$.
\end{lemma}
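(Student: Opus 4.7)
The plan is to derive Lemma~\ref{lem:cover} as a near-immediate corollary of the Parikh-image equality between $L(G)$ and $L^{(n+1)}(G)$ cited from the egkl11-ipl reference. First I would observe that the hypothesis $X_0 \Rightarrow^* w\, a\, v$ with $w, v \in \Sigma^*$ is equivalent to the statement that $L(G)$ contains some word whose Parikh image (the vector of letter-multiplicities) assigns count at least one to the terminal $a$. Symmetrically, the conclusion asks only for a word in $L^{(n+1)}(G)$ that contains at least one occurrence of $a$; neither the surrounding context $w'$ nor $v'$ is constrained further.

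Second, I would invoke the cited result: for any CFG $G$ in Chomsky Normal Form with $n$ variables, the commutative images of $L(G)$ and of $L^{(n+1)}(G)$ coincide. Applied to $u = w\, a\, v \in L(G)$, whose Parikh image satisfies $|u|_a \geq 1$, this yields a word $u' \in L^{(n+1)}(G)$ with $|u'|_a = |u|_a \geq 1$. Factoring $u'$ around (say) its first occurrence of $a$ gives $u' = w'\, a\, v'$ with $w', v' \in \Sigma^*$, and by definition of $L^{(n+1)}(G)$ there is an $(n+1)$-index derivation $X_0 \By{n+1}^* w'\, a\, v'$, which is exactly the desired conclusion.

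All technical weight is carried by the Parikh-image theorem, so there is essentially no remaining obstacle. The main difficulty, if one wanted a self-contained proof rather than a citation, would be re-deriving that Parikh equivalence; the usual argument reorganises a derivation tree via a pigeonhole-style contraction on the $n$ variables so that, at every step, no more than $n+1$ occurrences of variables are simultaneously present in the sentential form while the multiset of terminal yields is preserved. Since Lemma~\ref{lem:cover} only needs preservation of a single terminal's presence (a much weaker property than equality of full Parikh images), invoking the stronger theorem as a black box and splitting the resulting word around an $a$ suffices.
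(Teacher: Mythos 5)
Your proof is correct and follows exactly the route the paper takes: the paper also derives this lemma directly from the cited fact that the commutative (Parikh) images of $L(G)$ and $L^{(n+1)}(G)$ coincide, and your factoring of the resulting word around an occurrence of $a$ is the only remaining step. No differences worth noting.
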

\begin{proof}[of Theorem~\ref{thm:bamboo}]
Let $w \in L(G)$. 
If $w=\varepsilon$, then since $G$ is in CNF, we have $X_0\Rightarrow\varepsilon$.
So, $w$ is also in $L^{(n+2k)}(G)$.
Hence, assume $w\neq\varepsilon$.
We prove by induction on $k$ that $w$ is a subword of some $w' \in L^{(n+2k)}(G)$.

{\em Basis.} $k=1$. 
We proceed by induction on $|w|$. Let $w=a$ for some $a \in \Sigma$.
We conclude from \(a\in L(G)\) and \(G\) is in CNF that \(X_0\Rightarrow a\), hence that \(X_0 \Bystar{1} a\) and finally that \(a\in L^{(n+2k)}(G)\) and we are done.
If $w=av$ for some $v \neq \varepsilon$ 
then 
$X_0 \Rightarrow X_1X_2 \Rightarrow^* a\,v$ for some $X_1, X_2 \in \mathcal{X}$. 
By Lemma~\ref{lem:cover},
$X_1 \Bystar{n+1} v_1$ for some $v_1 \succeq a$, and by induction hypothesis 
$X_2 \Bystar{n+2} v_2$ for some $v_2 \succeq v$. So we get
$X_0 \By{2} X_1X_2 \Bystar{n+2} v_1X_2 \Bystar{n+2} v_1 v_2$
and taking $w' = v_1v_2$ we have $X_0 \Bystar{n+2} w' \succeq w$.

{\it Step.} $k > 1$. Let $\mathcal{Y} \subset \mathcal{X}$ be the set of 
variables of a bottom strongly connected component of $G$, and let $\prod_\mathcal{Y} \subset \prod$ be the productions of $G$ with a variable of $\mathcal{Y}$ on the left side. For every $Y \in \mathcal{Y}$, let $G_Y = (\mathcal{Y},\Sigma,\prod_\mathcal{Y}, Y)$; further, let $G'=(\mathcal{X} \setminus \mathcal{Y},\Sigma \cup \mathcal{Y},\prod \setminus \prod_\mathcal{Y},X_0)$. Since $w \in L(G)$, there exist derivations
$X_0 \Rightarrow^* w_1 Y_1 w_2 \ldots w_r Y_r w_{r+1}$ in $L(G')$ and $Y_i \Rightarrow^* v_i$ 
in $L(G_{Y_i})$ for every $i=1, \ldots, r$ such that $w_1v_1\ldots w_rv_rw_{r+1} = w$. 
Since $G'$ has $(k-1)$  SCCs, by induction hypothesis there is
$X_0 \Bystar{i_1} w_1' Y'_1 w_2' \ldots w_t' Y'_t w_{t+1}'$ in $L(G')$, where $i_1= n - |\mathcal{Y}| + 2(k - 1)$
and such that \(w_1' Y'_1 w_2' \ldots w_t' Y'_t w_{t+1}'\succeq w_1 Y_1 w_2 \ldots w_r Y_r w_{r+1}\).
In particular we have \(Y'_1 \dots Y'_t \succeq Y_1 \dots Y_r\) which implies that there
exists a monotonic injection \(h \colon \{1,\dots,r\} \rightarrow \{1,\dots,t\}\)
such that \(Y'_{h(i)}=Y_{i}\) for all \(i\in\set{1,\dots,r}\).
Since every $G_Y$ has one strongly connected component, for every $j=1, \ldots, r$ there is $v_j' \succeq v_j$ such that $Y'_{h(j)}=Y_j \By{i_2}^* v_j'$, where $i_2=|\mathcal{Y}|+2$. 
On the other hand, for every \(\ell\) not in the image of \(h\) there also is some word \(v'_{\ell}\) 
such that $Y'_{\ell} \By{i_2}^* v'_{\ell}$, where $i_2=|\mathcal{Y}|+2$. 

So we have 
\[
\begin{array}{lcl}
X_0 & \Bystar{i_1}     &w_1' Y'_1 w_2' Y'_2  \ldots w_t' Y'_t w_{t+1}'\\
    & \Bystar{i_1+i_2} &w_1' v_1' w_2' Y_2 \ldots w_t' Y_t w_{t+1}' \\
    & \Bystar{i_1+i_2} &w_1' v_1' w_2' v_2'\ldots w_t' Y_t w_{t+1}' \\
    & 		       \cdots \\
    & \Bystar{i_1+i_2} &w_1' v_1' w_2' v_2'\ldots w_t' v_t' w_{t+1}'
\end{array}
\]
Let $w'= w_1' v_1' \ldots w_t' v_t' w_{t+1}'$. Since $i_1+i_2= n+2k$, we get
$X_0 \Bystar{n+2k} w' \succeq w$, and we are done. 
\qed
\end{proof}

\smallskip %
\noindent %
{\it Checking Emptiness of $k$-index languages.} %
Finally, we show that \(L^{(k)}(G)\neq\emptyset\) is decidable in
\(\mathrm{NSPACE}(k \log(|G|))\).
In contrast, non-emptiness checking for context-free languages is
P-complete as shown by Jones \cite{Jonespcomplete}.

\begin{theorem}{}\label{thm:emptiness}
  Given a CFG $G$ in CNF and \(k\geq 1\), it is in \(\mathrm{NSPACE}(k \log(|G|))\) to decide whether $L^{(k)}(G) \neq \emptyset$.
\end{theorem}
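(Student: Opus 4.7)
The plan is to design a nondeterministic algorithm that simulates a $k$-index derivation \emph{in place}, storing only the sequence of variables appearing in the current sentential form and ignoring the terminals, which are irrelevant for the emptiness question. Since a $k$-index derivation visits only sentential forms containing at most $k$ occurrences of variables, and each variable can be encoded in $\lceil \log |\mathcal{X}| \rceil \leq \lceil \log |G| \rceil$ bits, storing such a sequence takes $O(k \log |G|)$ space.

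More precisely, I would proceed as follows. First observe that $L^{(k)}(G) \neq \emptyset$ iff there exists a derivation $X_0 \By{k}^{*} w$ with $w \in \Sigma^{*}$. Because $G$ is in CNF, every production is of one of three forms: $X \rightarrow YZ$, $X \rightarrow a$ with $a \in \Sigma$, or (possibly) $X_0 \rightarrow \varepsilon$. Applying $X \rightarrow YZ$ to a sentential form with $j$ variables produces a form with $j+1$ variables; applying $X \rightarrow a$ or $X_0 \rightarrow \varepsilon$ decreases the variable count by one. Hence the variable-projection of a $k$-index derivation is itself a sequence of words over $\mathcal{X}$ of length at most $k$, starting at $X_0$ and ending at the empty word, where each step either (i) replaces some occurrence $X$ by $YZ$ with $X \rightarrow YZ \in \prod$, provided the resulting length is $\leq k$, or (ii) deletes an occurrence $X$ provided $X \rightarrow a \in \prod$ for some $a \in \Sigma$, or (iii) deletes $X_0$ in the initial configuration if $X_0 \rightarrow \varepsilon \in \prod$. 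Conversely, any such sequence of variable-only operations can be lifted to a genuine $k$-index derivation producing some word of $\Sigma^{*}$.

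The algorithm then is: initialize the work tape to $X_0$; while the tape is nonempty, nondeterministically select a position $i$ in the current sequence $\alpha = X_1 \cdots X_j$ and a production with left-hand side $X_i$, apply the corresponding variable-level step (failing if the application would exceed length $k$), and continue; accept when the sequence becomes empty. At every moment the tape holds a word of length at most $k$ over $\mathcal{X}$, using $O(k \log |G|)$ cells, and each step can be carried out in $O(\log |G|)$ additional workspace to scan the production list of $G$ (which is given as read-only input).

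The correctness argument is routine in both directions by the observation above, so the only subtle point is the space accounting, which is entirely controlled by the $k$-index bound on the sentential form; there is no hidden blowup, since terminals need not be stored. This yields $L^{(k)}(G) \neq \emptyset \in \mathrm{NSPACE}(k \log |G|)$, as claimed.
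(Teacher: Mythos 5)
Your proof is correct, but it takes a genuinely different route from the paper. You simulate a $k$-index derivation \emph{in place}, maintaining the variable projection of the current sentential form (a word of length at most $k$ over $\mathcal{X}$) and nondeterministically applying productions until it vanishes; the space bound follows immediately from the definition of $k$-index derivations, since the form never holds more than $k$ variables, each encodable in $\log|G|$ bits. The paper instead uses a recursive procedure $\mathit{query}(\ell,X)$ that guesses a derivation top-down: on a production $X\rightarrow BC$ it exploits the decomposition property of bounded-index derivations (one of the two subtrees admits an $(\ell-1)$-index derivation) to recurse with index $\ell-1$ on one child and \emph{tail-call} with index $\ell$ on the other, so the recursion depth --- and hence the number of stored stack frames, each holding one variable --- is bounded by $k$. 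Both arguments charge the same resource ($k$ variables times $\log|G|$ bits), but yours is more elementary in that it needs no decomposition lemma and no tail-call elimination, only the observation that terminals can be discarded and that the lifted/projected correspondence between variable-level sequences and genuine derivations is faithful; the paper's recursive formulation, on the other hand, dovetails with the decomposition properties of $\By{k}$ it has already established and reuses them directly. Two minor points you leave implicit, both standard: the nondeterministic machine must be forced to halt (e.g.\ by a configuration counter, which also fits in $O(k\log|G|)$ bits), and if $X_0\rightarrow\varepsilon\in\prod$ then $\varepsilon\in L^{(k)}(G)$ is detected immediately at the initial configuration, so restricting that production to the start causes no loss of completeness.
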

\begin{proof}
We give a non-deterministic space algorithm. 
The algorithm, called $\mathit{query}$, takes two parameters, 
a variable $X\in\mathcal{X}$ and a number $\ell\geq 1$,
and guesses an $\ell$-index derivation of some word starting from $X$.
To do so, the algorithm guesses a production $(X,w)\in \prod$ with head \(X\).
If $X\rightarrow \sigma$ is chosen, for
$\sigma\in\Sigma\cup\set{\varepsilon}$, it returns true.
If $X\rightarrow BC$ is chosen, the algorithm non-deterministically
looks (using a recursive call)
\begin{inparaenum}[(i)]
\item\label{item:i} for an $(\ell-1)$-index derivation from $B$ and an $\ell$-index derivation from $C$, or
\item\label{item:ii} for an $(\ell-1)$-index derivation from $C$ and an $\ell$-index derivation from $B$.
\end{inparaenum}
When \(\ell=0\) or a recursive call returns false, then \(\mathit{query}\) returns false. 

We show the following invariant: \(\mathit{query}(\ell, X)\) has an
execution returning true if{}f \( X\Bystar{\ell} w\) for some \(w\in
\Sigma^*\). It follows that \(\mathit{query}(k,X_0)\) returns true
if{}f \(L^{(k)}(G)\neq\emptyset\).
The right-to-left direction is proved on the number \(m\) of steps in a bounded-index derivation.
If \(m=1\) then we have \( X\By{\ell} w\) with \(\ell=1\) and
\(\mathit{query}(\ell,X)\) returns true by picking \((X,w)\in \prod\). If \(m>1\) then the sequence of steps is
as follows \( X\By{\ell} BC \Bystar{\ell}
w\) where \(\ell\geq 2\). From there either \(B \Bystar{\ell-1} w_1
\), \(C \Bystar{\ell} w_2\), or \(C \Bystar{\ell-1} w_2\) and \(B
\Bystar{\ell} w_1 \) where \(w=w_1w_2\) holds.
Let us assume the latter holds (the other case is treated similarly).
Then we have \(C \Bystar{\ell-1} w_2\) and
\(B \Bystar{\ell} w_1\) and both sequence have no more
than \(m-1\) steps. Therefore the induction hypothesis shows that
\(\mathit{query}(\ell-1,C)\) and \(\mathit{query}(\ell,B)\) return
true, and so does \(\mathit{query}(\ell,X)\) by picking \((X,BC)\in \prod\).

The left-to-right direction is proved by induction on the number \(m\)
of times productions are picked in an execution of $\mathit{query}$ that returns true.
If \(m=1\) then \(\ell\geq 1\) and a production \((X,\sigma)\in\prod\)
where \(\sigma\in \Sigma\cup\set{\varepsilon}\) must have
been picked, hence the derivation \(X \By{\ell} \sigma\).
If \(m>1\), \(\mathit{query}\) recursively called itself after picking
\( (X,BC)\in \prod \).
Let us assume case (\ref{item:i}) was executed ((\ref{item:ii}) is treated similarly).
Following the assumption both calls \(\mathit{query}(\ell-1,B)\) and
\(\mathit{query}(\ell,C)\) return true (hence \(\ell\geq 2\)) and are such that 
productions are picked at most \(m-1\) times. 
Next, the induction hypothesis shows that \(B \Bystar{\ell-1} w_1\) and
\(C \By{\ell} w_2\) for some \(w_1\) and \(w_2\). Finally, \(X\rightarrow BC \) of \(\prod\) and \(\ell\geq 2\) shows that
\(X \By{\ell} BC \Bystar{\ell} w_1
C \Bystar{\ell} w_1 w_2\) and we are done.

It remains to show that \(\mathit{query}(k,X_0)\) runs in \(\mathrm{NSPACE}(k \log(|G|))\).  
Observe that for each non-deterministic choice (\ref{item:i}) or (\ref{item:ii}), there is one
recursive call $\mathit{query}(B,\ell-1)$ or $\mathit{query}(C,\ell-1)$.
The other call (e.g., to $\mathit{query}(C,\ell)$ in case (\ref{item:i})) 
is tail-recursive and can be replaced by a loop.
Since the index that is passed to that recursive
call decreases by \(1\) and \(\mathit{query}\) returns false when the
index is \(0\), we find that along every execution at most \(k\) stack frames are needed and each frame
keeps track of a grammar variable which can be encoded with \(\log(|G|)\) bits.
Hence we find that \(L^{(k)}(G)\neq \emptyset\) can be decided 
in \(\mathrm{NSPACE}(k \log(|G|))\).
\qed
\end{proof}
\begin{comment} 
\noindent
\begin{algorithm}
\NoCaptionOfAlgo
	\Begin{
	\lIf{\(\ell < 1\)}{\Return false}\;
	Let \(p\in\prod\) such that $\mathit{head}(p)=X$\;\nllabel{ln:pick}
\Switch{p}{
	\uCase{\(X\rightarrow \sigma\)}{\nllabel{ln:case0} 
		skip;
	} 
	\Case{\(X \rightarrow BC\)}{\nllabel{ln:case1}
		\eIf( \tcc*[f]{non-det.\ choice}){\(\ast\)}{
		\(\mathit{query}(\ell-1,B)\)\tcc*{case ({\it i})}
		\(\mathit{query}(\ell,C)\)\;
		}( \tcc*[f]{case ({\it ii})}){
		\(\mathit{query}(\ell-1,C)\)\;
		\(\mathit{query}(\ell,B)\)\;
		}
	}
	}
	\Return true\;\nllabel{ln:return}
}
\caption{\(\mathit{query}(\ell,X)\)\label{alg:nspace}}
\end{algorithm}%
\end{comment} 
\vspace{-.1cm}


\end{document}